\newif\iffull\fulltrue
\newif\ifacm\acmfalse
\newif\iftwocol\twocolfalse
\keywords{model checking, probabilistic programs, higher-order programs, termination probabilities}
\newenvironment{asparaenum}{\begin{enumerate}}{\end{enumerate}}
\theoremstyle{plain}
\newtheorem{definition}[thm]{Definition}
\newtheorem{ex}[thm]{Example}
\newenvironment{example}{\begin{ex}\rm}{\end{ex}}
\newtheorem{theorem}[thm]{Theorem}
\newtheorem{lemma}[thm]{Lemma}
\newtheorem{proposition}[thm]{Proposition}
\newenvironment{proofn}[1]{\begin{proof}[#1]}{\end{proof}}
\newtheorem{remark}[thm]{Remark}
\definecolor{purple}{RGB}{0,40,40}
\newcommand\emptyTE{\emptyset}
\newcommand\pN{\p_{\NONTERMS}}
\newcommand\comment[1]{}
\newcommand\hquad{\hspace*{0.8em}}
\newcommand\PHORSSET{\mathcal{P}}
\newcommand\fgroup{\mathtt{fgrp}}
\newcommand\evexp[1]{\mathtt{event}\ #1;}
\newcommand\SETofPHORS{\Psi}
\newcommand\listgen{\textit{Listgen}}
\newcommand\listlistgen{\textit{ListgenList}}
\newcommand\listboolgen{\textit{ListgenBool}}
\newcommand\listgene{\textit{ListgenE}}
\newcommand\listgeno{\textit{ListgenO}}
\newcommand\treegen{\textit{Treegen}}
\newcommand\boolgen{\textit{Boolgen}}
\newcommand\deter{\textit{Determinize}}
\newcommand\one{\textit{One}}
\newcommand\zero{\textit{Zero}}
\newcommand\forallp{\textit{ForallP}}
\newcommand\Avg{\textit{Avg}}
\newcommand\TRUE{\mathtt{true}}
\newcommand\FALSE{\mathtt{false}}
\newcommand{\commentout}[1]{}
\newcommand\sem[1]{\llbracket{#1}\rrbracket}
\newcommand\seme[2]{\llbracket{#1}\rrbracket_{#2}}
\newcommand\EQref[1]{\E^{\mathtt{ref}}_{#1}}
\newcommand\Sub[2]{\{#2/#1\}}
\newcommand\Subs[1]{\{#1\}}
\newcommand\tr{\leadsto}
\newcommand\To{\Rightarrow}
\newcommand\E{\mathcal{E}}
\newcommand\seql[1]{|\seq{#1}|}
\newcommand\EQtwo[1]{\mathcal{E}_{2,\GRAM}}
\newcommand\trT[1]{{#1}^\dagger}
\newcommand\trTp[1]{{#1}^{\dagger'}}
\newcommand\trTn[2]{{#1}^{\dagger+{#2}}}
\newcommand\arity{\mathtt{ar}}
\newcommand\Hole{[\,]}
\newcommand\F{\mathcal{F}}
\newcommand\GRAMAST[1]{\GRAM^{P,Q}_{#1}}
\newcommand\R{\mathbf{R}}
\newcommand\LFP{\mathbf{lfp}}
\newcommand\abs[1]{\widehat{#1}}
\newcommand\Leq{\sqsubseteq} 
\newcommand\Lub{\bigsqcup} 
\newcommand\Dir{d} 
\newcommand\lambdaY{\(\lambda Y\)}
\newcommand\pHORS{PHORS}
\newcommand\pHORSs{\pHORS{}}
\newcommand\ceil[1]{\lceil{#1}\rceil}
\newcommand\floor[1]{\lfloor{#1}\rfloor}
\newcommand\Nat{\mathbf{Nat}}
\newcommand\NONTERMS{\mathcal{N}}
\newcommand\NTE{\NONTERMS}
\newcommand\GRAM{\mathcal{G}}
\newcommand\RULES{\mathcal{R}}
\newcommand\redp[2]{\xrightarrow{#1}_{#2}}
\newcommand\newredp[1]{\xrightarrow{#1}_{\texttt{es}}}
\newcommand\newredpg[2]{\xrightarrow{#1}_{\texttt{es},#2}}
\newcommand\newredsp[1]{\xRightarrow{#1}_{\mathtt{es}}}
\newcommand\newredspg[2]{\stackrel{#1}{\Red}_{\mathtt{es},#2}}
\newcommand{\xRightarrow}[2][]{\ext@arrow 0359\Rightarrowfill@{#1}{#2}}
\newcommand\redsp[2]{\xRightarrow{#1}_{#2}}
\newcommand\Red{\Longrightarrow}
\newcommand\sty{\kappa}
\newcommand\STE{\mathcal{K}}
\newcommand\T{\mathtt{o}}
\newcommand\Reals{\mathbf{R}}
\newcommand\realp{\mathbf{R}_\infty}
\newcommand\realt{\mathtt{R}}
\newcommand{\qv}[2]{#1^\#_{#2}}
\newcommand{\metalambda}{%
  \mathop{%
    \rlap{$\lambda$}%
    \mkern2mu
    \raisebox{.275ex}{$\lambda$}%
  }%
}
\newcommand{\envsubst}[3]{#1[#2\leftarrow #3]}
\newcommand\Te{\mathtt{e}}
\newcommand\p{\vdash}
\newcommand\COL{\mathbin{:}}
\newcommand\dom{\mathit{dom}}
\newcommand\set[1]{\{#1\}}
\newcommand\C[1]{\,\oplus_{#1}\,}
\newcommand\Prob{\mathcal{P}}
\newcommand\ProbES{\mathcal{P}_{\mathtt{es}}}
\newcommand\cs{\pi}
\newcommand\order{\mathit{order}}
\newcommand\TtoP[1]{{#1}^{\#}}
\newcommand\seq[1]{\widetilde{#1}}
\newcommand{\midd}{\; \; \mbox{\Large{$\mid$}}\;\;} 
\newif\ifdraft\draftfalse
\definecolor{DarkGreen}{RGB}{0,100,0}
\newcommand\nk[1]{\textcolor{red}{[#1 -nk]}}
\newcommand\udl[1]{\textcolor{green}{[#1 -udl]}}
\newcommand\nk[1]{}
\newcommand\udl[1]{}
\newcommand\cg[1]{\textcolor{blue}{[#1 -cg]}}
\newcommand\cg[1]{}
\newenvironment{varitemize}
{
\begin{list}{\labelitemi}
{\setlength{\itemsep}{0pt}
 \setlength{\topsep}{0pt}
 \setlength{\parsep}{0pt}
 \setlength{\partopsep}{0pt}
 \setlength{\leftmargin}{15pt}
 \setlength{\rightmargin}{0pt}
 \setlength{\itemindent}{0pt}
 \setlength{\labelsep}{5pt}
 \setlength{\labelwidth}{10pt}
}}
{
 \end{list} 
}
\newcounter{number}
\title[On the Termination of Probabilistic Higher-Order Programs]{On the Termination Problem\\ for Probabilistic Higher-Order Recursive Programs}
\author[N.~Kobayashi]{Naoki Kobayashi\rsuper{a}}
\address{\lsuper{a}The University of Tokyo, Japan}
\author[U. Dal Lago]{Ugo Dal Lago\rsuper{b}}
\address{\lsuper{b}University of Bologna, Italy}
\author[C. Grellois]{Charles Grellois\rsuper{c}}
\address{\lsuper{c}Aix Marseille Univ, Universit\'e de Toulon, CNRS, LIS, Marseille, France}
\begin{document}

\begin{abstract}
  In the last two decades, there has been much progress on model
  checking of both \emph{probabilistic} systems and \emph{higher-order}
  programs. In spite of the emergence of higher-order probabilistic
  programming languages, not much has been done to combine those two
  approaches. 
In this paper, we initiate a study on the
  probabilistic higher-order model checking problem, by giving some
  first theoretical and experimental results.  
As a first step towards our goal, we introduce \pHORS{}, a probabilistic
extension of higher-order recursion schemes (HORS), as a model of
probabilistic higher-order programs.
The model of \pHORS{} 
may alternatively be viewed as a higher-order
  extension of recursive Markov chains.  We then investigate the
  probabilistic termination problem --- or, equivalently, the
  probabilistic reachability problem. We prove that almost sure
  termination of order-2 \pHORS{} is undecidable.  We also provide a
  fixpoint characterization of the termination probability of
  \pHORS{}, and develop a sound (although possibly incomplete) procedure
  for approximately computing the termination probability.  
We have
  implemented the procedure for order-2 \pHORS{}, and confirmed that
  the procedure works well through preliminary experiments.
\end{abstract}

\maketitle


\section{Introduction}
\label{sec:intro}
\label{SEC:INTRO}
Computer science has interacted with probability theory in many
fruitful ways, since the very early
days~\cite{shannon-schapiro:computability-proba-machines}.
Probability theory enables \emph{state abstraction}, reducing in this
way the state space's cardinality.  It has also led to a
new \emph{model of computation}, used for instance in randomized
computation~\cite{MotwaniRandomized} or in
cryptography~\cite{GoldwasserMicali}.  The trend of a rise of
probability theory's importance in computer science has been followed
by the programming language community, up to the point that
probabilistic programming is nowadays a very active research
area. Probabilistic choice can be modeled in various ways in
programming, and fair binary probabilistic choice is for instance
perfectly sufficient to obtain universality if the underlying
programming language is universal
itself~\cite{Santos69,dallagozorzi2012}. This has been the path
followed in probabilistic
$\lambda$-calculi~\cite{SahebDjahromi,JonesPlotkin1989,danosharmer,dallagozorzi2012,ehrhardtassonpagani2014,DLSA14,HKSY17}.

In the present paper, we are interested in the analysis of probabilistic,
higher-order recursive programs. Model checking of probabilistic
finite state systems has been a very active research field
(see~\cite{PrinciplesMC,HandbookMC} for a survey).  Over the last two
decades, there has also been much interest and progress in model
checking of probabilistic \emph{recursive}
programs~\cite{Etessami09,DBLP:journals/jacm/EtessamiY15,DBLP:journals/fmsd/BrazdilEKK13,DBLP:journals/jcss/BrazdilBFK14},
which \emph{cannot} be modeled as finite state systems, and thus
escape the classic model checking framework and algorithms. None of
the proposals in the literature on probabilistic model checking,
however, is capable of handling \emph{higher-order} functions, which
are a natural feature in functional languages. This is in sharp
contrast with what happens for \emph{non-probabilistic} higher-order
programs, for which model checking techniques can be fruitfully
employed for proving both reachability and safety properties, as shown
in the extensive literature on the subject
(e.g.~\cite{Ong06LICS,Hague08LICS,Kobayashi13JACM,KO09LICS,KSU11PLDI,DBLP:conf/csl/GrelloisM15,DBLP:conf/mfcs/GrelloisM15,DBLP:conf/csl/TsukadaO14,Salvati11ICALP}).
There have been some studies on the termination of probabilistic
higher-order programs~\cite{DBLP:conf/esop/LagoG17},
but to our knowledge, they have not provided a procedure for
\emph{precisely} computing the termination probability,
nor  discussed whether it is possible at all:
see Section~\ref{sec:related} for more details.
Summing up, little has been known about the decidability and
 complexity of model checking of {probabilistic higher-order}
 programs, and even less about the existence of practical procedures
 for approximately solving model checking problems.

One may think that probabilistic \emph{and} higher-order computation
is rather an exotic research topic, but it is important for precisely
modeling and verifying any higher-order functional programs that
interact with a probabilistic environment.  As a simple example,
consider the following (non-higher-order) \textsf{OCaml}-like program,
which uses a primitive
\texttt{flip} for generating \texttt{true} or \texttt{false} with probability \(\frac{1}{2}\).
\begin{center}
\small
\begin{lstlisting}[basicstyle=\ttfamily]
 let rec f() = if flip() then () else f() in f()
\end{lstlisting}
\end{center}
The program almost surely terminates (i.e., terminates with
probability 1), but if we ignore the probabilistic aspects and
model \texttt{flip()} as a \emph{non-deterministic} (rather
than \emph{probabilistic}) primitive, then we would conclude that the
program can \emph{may} diverge. The following program makes use of an
interesting combination of probabilistic choice and higher-order
functions:
\begin{center}
\small
\begin{lstlisting}[basicstyle=\ttfamily]
 let boolgen() = flip()
 let rec listgen f () = 
   if flip() then [] else f()::listgen f ()
 in listgen (listgen boolgen) ()
\end{lstlisting}
\end{center}
The function \texttt{listgen} above takes a generator \texttt{f} of
elements as an argument, and creates a list of elements, each of them
obtained by calling
\(f\). Thus, the whole program generates a list of lists of
Booleans. The length of such a list of lists is randomized, and
distributed according to the geometric distribution. We may then wish
to ask, for example, (i) whether it almost surely terminates, and (ii)
what is the probability that a list of even length is
generated. Generating random data structures like the
one produced by \texttt{listgen} 
is not 
an artificial task, being central to, e.g., random test
generation~\cite{DBLP:conf/icst/PouldingF17a,DBLP:journals/corr/abs-1808-01520}.

As a model of probabilistic higher-order programs, we first
introduce \pHORS{}, a probabilistic extension of higher-order
recursion schemes (HORS)~\cite{Knapik02FOSSACS,Ong06LICS}.  Our model
of \pHORS{} is expressive enough to accurately model probabilistic
higher-order functions, but the underlying non-probabilistic language
(i.e., HORS, obtained by removing probabilistic choice) is \emph{not}
Turing-complete; thus, we can hope for the existence of algorithmic
solutions to some of the verification problems. As an example, we can
decide indeed whether the termination probability of \pHORS{}
is \(0\), by reduction to a model checking problem for
non-probabilistic HORS.

Through the well-known correspondence between HORS and (collapsible) higher-order
pushdown automata~\cite{Knapik02FOSSACS,Hague08LICS},
\pHORS{} can be considered a higher-order extension of probabilistic pushdown 
systems~\cite{DBLP:journals/fmsd/BrazdilEKK13,DBLP:journals/jcss/BrazdilBFK14}
and of recursive Markov chains~\cite{DBLP:conf/qest/YannakakisE05},
the computation models used in previous work on model checking of
probabilistic recursive programs.  We can also view \pHORS{} as an
extension of
the \lambdaY{}-calculus~\cite{DBLP:journals/apal/Statman04} with
probabilities, just like HORS can be viewed as an alternative
presentation of the \lambdaY{}-calculus.  The correspondence between
HORS and the \lambdaY{}-calculus has been useful for transferring
techniques for typed \(\lambda\)-calculi (most notably, game
semantics~\cite{Ong06LICS}, intersection
types~\cite{Kobayashi09POPL,KO09LICS} and Krivine
machines~\cite{Salvati11ICALP}) to HORS; thus, we expect similar
benefits in using \pHORS{} (rather than probabilistic higher-order
pushdown automata) as models of probabilistic higher-order programs.

As a first step towards understanding the nature of the
model checking problem for probabilistic higher-order programs,
the present paper studies the problem of computing the termination (or equivalently, reachability)
probabilities of \pHORS{}. Note that, as in a non-probabilistic setting,
one can easily reduce a safety property verification problem to a 
may-termination problem (i.e. the problem of checking whether a program may terminate),
by encoding safety violation as termination. We can also verify certain
liveness properties, by encoding a good event as a termination and checking that
the termination probability is \(1\).
As we will see in Section~\ref{sec:problem},
the two questions (i) and (ii) mentioned earlier on the \texttt{listgen} program 
can also be reduced to
the problem of computing the termination probability of a \pHORS{}.
Note also that computing the termination (or equivalently, reachability)
probability has been a key to solving more general model checking problems
(such as LTL/CTL model checking) for 
recursive programs~\cite{DBLP:conf/qest/YannakakisE05,DBLP:journals/jcss/BrazdilBFK14}.

As the first result on the problem of computing termination probabilities, we prove that 
the almost sure termination problem, i.e., whether a given \pHORS{} terminates with
probability \(1\), is undecidable at order-2 or higher. This contrasts with
the case of recursive Markov chains, for which the almost sure termination problem can
be decided in PSPACE~\cite{Etessami09}. The proof of undecidability is based on
a reduction from the undecidability of
Hilbert's tenth problem (i.e. unsolvability of Diophantine equations)~\cite{Diophantine}.
The undecidability result also implies that it is not possible to
compute the \emph{exact} termination probability. More precisely, for any rational
number \(r\in(0,1]\),
the set \(\set{\GRAM \mid \Pr(\GRAM)\geq r}\) 
(where \(\Pr(\GRAM)\) denotes the termination probability of \(\GRAM\))
is not recursively enumerable (in other words, the set is 
\(\Pi^0_1\)-hard in the arithmetical hierarchy).
Note, however, that this negative result does not preclude the possibility to
compute the termination probability with arbitrary precision; there may exist
an algorithm that, given a \pHORS{} \(\GRAM\) and \(\epsilon>0\) as inputs, finds \(r\) such that
the termination probability of \(\GRAM\) belongs to \((r,r+\epsilon)\).
The existence of such an approximation algorithm remains open.

As a positive result towards approximately computing the termination probability, we show that 
the termination probability of order-\(n\) \pHORS{} can be characterized by fixpoint equations
on order-(\(n-1\)) functions on real numbers. 
The fixpoint characterization of the termination probability of recursive Markov chains~\cite{Etessami09}
can be viewed as a special case of our result where \(n=1\).
The fixpoint characterization immediately
provides a semi-algorithm for 
 the lower-bound problem: ``Given a \pHORS{} \(\GRAM\) and a rational number \(r\in[0,1]\), does \(\Pr(\GRAM)>r\) hold?''
Recall, however, that \(\set{\GRAM \mid \Pr(\GRAM)\geq r}\) is \emph{not} recursively enumerable, so there is no semi-algorithm for the variation:
``Given a \pHORS{} \(\GRAM\) and a rational number \(r\in[0,1]\), does \(\Pr(\GRAM)\geq r\) hold?''

The remaining question is whether an upper-bound on the termination
probability can be computed with arbitrary precision. We have not
settled this question yet, but propose a procedure for \emph{soundly
estimating} an upper-bound of the termination probability of
order-2 \pHORS{} by using the fixpoint characterization above, \emph{\`a la} FEM (finite element method). 
 We have
implemented the procedure, and conducted preliminary experiments to confirm
that the procedure works fairly well in practice:
combined with the lower-bound computation
based on the fixpoint characterization, the procedure was able to instantly compute the termination
probabilities of (small but) non-trivial examples with precision \(10^{-2}\).
We also briefly discuss how to generalize the procedure to deal with \pHORS{} of arbitrary orders.

The contributions of this article can thus be summarized as follows:
\begin{enumerate}[(i)]
\item
  A formalization of probabilistic higher-order recursion schemes
(\pHORS{}) and their termination probabilities. This is in
Section~\ref{sec:problem}.
\item
  A proof of undecidability of the almost sure termination problem
for \pHORS{} (of order 2 or higher), which can be found in
Section~\ref{sec:undecidability}.
\item
  A fixpoint characterization of the termination probability
of \pHORS{}, which immediately yields the semi-decidability of the
lower-bound problem. This is in Section~\ref{sec:fixpoint}.
\item
  A sound procedure for computing an upper-bound to the termination
probability of order-2 \pHORS{} (which is described in
Section~\ref{sec:upperbound}) accompanied by an implementation and
preliminary experiments with promising results, reported in
Section~\ref{sec:exp}.
\end{enumerate}
We also discuss related work in Section~\ref{sec:related},
and 
conclude the article in Section~\ref{sec:conc}.
A preliminary summary of this article appeared in Proceedings of LICS 2019~\cite{KDG19LICS}.

\section{Probabilistic Higher-Order Recursion Schemes (\pHORS{}) and Termination Probabilities}
\label{sec:problem}
\label{SEC:PROBLEM}
This section introduces \emph{probabilistic higher-order recursion
schemes} (\pHORS{}\footnote{We write \pHORS{} for both singular and plural forms.}), an extension of higher-order recursion
schemes~\cite{Knapik02FOSSACS,Ong06LICS}
in which programs
can at any evaluation step perform a discrete probabilistic choice, then proceeding according to
its outcome. Higher-order recursion schemes are usually treated as
generators of infinite trees, but as we are only interested in the
termination probability, we consider 
only nullary tree
constructors \(\Te\) and \(\Omega\), which represent termination
and divergence respectively.

We first define types and 
applicative terms. The set of \emph{types},
ranged over by \(\sty\), is given by:
\[\sty ::= \T \mid \sty_1\to\sty_2.\]
Intuitively, \(\T\) describes the unit value, and \(\sty_1\to\sty_2\) describes functions from
\(\sty_1\) to \(\sty_2\).
As usual, the \emph{order} of a type \(\sty\) is defined by:
\begin{align*}
\order(\T)&=0\\ 
\order(\sty_1\to\sty_2)&=\max(\order(\sty_1)+1,\order(\sty_2)).
\end{align*}
We often write \(\T^\ell\to\T\) for \(\underbrace{\T\to\cdots\to\T}_\ell\to\T\),
and abbreviate \(\sty_1\to\cdots\to\sty_k\to\sty\) to \(\seq{\sty}\to\sty\).
\noindent
The set of \emph{applicative terms}, ranged over by \(t\), is given by:
\[ t ::= \Te\ \midd \Omega\midd \ x\ \midd\ t_1t_2,\]
where \(\Te\) and \(\Omega\) are (the only) constants of type \(\T\) and \(x\) ranges over a set of variables. Intuitively, \(\Te\) and \(\Omega\) denote termination and
divergence respectively (the latter can be defined
as a derived form, but assuming it as a primitive
is convenient for Section~\ref{sec:fixpoint}).
We consider the following standard simple type system for applicative terms, where
\(\STE\), called a \emph{type environment}, is a map from a finite set of variables to the set of types.
\iftwocol
$$
\infer{\STE\p \Te:\T}{}
\qquad\qquad
\infer{\STE\p \Omega:\T}{}
\qquad\qquad
\infer{\STE\p x\COL\sty}{\STE(x)=\sty}
$$
$$
\infer{\STE\p t_1t_2:\sty}{\STE\p t_1:\sty_2\to\sty\andalso \STE\p t_2:\sty_2}
$$
\else
$$
\infer{\STE\p \Te:\T}{}
\qquad\qquad
\infer{\STE\p \Omega:\T}{}
\qquad\qquad
\infer{\STE\p x\COL\sty}{\STE(x)=\sty}
\qquad\qquad
\infer{\STE\p t_1t_2:\sty}{\STE\p t_1:\sty_2\to\sty\andalso \STE\p t_2:\sty_2}
$$
\fi
\begin{definition}[\pHORS{}]
A \emph{probabilistic higher-order recursion scheme} (\pHORS{}) 
is a triple \(\GRAM=(\NONTERMS,\RULES,S)\),
where:
\begin{asparaenum}
\item \(\NONTERMS\) is a map from a finite set of variables (called \emph{non-terminals}
and typically denoted $F,\,G,\,\ldots$)
to the set of types.
\item \(\RULES\) is a map from \(\dom(\NONTERMS)\) to 
terms of the form \(\lambda x_1.\cdots\lambda x_k.t_L\C{p}t_R\),
where \(p\in[0,1]\) is a rational number, and \(t_L,\,t_R\) are applicative terms.
If \(\NONTERMS(F)=\sty_1\to\cdots\to\sty_k\to\T\), \(\RULES(F)\) must be of the form
\(\lambda x_1.\cdots\lambda x_k.t_L\C{p}t_R\), where
\(\NONTERMS,x_1\COL\sty_1,\ldots,x_k\COL\sty_k \p t_L:\T\) and
\(\NONTERMS,x_1\COL\sty_1,\ldots,x_k\COL\sty_k \p t_R:\T\).
\item \(S\in \dom(\NONTERMS)\), called the \emph{start symbol}, is a distinguished non-terminal that satisfies \(\NONTERMS(S)=\T\).
\end{asparaenum}
The \emph{order} of a \pHORS{} \((\NONTERMS,\RULES,S)\)
is \iftwocol\else \(\max_{F\in\dom(\NONTERMS)} \order(\NONTERMS(F))\), i.e., \fi
the highest order of the types of its non-terminals.
We write \(\PHORSSET_k\) for the set of order-\(k\) \pHORS{}.
\end{definition}

\noindent
When \(\RULES(F)= \lambda x_1.\cdots\lambda x_k.t_L\C{p}t_R\),
we often write $F\,x_1\,\cdots\,x_k = t_L\C{p}t_R$, 
and specify \(\RULES\) as a set of such  equations.
The rule 
$F\,x_1\,\cdots\,x_k = t_L\C{p}t_R$ 
intuitively means that
\(F\,t_1\,\cdots\,t_k\) is reduced to \([t_1/x_1,\ldots,t_k/x_k]t_L\)
and \([t_1/x_1,\ldots,t_k/x_k]t_R\) with probabilities \(p\) and \(1-p\), respectively.
We often write
just \(F\,x_1\,\cdots\,x_k = t_L\)
for \(F\,x_1\,\cdots\,x_k = t_L\C{1}t_R\).

\begin{definition}[Operational Semantics and Termination Probability of \pHORS{}]
Given a \pHORS{} \(\GRAM=(\NONTERMS,\RULES,S)\), 
the rewriting relation \(\redp{\Dir,p}{\GRAM}\) (where \(\Dir\in \set{L,R}\) and \(p\in[0,1]\))
is defined by:
\iftwocol
\infrule{\RULES(F)=\lambda x_1.\cdots\lambda x_k.t_L\C{p}t_R}{F\,t_1\,\cdots\,t_k \redp{L,p}
  {\GRAM} [t_1/x_1,\ldots,t_k/x_k]t_{L}}

\infrule{\RULES(F)=\lambda x_1.\cdots\lambda x_k.t_L\C{p}t_R}
{F\,t_1\,\cdots\,t_k \redp{R,1-p}
  {\GRAM} [t_1/x_1,\ldots,t_k/x_k]t_{R}}
\else
$$
\infer{F\,t_1\,\cdots\,t_k \redp{L,p}
  {\GRAM} [t_1/x_1,\ldots,t_k/x_k]t_{L}}{\RULES(F)=\lambda x_1.\cdots\lambda x_k.t_L\C{p}t_R}
\qquad\qquad
\infer{F\,t_1\,\cdots\,t_k \redp{R,1-p}
  {\GRAM} [t_1/x_1,\ldots,t_k/x_k]t_{R}}{\RULES(F)=\lambda x_1.\cdots\lambda x_k.t_L\C{p}t_R}
$$
\fi
We write \(\redsp{\cs,p}{\GRAM}\) for the relational composition of
\(\redp{\Dir_1,p_1}{\GRAM},\ldots, \redp{\Dir_n,p_n}{\GRAM}\), when \(\cs=\Dir_1\cdots \Dir_n\) and 
\(p = \prod_{i=1}^np_i\).
Note that \(n\) may be \(0\), so that we have \(t_1\redsp{\epsilon,1}{\GRAM}t_2\) iff \(t_1=t_2\).
By definition, for each \(\cs\in\set{L,R}^*\),
there exists at most one \(p\) such that
\(S\redsp{\cs,p}{\GRAM}\ \! \Te\). For an applicative term $t$, we define
\(\Prob(\GRAM,t,\cs)\) by: \\[-2ex]
\[
\Prob(\GRAM,t,\cs)=\left\{\begin{array}{ll}
  p & \mbox{if $t\redsp{\cs,p}{\GRAM}\Te$}\\
  0 & \mbox{if $t\redsp{\cs,p}{\GRAM}\Te$ does not hold for any \(p\)}
\end{array}
\right..
\]
\iftwocol
We write \(\Prob(\GRAM,t)\) for
\(
\displaystyle\sum_{\cs\in\set{L,R}^*}\ \ \Prob(\GRAM,t,\cs)
\).
\noindent
Finally, we set 
\(\Prob(\GRAM) \,=\,\Prob(\GRAM,S)\) and call it the termination probability of \(\GRAM\).
\else
The \emph{partial} and \emph{full} \emph{termination probabilities}, written 
\(\Prob(\GRAM,t,n)\) and \(\Prob(\GRAM,t)\), are defined by:
\[
\Prob(\GRAM,t,n) = \displaystyle\sum_{\cs\in\set{L,R}^{\leq n}}\ \ \Prob(\GRAM,t,\cs)
\qquad \text{and}\qquad
\Prob(\GRAM,t) = \displaystyle\sum_{\cs\in\set{L,R}^*}\ \ \Prob(\GRAM,t,\cs).
\]
\noindent
Finally, we set 
\(\Prob(\GRAM,n) \,=\,\Prob(\GRAM,S,n)\) and \(\Prob(\GRAM) \,=\,\Prob(\GRAM,S)\).
\fi

\end{definition}
We often omit the subscript \(\GRAM\) below and just write
\(\redp{d,p}{}\) and \(\redsp{\cs,p}{}\) for
\(\redp{d,p}{\GRAM}\) and \(\redsp{\cs,p}{\GRAM}\) respectively.
\iftwocol\else
The \emph{termination probability of} $\GRAM$ refers to 
its full termination probability \(\Prob(\GRAM)\).
\fi

\begin{example}
\label{ex:random-walk}
Let \(\GRAM_1\) be the order-1 \pHORS{} \((\NONTERMS_1, \RULES_1, S)\),
where:
\[
\begin{array}{l}
\NONTERMS_1=\set{S\mapsto \T, F\mapsto \T\to\T}\\
\RULES_1 = \set{S\ =\ F\,\Te\C{1}\Omega,\quad
F\,x\ =\ x \C{p} F(F\,x)}.
\end{array}
\]
The start symbol \(S\) can be reduced, for example, as follows.
\[
S \redp{L,1}{} F(\Te) \redp{R,1-p}{} F(F\,\Te )\redp{L,p}{} F\,\Te\redp{L,p}{} \Te.
\]
Thus, we have \(S\redsp{LRLL, p^2(1-p)}\,\! \Te\).
As we will see in Section~\ref{sec:fixpoint}, the termination probability
\(\Prob(\GRAM_1)\) is the least solution for \(r\) of the fixpoint equation:
\( r = p + (1-p)r^2\).
Therefore, \(\Prob(\GRAM_1) = \frac{p}{1-p}\) if \(0\leq p<\frac{1}{2}\)
and \(\Prob(\GRAM_1) = 1\) if \(\frac{1}{2}\leq p\).
The corresponding example of a recursive Markov chain is shown in Figure~\ref{fig:rmc},
using the notational conventions from~\cite{Etessami09}. 
$\GRAM_1$ can be seen as realizing a binary, random walk on the natural
numbers, starting from $1$.
\qed
\end{example}
\begin{figure}
\begin{center}
\includegraphics[scale=1.05]{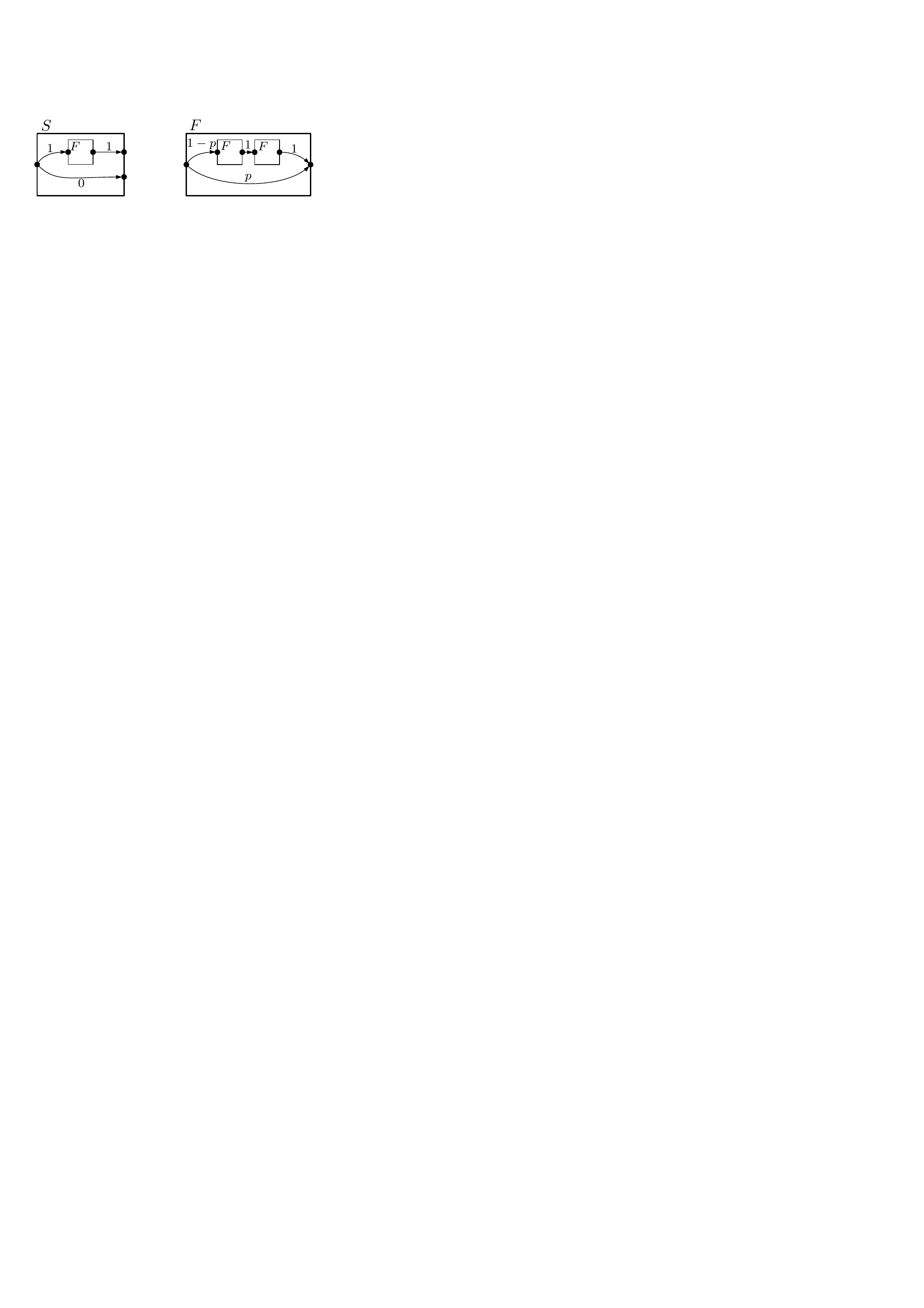}
\end{center}
\caption{A Recursive Markov Chain Modeling $\GRAM_1$.}\label{fig:rmc}
\end{figure}
As the previous example suggests, 
there is a mutual translation between recursive Markov chains
and order-1 \pHORSs{};
see the Appendix~\ref{sec:encoding-rmc} for details.
\begin{example}
\label{ex:order2-phors}
Let \(\GRAM_2\) be the order-2 \pHORS{} \((\NONTERMS_2,
\RULES_2,S)\)
where:
\begin{align*}
\NONTERMS_2=&\;\set{S\mapsto \T, H\mapsto \T\to\T, F\mapsto (\T\to\T)\to\T, \\
    &\hquad D\mapsto(\T\to\T)\to\T\to\T}\\
\RULES_2=&\;\set{S =\ (F\,H) \C{1}\Te,\ 
H\,x = x \C{\frac{1}{2}} \Omega,\\
    &\hquad F\,g = (g\,\Te) \C{\frac{1}{2}} (F(D\,g)),\ 
D\,g\,x = (g\,(g\,x))\C{1}\Omega}.
\end{align*}
The start symbol \(S\) can be reduced, for example, as follows.
\begin{align*}
S &\redp{L,1}{} F\,H \redp{R,\frac{1}{2}}{} F(D\,H)
\redp{L,\frac{1}{2}}{} D\,H\,\Te
\redp{L,1}{} H(H\,\Te)\\
&\redp{L,\frac{1}{2}}{} H\,\Te
\redp{L,\frac{1}{2}}{}\Te.
\end{align*}
Contrary to $\GRAM_1$, it is quite hard to find an RMC which models the behavior
of $\GRAM_2$. In fact, this happens for very good reasons, as we will see in Section~\ref{sec:undecidability}.
\qed
\end{example}
The following result is obvious from the definition of \(\Prob(\GRAM)\).
\begin{theorem}
\label{th:lower-bound}
For any rational number
\(r\in[0,1]\), 
the set \(\set{\GRAM\mid \Prob(\GRAM)>r}\) is recursively enumerable.
\end{theorem}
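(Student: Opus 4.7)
The plan is to exhibit a semi-algorithm that, given $\GRAM$, halts and accepts iff $\Prob(\GRAM)>r$. The key observation is that $\Prob(\GRAM)$ is the supremum of a computable, non-decreasing sequence of rational approximants, so the strict inequality $\Prob(\GRAM)>r$ is witnessed at some finite stage.

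More precisely, I would first note that for each $n\in\NN$ the quantity
\[
\Prob(\GRAM,n) \;=\; \sum_{\cs\in\set{L,R}^{\leq n}} \Prob(\GRAM,S,\cs)
\]
is a finite sum of non-negative rationals and can be computed effectively: given a choice sequence $\cs=\Dir_1\cdots\Dir_m$ with $m\leq n$, one simulates the deterministic rewriting of $S$ by following the directions in $\cs$; if this reaches $\Te$ with an associated probability $p$, one records $p$, otherwise one records $0$. Since the length of $\cs$ bounds the number of rewriting steps, this simulation halts, so each term in the sum is computable, and therefore the partial sum $\Prob(\GRAM,n)$ itself is a computable rational.

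Second, because all summands are non-negative, $(\Prob(\GRAM,n))_{n\in\NN}$ is monotone and converges to $\Prob(\GRAM)=\sum_{\cs\in\set{L,R}^*}\Prob(\GRAM,S,\cs)$. Hence $\Prob(\GRAM)>r$ holds if and only if there exists some $n$ with $\Prob(\GRAM,n)>r$. This gives the semi-algorithm: iterate $n=0,1,2,\ldots$, compute $\Prob(\GRAM,n)$, and halt accepting as soon as $\Prob(\GRAM,n)>r$. This procedure halts iff $\Prob(\GRAM)>r$, witnessing recursive enumerability of $\set{\GRAM\mid\Prob(\GRAM)>r}$.

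I do not anticipate any real obstacle: the only points that deserve care are (i) the effectivity of simulating a single reduction sequence indexed by $\cs$, which follows from the fact that each probabilistic choice $\Dir_i$ uniquely determines the next rewriting step (as already noted in the paper when asserting that at most one $p$ satisfies $S\redsp{\cs,p}{\GRAM}\Te$), and (ii) the use of monotone convergence of a sum of non-negative rationals, which is elementary. The contrast with the non-strict variant $\Prob(\GRAM)\geq r$, where no finite approximation can witness equality, is exactly what will later separate this (r.e.) set from the $\Pi^0_1$-hard one mentioned in the introduction.
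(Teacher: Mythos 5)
Your proposal is correct and is essentially the paper's own argument: the paper's proof notes exactly that $\Prob(\GRAM)>r$ iff $\Prob(\GRAM,n)>r$ for some $n$ and that each partial sum $\Prob(\GRAM,n)$ is computable, which is precisely the semi-algorithm you describe. You merely spell out the effectivity of simulating the reduction along each $\cs$ and the monotone-convergence step, both of which the paper leaves implicit.
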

\begin{proof}
This follows immediately from the facts that \(\Prob(\GRAM)>r\) if and only if
\(\Prob(\GRAM,n)=\sum_{\cs\in\set{L,R}^{\leq n}}\,\Prob(\GRAM,S,\cs)>r\) for some \(n\), and that \(\Prob(\GRAM,n)\) is computable.
\end{proof}
\noindent
In other words, 
whether \(\Prob(\GRAM)>r\) is semi-decidable, i.e., there exists a
procedure that eventually answers ``yes'' whenever \(\Prob(\GRAM)>r\).
As we will see in Section~\ref{sec:undecidability}, however, for every \(r\in (0,1]\),
\(\set{\GRAM\mid \Prob(\GRAM)\geq r}\) is \emph{not} recursively enumerable.

\begin{remark}
\label{rem:phors-vs-hors}
Given a \pHORS{} \(\GRAM\),
replacing each probabilistic operator \(\C{p}\) s.t. \(0<p<1\) with
a binary tree constructor \(\texttt{br}\) and replacing
\(t_L\C{1}t_R\) (\(t_L \C{0} t_R\), resp.) with \(t_L\) (\(t_R\), resp.),
we obtain an ordinary HORS \(\GRAM^{\mathit{ND}}\). 
Then \(\Prob(\GRAM)=0\) if and only if
the tree generated by \(\GRAM^{\mathit{ND}}\) has no finite path to \(\Te\).
Thus, by \cite{KO11LMCS} (see the paragraph below the proof of Theorem~4.5
about the complexity of the reachability problem), 
whether \(\Prob(\GRAM)=0\) is decidable, and \((n-1)\)-EXPTIME complete.
Note, on the other hand, that there is no clear correspondence between
the almost sure termination problem \(\Prob(\GRAM)\stackrel{?}{=}1\)
and a model checking problem for \(\GRAM^{\mathit{ND}}\).
If the tree of \(\GRAM^{\mathit{ND}}\) has neither \(\Omega\) nor an infinite path
(which is decidable), then
\(\Prob(\GRAM){=}1\), but the converse does not hold.
\end{remark}

\begin{remark}
The restriction that a probabilistic choice may occur only at the top-level of each rule 
is \emph{not} a genuine restriction. Indeed, whenever we wish to write a rule of the form
\(F\;\seq{x}= C[t_L\C{p}t_R]\), we can normalize it to 
\(F\;\seq{x}= C[G\;\seq{x}]\), where \(G\) is defined by \(G\;\seq{x}=t_L\C{p}t_R\).
Keeping this in mind, we sometimes allow probabilistic choices to occur inside terms.
In fact, a \pHORS{} can be considered as a term (of type \(\T\)) of 
a probabilistic extension of the (call-by-name) 
\lambdaY{}-calculus~\cite{DBLP:journals/apal/Statman04}. We define the set of probabilistic \lambdaY{} terms by:
\[ s ::= \Te \midd \Omega\midd x \midd \lambda x.s \midd s_1s_2 \midd Y(\lambda f.\lambda x.s)\midd s_1\C{p}s_2.\]
Here, \(\C{p}\) is a probabilistic choice operator of type \(\T\to\T\to\T\), and other terms are simply-typed
in the usual way. Then, \pHORSs{} and probabilistic \lambdaY{} terms can be converted to each other.
We use \pHORS{} in the present paper for the convenience of
 the fixpoint characterizations discussed in Section~\ref{sec:fixpoint}.
\end{remark}

\begin{remark}
We adopt the call-by-name semantics, and allow probabilistic choices only on terms of type \(\T\).
The call-by-value semantics, as well as probabilistic choices at higher-order types can be modeled by
applying a standard CPS transformation. Moreover, a \pHORS{} does not have data other than functions, but
as in ordinary HORS~\cite{Kobayashi13JACM}, 
elements of
a finite set (such as Booleans) can be modeled by using Church encoding.
\end{remark}

We provide a few more examples of \pHORS{} below.
\begin{example}
\label{ex:listgen}
Recall the list generator example in Section~\ref{sec:intro}, whose
 termination is equivalent to that of the following program,
obtained by replacing the output of each function with the unit value \texttt{()}.
\begin{center}
\begin{lstlisting}[basicstyle=\ttfamily]
  let boolgen() = flip() in
  let rec listgen f ()= 
    if flip() then () else (f(); listgen f ()) 
  in listgen (listgen boolgen) ()
\end{lstlisting}
\end{center}
With a kind of CPS transformation, termination of the above program is reduced to that
of the following \pHORS{} \(\GRAM_3\):
\begin{align*}
S &= \listgen\;(\listgen\,\boolgen)\;\Te\\
 \boolgen\;k &= k \\ 
\listgen\;f\;k &= k\C{\frac{1}{2}}f(\listgen\,f\,k)
\end{align*}
It is not difficult to confirm that \(\Prob(\GRAM_3)=1\) (using the fixpoint characterization 
given in Section~\ref{sec:fixpoint}).
\end{example}

\begin{example}
\label{ex:listgen-variant}
The following is a variation of the list generator example (Example~\ref{ex:listgen}), 
which generates ternary trees instead of lists:
\begin{lstlisting}[basicstyle=\ttfamily]
 let boolgen() = flip() in
 let rec treegen f = 
    if flip() then Leaf 
    else Node(f(), treegen f, treegen f, treegen f) in
 treegen(boolgen)
\end{lstlisting}
The following \pHORS{} \(\GRAM_4\) captures the termination probability of the aforementioned program:
\begin{align*}
S &= \treegen\;\boolgen\;\Te\\
\boolgen\;k &= k \\ 
\treegen\;f\;k &= k\C{\frac{1}{2}}(f(\treegen\,f\,(\treegen\,f\,(\treegen\,f\,k))))
\end{align*}
Interestingly, \(\GRAM_4\) is \emph{not} almost surely terminating, since \(\Prob(\GRAM_4) = \frac{\sqrt{5}-1}{2}\).

To increase the chance of termination, let us change the original program as follows:
\begin{lstlisting}[basicstyle=\ttfamily,escapechar=!]
 let boolgen() = flip() in
 let rec treegen p f = 
   if flipp(p) then Leaf 
   else Node(f(), treegen !\(\frac{\texttt{p}+1}{2}\)! f, treegen !\(\frac{\texttt{p}+1}{2}\)! f, treegen !\(\frac{\texttt{p}+1}{2}\)! f) in
 treegen !\(\frac{1}{2}\)! boolgen
\end{lstlisting}
where \texttt{flipp} is the natural generalization of \texttt{flip}.
Here, \texttt{treegen} is parameterized with probability \texttt{p}, which is increased upon each recursive call.
We assume that \texttt{flipp(p)} returns \(\TRUE\) with probability \texttt{p} and \(\FALSE\) with \(1-\texttt{p}\).
The corresponding \pHORS{} \(\GRAM_5\) is:
\begin{align*}
S &= \treegen\;H\;\boolgen\;\Te\\
\boolgen\;k &= k\\ 
H\;x\;y &= x\C{\frac{1}{2}}y\\
G\;p\;x\;y &= x\C{\frac{1}{2}}(p\;x\;y)\\
\treegen\;p\;f\;k &= p\;k\; (f(\treegen\,(G\;p)\,f\,(\treegen\,(G\;p)\,f\,(\treegen\,(G\;p)\,f\,k))))
\end{align*}
The function \(\treegen\) is parameterized by a probabilistic choice function \(p\), which is initially set to
the function \(H\) (that chooses the first argument with probability \(\frac{1}{2}\)). The function
\(G\) takes a probabilistic choice function \(p\), and returns a probabilistic function
\(\lambda x.\lambda y.x\C{\frac{1}{2}}(p\;x\;y)\), which chooses the first argument with 
probability \(\frac{\texttt{p}+1}{2}\) where \(\texttt{p}\) is the probability
that \(p\) chooses the first argument. As expected, \(\GRAM_5\) is almost surely terminating. \qed
\end{example}

\begin{example}
\label{ex:listgen-even}
Recall the list generator example again. Suppose that we wish to compute the probability
that \texttt{listgen(boolgen)} generates a list of even length. It can be reduced to
the problem of computing the termination probability of the following program:

\begin{lstlisting}[basicstyle=\ttfamily,escapechar=!]
 let boolgen() = flip() in
 let rec loop() = loop() in
 let rec listgenE f () = 
   if flip() then () else (f();listgenO f ())
 and listgenO f () = 
   if flip() then loop() else (f();listgenE f ()) in
 listgenE boolgen ()
\end{lstlisting}
Here, we have duplicated \texttt{listgen} to \texttt{listgenE} and 
\texttt{listgenO}, which are expected to simulate the generation of even and odd lists respectively.
Thus, the then-branches of \texttt{listgenE} and \texttt{listgenO} have been replaced by
termination and divergence respectively.
As in the previous example, the above program can further be translated to the following \pHORS{}
\(\GRAM_6\):
\begin{align*}
S &= \listgene\;\boolgen\;\Te\\
\boolgen\;k &= k\\ 
\listgene\;f\;k &= k\C{\frac{1}{2}}(f(\listgeno\,f\,k))\\
\listgeno\;f\;k &= \Omega\C{\frac{1}{2}}(f(\listgene\,f\,k)).
\end{align*}
The termination probability of the \pHORS{} is
\[ \frac{1}{2} + \frac{1}{2}\cdot \frac{1}{4} + \frac{1}{2}\cdot \left(\frac{1}{4}\right)^2 + \cdots =
\frac{1}{2}\cdot\sum_{i=0}^\infty\frac{1}{4^i}=\frac{2}{3}.\]
Thus, the probability that the original program generates an even list is also \(\frac{2}{3}\).

Let us also consider the problem of computing the probability that 
\texttt{listgen(boolgen)} generates a list containing an even number of \(\TRUE\)'s.
It can be reduced to the termination probability of the following \pHORS{}.
\begin{align*}
S &= \listgene\;\boolgen\;\Te\\
\boolgen\;k_1\;k_2 &= k_1\C{\frac{1}{2}}k_2\\
\listgene\;f\;k &= k\C{\frac{1}{2}}(f(\listgeno\,f\,k)(\listgene\,f\,k))\\
\listgeno\;f\;k &= \Omega\C{\frac{1}{2}}(f(\listgene\,f\,k)(\listgeno\,f\,k)).
\end{align*}
The function \(\boolgen\) now takes two continuations \(k_1\) and \(k_2\) as arguments,
and calls \(k_1\) or \(k_2\) according to whether \(\TRUE\) or \(\FALSE\) is generated in the original
program. The function \(\listgene\) (\(\listgeno\), resp.) is called when
the number of \(\TRUE\)'s generated so far is even (odd, resp.).
The termination probability of the \pHORS{} above is \(\frac{3}{4}\).
\qed
\end{example}

In the following example, a standard program transformation for
randomized algorithms is captured as a \pHORS{}. More specifically, a higher-order function
is defined, which turns any Las-Vegas
algorithm that sometimes declares not to be able to provide the correct answer into
one that \emph{always} produces the correct answer. (For more details about the
use of the scheme above, please refer to~\cite{DBLP:series/txtcs/Hromkovic05}).
\begin{example}
\label{ex:palgo}
Consider a probabilistic function \(f\), which takes a value of type \(A\), and returns a value of type
\(B\) with probability \(p\) and \texttt{Unknown} with probability \(1-p\),
where \(p>0\).
The following higher-order function \texttt{determinize} takes such a function \(f\) as an argument, 
and generates a function from \(A\) to \(B\).
\begin{lstlisting}[basicstyle=\ttfamily,escapechar=!]
 type 'b pans = Ans of 'b | Unknown
 let rec determinize(f:'a->'b pans)(x:'a)=
    match f x with
        Ans(r) -> r
      | Unknown -> determinize f x
\end{lstlisting}
To confirm that \(\texttt{determinize}\ f\) almost surely terminates and returns a value of type \(B\),
it suffices to check that the \pHORS{} term \(\deter\; g\) almost surely terminates for 
\(g=\lambda y.\lambda z.y\C{p}z\), where \(\deter\) is defined by:
\[
\deter\;g = g\;\Te\;(\deter\;g).
\]
Here, the first argument of \(g\) corresponds to the body of the clause \(\texttt{Ans(r)->}\cdots\),
while the second argument corresponds to that of the clause \(\texttt{Unknown->}\cdots\).
Almost sure termination of \(\deter(\lambda y.\lambda z.y\C{p}z)\) for any \(p>0\) can further
be encoded as that of the following \pHORS{} \(\GRAM_7\):
\begin{align*}
S &= (\deter\;\one) \C{\frac{1}{2}} (\forallp\;\zero\;\one)\\
\one\;y\;z &= y\\
\zero\;y\;z &= z\\
\Avg\;p\;q\;y\;z &= (p\;y\;z)\C{\frac{1}{2}}(q\;y\;z)\\
\forallp\;p\;q &= (\deter\;(\Avg\;p\;q))\C{\frac{1}{2}}\\
  &\qquad((\forallp\;p\;(\Avg\;p\;q))\C{\frac{1}{2}}(\forallp\;(\Avg\;p\;q)\;q))
\end{align*}
It runs \(\deter\;(\lambda y.\lambda z.y\C{p}z)\) for every \(p\; (0<p\leq 1)\) of the form
\(\frac{k}{2^n}\) with non-zero probability. Thus, \(\Prob(\GRAM_7)=1\) 
if \(\deter(\lambda y.\lambda z.y\C{p}z)\) almost surely terminates for every \(p>0\).
Conversely, by the continuity of the termination probability of 
\(\deter\;(\lambda y.\lambda z.y\C{p}z)\) except at \(p=0\) (which we omit to discuss formally),
 \(\Prob(\GRAM_7)=1\)  implies that
 \(\deter(\lambda y.\lambda z.y\C{p}z)\) almost surely terminates for every \(p>0\). 
 \qed
\end{example}

\begin{remark}
Although \pHORS{} do not have probabilities as first-class values, as demonstrated in
the examples above, certain operations on probabilities can be realized by encoding 
a probability \(p\) into a probabilistic function \(\lambda x.\lambda y.x\C{p}y\).
The function \(\Avg\) in Example~\ref{ex:palgo} realizes the average operation \(\frac{p_1+p_2}{2}\).
The multiplication \(p_1p_2\) can be represented by \(\textit{Mult}\;p_1\;p_2\), where
\(\textit{Mult}\;p_1\;p_2\;x\,y = p_1\;(p_2\;x\;y)\;y\).
\end{remark}

\section{Undecidability of Almost Sure Termination of Order-2 \pHORS{}}
\label{sec:undecidability}
\label{SEC:UNDECIDABILITY}
\newcommand\CheckLt{\mathit{CheckLt}}
\newcommand\CheckLtP{\mathit{CheckLtPr}}
\newcommand\NatToP{\mathit{NatToPr}}
\newcommand\tonat[1]{[#1]}
\newcommand\CheckHalf{\mathit{CheckHalf}}

We prove in this section that the almost sure termination problem,
i.e., whether 
the termination probability \(\Prob(\GRAM)\) of a given \pHORS{} \(\GRAM\) is \(1\),
is undecidable
even for order-2 \pHORSs.
The proof is by reduction from the undecidability of
Hilbert's 10th problem~\cite{Diophantine}
(i.e. unsolvability of Diophantine equations).
Note that almost sure termination of an order-1 \pHORS{} is decidable,
as order-1 \pHORSs{} are essentially equi-expressive with
 probabilistic pushdown
systems and recursive Markov chains~\cite{Etessami09,DBLP:journals/jacm/EtessamiY15,DBLP:journals/fmsd/BrazdilEKK13,DBLP:journals/jcss/BrazdilBFK14}.
In fact, by the fixpoint characterization given in Section~\ref{sec:order-n-1-equation}, the termination probability of an order-1 \pHORS{}
can be expressed as the least solution of fixpoint equations over reals,
which can be solved as discussed in \cite{Etessami09}.
Thus, our undecidability result for order-2 \pHORSs{} is optimal.

We start by giving an easy reformulation of the unsolvability of
Diophantine equations in terms of polynomials with \emph{non-negative}
coefficients, which follows immediately from the original result.
\begin{lemma}
\label{lem:diophantine}
Given two 
polynomials \(P(x_1,\ldots,x_k)\) and
\(Q(x_1,\ldots,x_k)\) with non-negative integer coefficients, whether
\(P(x_1,\ldots,x_k)<Q(x_1,\ldots,x_k)\) for some \(x_1,\ldots,x_k\in \Nat\)
is undecidable. More precisely,
the set of pairs of polynomials:
\(\set{(P(x_1,\ldots,x_k),Q(x_1,\ldots,x_k)) \mid
\exists x_1,\ldots,x_k\in \Nat.P(x_1,\ldots,x_k)< Q(x_1,\ldots,x_k)}\)
is \(\Sigma^0_1\)-complete in the arithmetical hierarchy.
\end{lemma}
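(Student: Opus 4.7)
The plan is to derive this as a quick corollary of the classical MRDP theorem (undecidability of Hilbert's 10th problem), whose $\Sigma^0_1$-completeness over $\NN^k$ is well-known. The novelty we need is twofold: (a) replacing arbitrary integer-coefficient polynomials by pairs of non-negative-coefficient polynomials, and (b) replacing equality to zero by a strict inequality.

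For the upper bound ($\Sigma^0_1$ membership), I would simply enumerate tuples $(x_1,\ldots,x_k)\in\NN^k$ in some standard order and, for each, evaluate $P$ and $Q$ and test whether $P(\vec x)<Q(\vec x)$. Since polynomial evaluation on naturals is computable, this gives a semi-decision procedure witnessing membership in $\Sigma^0_1$.

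For $\Sigma^0_1$-hardness, the plan is to reduce from the problem $\{R\in\mathbb Z[x_1,\ldots,x_k]\mid \exists\vec x\in\NN^k.\,R(\vec x)=0\}$, which is $\Sigma^0_1$-complete by MRDP. Given such an $R$, first split it by sign of the coefficients: collecting monomials with positive coefficients into one polynomial $P_0$ and the absolute values of the monomials with negative coefficients into another polynomial $Q_0$, so that $R=P_0-Q_0$ with $P_0,Q_0\in\NN[x_1,\ldots,x_k]$. Then $R(\vec x)=0$ iff $P_0(\vec x)=Q_0(\vec x)$. To convert equality into strict inequality while preserving non-negativity of coefficients, I would use the squaring trick: $P_0(\vec x)=Q_0(\vec x)$ iff $(P_0(\vec x)-Q_0(\vec x))^2=0$, which, because $(P_0-Q_0)^2$ is a non-negative integer, is equivalent to $(P_0(\vec x)-Q_0(\vec x))^2<1$, i.e.\
\[ P_0(\vec x)^2+Q_0(\vec x)^2\;<\;2\,P_0(\vec x)\,Q_0(\vec x)+1. \]
Now setting $P(\vec x):=P_0(\vec x)^2+Q_0(\vec x)^2$ and $Q(\vec x):=2\,P_0(\vec x)\,Q_0(\vec x)+1$, both polynomials manifestly have non-negative integer coefficients, and are computable from $R$. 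Therefore $\exists\vec x.\,R(\vec x)=0$ iff $\exists\vec x.\,P(\vec x)<Q(\vec x)$, giving a many-one reduction from a $\Sigma^0_1$-complete set and hence $\Sigma^0_1$-hardness.

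The only potential obstacle is the subtle point that $R$ in MRDP has integer coefficients and solutions are sought in $\NN^k$ (as opposed to $\ZZ^k$); but this version of the statement is a standard consequence of MRDP via Lagrange's four-square theorem (each integer unknown replaced by a difference of sums of four squared natural unknowns, or directly by the $\NN$-version of MRDP), so I would just cite it. Everything else—splitting by sign, squaring, and adding one to pass from $\le$ to $<$—is elementary.
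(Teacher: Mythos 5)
Your proposal is correct and takes essentially the same route as the paper: both reduce from the $\NN$-version of Hilbert's tenth problem via the observation that $D(\vec x)=0$ iff $D(\vec x)^2-1<0$, and then split the resulting polynomial into a difference of two polynomials with non-negative coefficients (you merely perform the sign-splitting before squaring, yielding the explicit pair $P=P_0^2+Q_0^2$, $Q=2P_0Q_0+1$, where the paper splits $D^2-1$ directly). The membership argument by enumeration also matches the paper's.
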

\begin{proof}
  Let \(D(x_1,\ldots,x_k)\) be a multivariate polynomial with integer coefficients.
  Then, for all natural numbers \(x_1,\ldots,x_k\in \Nat\),
  \(D(x_1,\ldots,x_k)=0\) if and only if \((D(x_1,\ldots,x_k))^2-1<0\).
Any such polynomial \((D(x_1,\ldots,x_k))^2-1\) may
be rewritten as \(P(x_1,\ldots,x_k)-Q(x_1,\ldots,x_k)\), where
\(P(x_1,\ldots,x_k)\) and
\(Q(x_1,\ldots,x_k)\) have only non-negative integer coefficients.
Then, \(D(x_1,\ldots,x_k)=0\) 
if and only if \(P(x_1,\ldots,x_k)<Q(x_1,\ldots,x_k)\).
Since whether \(D(x_1,\ldots,x_k)=0\) for some 
\(x_1,\ldots,x_k\in \Nat\) is
undecidable~\cite{Diophantine}, it is also undecidable whether
\(P(x_1,\ldots,x_k)<Q(x_1,\ldots,x_k)\) for some 
\(x_1,\ldots,x_k\in \Nat\).
Furthermore, since the set of sastisfiable Diophantine equations
is \(\Sigma^0_1\)-complete, the set 
\(\set{(P(x_1,\ldots,x_k),Q(x_1,\ldots,x_k)) \mid
\exists x_1,\ldots,x_k\in \Nat.P(x_1,\ldots,x_k)< Q(x_1,\ldots,x_k)}\) is 
\(\Sigma^0_1\)-hard. The set is also obviously 
recursively enumerable, hence belongs to \(\Sigma^0_1\).
\end{proof}

Roughly, the idea of our undecidability proof is to show that for every
$P$ and $Q$ as above, one can effectively construct
an order-$2$ \pHORS{} that
\emph{does not} almost surely terminate if and only if \(P(x_1,\ldots,x_k)<Q(x_1,\ldots,x_k)\)
\emph{for some} $x_1,\ldots,x_k$.
Henceforth, we say \(t\) is \emph{non-AST} if
\(t\) is not almost surely terminating.
For ease of understanding, we first construct
an order-\(3\) \pHORS{} \(\GRAMAST{3}\) that satisfies the property above
in Section~\ref{sec:order-3-undecidability}
and then refine the construction to obtain an order-\(2\)
\pHORS{} \(\GRAMAST{2}\) with the same property
in Section~\ref{sec:order-2-undecidability}.
\subsection{Construction of the Order-3 \pHORS{} \(\GRAMAST{3}\)}
\label{sec:order-3-undecidability}

\newcommand\TestAll{\mathit{Loop}}
\newcommand\TestAllP{\mathit{LoopPr}}
\newcommand\TestLt{\mathit{Lt}}
\newcommand\TestLtP{\mathit{LtPr}}
Let \(P(x_1,\ldots,x_k)\) and \(Q(x_1,\ldots,x_k)\) be, as above, polynomials with
non-negative coefficients. We give the construction of \(\GRAMAST{3}\)
in a top-down manner.
We let \(\GRAMAST{3}\) enumerate all the tuples of natural numbers
\((n_1,\ldots,n_k)\), and for each tuple,
spawn a process
\(\TestLt\,(P(n_1,\ldots,n_k))\,(Q(n_1,\ldots,n_k))\) with non-zero probability,
where \(\TestLt\,m_1\,m_2\) is a process that
is non-AST if and only if
\(m_1<m_2\). Thus, we define the start
symbol \(S\) of \(\GRAMAST{3}\)  by:
\newcommand\Zero{Zero}
\newcommand\Succ{\mathit{Succ}}
\newcommand\Add{\mathit{Add}}
\newcommand\Mult{\mathit{Mult}}
\newcommand\OneP{\mathit{OnePr}}
\newcommand\ZeroP{\mathit{ZeroPr}}
\newcommand\SuccP{\mathit{SuccPr}}
\newcommand\AddP{\mathit{AddPr}}
\begin{align*}
S =&\;\TestAll\;\Zero\;\cdots\; \Zero.\\
\TestAll\,x_1\,\cdots\,x_k =&\;
(\TestLt\,(P\,x_1\,\cdots\,x_k)\,(Q\,x_1\,\cdots\,x_k)) 
\iftwocol
\\\qquad 
\else
\\\qquad \qquad \qquad \qquad 
\fi
&\vspace{-20pt}\C{\frac{1}{2}} (\TestAll\,(\Succ\,x_1)\,\cdots\,x_k)
\C{\frac{1}{2}}\cdots 
\iftwocol \\\qquad \else\fi
&\vspace{-20pt}\C{\frac{1}{2}} (\TestAll\,x_1\,\cdots\,(\Succ\,x_k)).
\end{align*}
\iftwocol
Here, for readability, we have extended the righthand sides of rules to
$n$-ary probabilistic choices: 
\(t_1\C{p_1} t_2\C{p_2}\cdots \C{p_{n-1}}t_n\) (where \(\C{p}\) is left-associative).
\else
Here, for readability, we have extended the righthand sides of rules to
$n$-ary probabilistic choices:
\[t_1\C{p_1} t_2\C{p_2}\cdots \C{p_{n-1}}t_n.\]
These can be expressed as \(t_1\C{p_1} (F_2\,x_1\,\cdots\,x_k)\),
where auxiliary non-terminals are defined by:
\[F_2\,x_1\,\cdots\,x_k = t_2\C{p_2}(F_3\,x_1\,\cdots\,x_k)\qquad
\cdots\qquad F_{n-1}\,x_1\,\cdots\,x_k = t_{n-1}\C{p_{n-1}}t_n.\]
\fi
We can express natural numbers and operations on them
by using Church encoding:
\newcommand{\CT}{\mathsf{CT}}
\begin{align*}
\Zero\;s\;z&=z  &
\Succ\;n\;s\;z&=s\;(n\;s\;z)\\
\Add\;n\;m\;s\;z &=n\;s\;(m\;s\;z) & \Mult\;n\;m\;s\;z&=n\;(m\;s)\;z.
\end{align*}
Here, the types of non-terminals above are given by:
\begin{align*}
\NONTERMS(\Zero)&=\CT\\
\qquad\qquad\NONTERMS(\Succ)&=\CT\to\CT\\
\NONTERMS(\Add)=\NONTERMS(\Mult)&=\CT\to\CT\to\CT,
\end{align*}
where \(\CT = (\T\to\T)\to\T\to\T\) is the usual type of Church numerals.
Note that the order of \(\CT\) is \(2\), while
that of \(\NONTERMS(\Succ)\), \(\NONTERMS(\Add)\), and \(\NONTERMS(\Mult)\) is \(3\).
By using the just introduced operators, we can easily define \(P\) and \(Q\) as order-3 non-terminals.
By abuse of notation, we often
use symbols \(P\) and \(Q\) to denote
both polynomials and the representations of them as non-terminals;
similarly for natural numbers.

It remains to define an order-3 non-terminal \(\TestLt\), so that
\(\TestLt\,m_1\,m_2\) is non-AST
if and only if \(m_1<m_2\). Since
\(\GRAMAST{3}\) runs
\(\TestLt\,(P\,n_1\,\cdots\,n_k)\,(Q\,n_1\,\cdots\,n_k)\)
for each tuple of Church numerals \((n_1,\ldots,n_k)\) with non-zero
probability,
\(\GRAMAST{3}\) is non-AST if and only if
\(P(n_1,\ldots,n_k)<Q(n_1,\ldots,n_k)\) for \emph{some} natural numbers
\(n_1,\ldots,n_k\).
The key ingredient used for the construction of \(\TestLt\) is the function
\(\CheckHalf\) of type \((\T\to\T\to\T)\to\T\), defined as follows:
\[
\CheckHalf\; g = F'\; g\; \Te\qquad
F'\; g\; x = g\; x\; (F'\,g\,(F'\,g\,x)).\]
Here, \(F'\) above is a parameterized version of \(F\) from Example~\ref{ex:random-walk}: \(F'\; \C{p}\) (where \(\C{p}\) is treated as a function of type
\(\T\to\T\to\T\), which chooses the first argument with probability \(p\)
and the second one with \(1-p\)) corresponds to \(F\).
As discussed in Example~\ref{ex:random-walk},
\(F\,\Te\) is non-AST if and only if \(p<\frac{1}{2}\).
Thus, \(\CheckHalf\;g= F'\;g\;\Te\)
(which is equivalent to \(F\,\Te\) when \(g=\C{p}\)) is non-AST if and only if 
 the probability that \(g\) chooses the first argument is smaller than \(\frac{1}{2}\).
Let \(\CheckLt\) (which will be defined shortly)
be a function which takes Church numerals \(m_1\)
and \(m_2\), and returns a function of type \(\T\to\T\to\T\) that chooses
the first argument with probability smaller than \(\frac{1}{2}\)
if and only if \(m_1<m_2\).
Then, \(\TestLt\) can be defined as:
\[
\TestLt\;m_1\;m_2 = \CheckHalf (\CheckLt\;m_1\;m_2).
\]
Finally, \(\CheckLt\) can be defined by:
\begin{align*}
\CheckLt\;m_1\;m_2\;x\;y =
(\NatToP\;m_1\;x\;y)\C{\frac{1}{2}}(\NatToP\;m_2\;y\;x).\\
\NatToP\;m\;x\;y = m\;(H\;x)\;y.\qquad
H\;x\;y = x\C{\frac{1}{2}}y.
\end{align*}
Let us write \(\tonat{m}\) for the natural number represented
by a Church numeral \(m\).
For a Church numeral \(m\),
\(\NatToP\;m\;x\;y\) (which is equivalent to
\((H\,x)^{\tonat{m}}y\)) chooses \(x\) with probability \(1-\frac{1}{2^{\tonat{m}}}\)
and \(y\) with probability
\(\frac{1}{2^{\tonat{m}}}\).
Thus, the probability that \(\CheckLt\;m_1\;m_2\;x\;y\) chooses \(x\) is
\[
\frac{1}{2}\cdot\left(1-\frac{1}{2^{\tonat{m_1}}}\right)+
\frac{1}{2}\cdot\frac{1}{2^{\tonat{m_2}}}
= \frac{1}{2} +\frac{1}{2}\cdot
\left(\frac{1}{2^{\tonat{m_2}}}- \frac{1}{2^{\tonat{m_1}}}\right),
\]
which is smaller than \(\frac{1}{2}\) if and only if \(\tonat{m_1}<\tonat{m_2}\),
as required.
This completes the construction of \(\GRAMAST{3}\).
See Figure~\ref{fig:g3} for the whole rules of \(\GRAMAST{3}\).
From the discussion above, it should be trivial that
\(\GRAMAST{3}\) is non-AST if and only if
\(P(x_1,\ldots,x_k)<Q(x_1,\ldots,x_k)\) holds for
some \(x_1,\ldots,x_k\in\Nat\).

\begin{figure}
\fbox{
\begin{minipage}{.97\textwidth}
\begin{align*}
S &= \TestAll\;\Zero\;\cdots\; \Zero.\\
\TestAll\,x_1\,\cdots\,x_k &=
(\TestLt\,(P\,x_1\,\cdots\,x_k)\,(Q\,x_1\,\cdots\,x_k)) \\
&\C{\frac{1}{2}} (TestAll\,(\Succ\,x_1)\,\cdots\,x_k)
\C{\frac{1}{2}}\cdots \C{\frac{1}{2}} (TestAll\,x_1\,\cdots\,(\Succ\,x_k)).\\
\TestLt\;m_1\;m_2 &= \CheckHalf (\CheckLt\;m_1\;m_2).\\
\CheckHalf\; y &= F'\; y\; \Te.\\
F'\; g\; x &= g\; x\; (F'\,g\,(F'\,g\,x)).\\
\CheckLt\;m_1\;m_2\;x\;y &=
(\NatToP\;m_1\;x\;y)\C{\frac{1}{2}}(\NatToP\;m_2\;y\;x).\\
\NatToP\;m\;x\;y &= m\;(H\;x)\;y.\\
H\;x\;y &= x\C{\frac{1}{2}}y.\\
\Zero\,s\,z &= z.\\
\Succ\,n\,s\,z &= s\,(n\,s\,z).\\
\Add\,n\,m\,s\,z &= n\,s\,(m\,s\,z).\\
\Mult\,n\,m\,s\,z&=n\,(m\,s)\,z.\\
P\,x_1\,\cdots\,x_k &= t_P.\\
Q\,x_1\,\cdots\,x_k &= t_Q.
\end{align*}
\end{minipage}}
\caption{The rules of \(\GRAMAST{3}\), where \(t_P\) and \(t_Q\) are terms encoding the
polynomials \(P\) and \(Q\) by way of \(\Zero\), \(\Succ\), \(\Add\), and \(\Mult\).}
\label{fig:g3}
\end{figure}

\subsection{Decreasing the Order}
\label{sec:order-2-undecidability}
We now refine the construction of \(\GRAMAST{3}\) to obtain
an order-2 \pHORS{} \(\GRAMAST{2}\) that satisfies the same property.
The idea is, instead of passing around a Church numeral \(m\),
to pass a probabilistic function equivalent to \(\NatToP\,m\),
which takes two arguments and chooses the first and second arguments
with probabilities \(1-\frac{1}{2^{\tonat{m}}}\) and \(\frac{1}{2^{\tonat{m}}}\),
respectively.
Note that a Church numeral \(m\) has an order-2 type
\(\CT=(\T\to\T)\to\T\to\T\), whereas
\(\NatToP\,m\) has an order-1 type \(\T\to\T\to\T\). This 
ultimately allows us to decrease the order of the \pHORS{}.

Based on the idea above, we replace \(\TestLt\) with \(\TestLtP\),
which now takes probabilistic functions of type \(\T\to\T\to\T\)
as arguments:
\begin{align*}
 \TestLtP\;g_1\;g_2 &= \CheckHalf (\CheckLtP\;g_1\;g_2).\\
 \CheckLtP\;g_1\;g_2\;x\;y &= (g_1\;x\;y)\C{\frac{1}{2}}(g_2\;y\;x).
\end{align*}
Here, \(\CheckLtP\) is an analogous version of \(\CheckLt\),
and \(\CheckHalf\) is as before: \(\CheckHalf\;g\) is non-AST if and only if
the probability that \(g\) chooses the first argument is smaller than \(\frac{1}{2}\).
Then, \(\TestLtP\;(\NatToP\;(P\,n_1\,\cdots\,n_k))\;(\NatToP\;(Q\,n_1\,\cdots\,n_k))\)
is non-AST if and only if \(P(n_1,\ldots,n_k)<Q(n_1, \ldots,n_k)\).

It remains to modify the top-level loop \(\TestAll\), so that
we can enumerate (terms equivalent to)
\(\TestLtP\;(\NatToP\;(P\,n_1\,\cdots\,n_k))\;(\NatToP\;(Q\,n_1\,\cdots\,n_k))\)
for all \(n_1,\ldots,n_k\in\Nat\),
without explicitly constructing Church numerals.
Instead of using Church encodings, we can encode
natural numbers and operations on them (except multiplication)
into probabilistic functions as follows.
\begin{align*}
\ZeroP\;x\;y&=y  &\SuccP\;g\;x\;y&=x\C{\frac{1}{2}}(g\;x\;y)\\
\OneP\;x\;y&=x\C{\frac{1}{2}}y  &
\AddP\;g_1\;g_2\;x\;y &=g_1\;x\;(g_2\;x\;y).
\end{align*}
Basically, a natural number \(m\) is encoded as a probabilistic function
of type \(\T\to\T\to\T\), which chooses the first and second arguments
with probabilities \(1-\frac{1}{2^m}\) and \(\frac{1}{2^m}\) respectively.
Notice that
\(\AddP\;(\NatToP\;m_1)\;(\NatToP\;m_2)\) is equivalent to
\(\NatToP\;(\Add\;m_1\;m_2)\), because
the probability that
\(\AddP\;(\NatToP\;m_1)\;(\NatToP\;m_2)\;x\;y\) chooses \(y\)
is \(\frac{1}{2^{[m_1]}} \cdot \frac{1}{2^{[m_2]}} =
\frac{1}{2^{[m_1]+[m_2]}}\).
We call this encoding the \emph{probabilistic function encoding}, or
\emph{PF encoding} for short.

The multiplication cannot, however, be directly encoded. To compensate for
the lack of the multiplication operator, instead of passing around
just \(n_1,\ldots,n_k\) in the top-level loop,
we pass around the PF encodings of the values of
\(n_1^{i_1}\cdots n_k^{i_k}\) for each \(i_1\leq d_1,\ldots,i_k\leq d_k\),
where 
\(d_1,\ldots,d_k\) respectively are the largest degrees of 
\(P(x_1,\ldots,x_k)+Q(x_1,\ldots,x_k)\) in \(x_1,\ldots,x_k\).
We thus define the start symbol \(S\) of \(\GRAMAST{2}\) by:
\newcommand\Inc{\mathit{Inc}}
\begin{align*}
S &= \TestAllP\;\OneP\;\underbrace{\ZeroP\,\cdots\,\ZeroP}_{(d_1+1)\cdots(d_k+1)-1\mbox{ times}}.\\
\TestAllP\;\seq{x} &= (\TestLtP\;(P'\;\seq{x})\;\;(Q'\;\seq{x}))\\\ 
&\C{\frac{1}{2}} (\TestAllP\,(\Inc_{1,(0,\ldots,0)}\,\seq{x})\,\cdots\,
(\Inc_{1,(d_1,\ldots,d_k)}\,\seq{x}))\C{\frac{1}{2}}\cdots\\\ 
&\C{\frac{1}{2}} (\TestAllP\,(\Inc_{k,(0,\ldots,0)}\,\seq{x})\,\cdots\,
(\Inc_{k,(d_1,\ldots,d_k)}\,\seq{x})).
\end{align*}
Here, \(\seq{x}\) denotes the sequence of \((d_1+1)\cdots(d_k+1)\) variables
 \(x_{(0,\ldots,0)},\ldots,x_{(d_1,\ldots,d_k)}\),
 consisting of \(x_{(i_1,\ldots,i_k)}\) for each \(i_1\in\set{0,\ldots,d_1},
 \ldots, i_k\in\set{0,\ldots,d_k}\).
 Each variable \(x_{(i_1,\ldots,i_k)}\) holds (the PF encoding of)
 the value of  \(n_1^{i_1}\cdots n_k^{i_k}\).

Moreover, the functions \(P'\) and \(Q'\) are the PF encodings of the polynomials
\(P\) and \(Q\). Since \(P\) and \(Q\) can be represented as linear combinations of 
monomials \(x_1^{i_1}\cdots x_k^{i_k}\) for \(i_1\leq d_1,\ldots,i_k\leq d_k\),
\(P'\) and \(Q'\) can be defined using \(\ZeroP\) and \(\AddP\).
For example, if \(P(x_1,x_2) = x_1^2+2x_1x_2\), then \(P'\) is defined by:
\(P'\;\seq{x}\;y\;z = \AddP\;x_{(2,0)}\;(\AddP\;x_{(1,1)}\;x_{(1,1)})\;y\;z\).

The function \(\Inc_{j,(i_1,\ldots,i_k)}\,\seq{x}\) represents the
PF encoding of \(n_1^{i_1}\cdots (n_j+1)^{i_j}\cdots n_k^{i_k}\),
assuming that \(\seq{x}\) represents (the PF encoding of)
the values \(n_1^{0}\cdots n_k^{0}, \ldots, n_1^{d_1}\cdots n_k^{d_k}\).
Note that \(\Inc_{j,(i_1,\ldots,i_k)}\) can also be defined by using \(\ZeroP\) and \(\AddP\),
since \(x_1^{i_1}\cdots (x_j+1)^{i_j}\cdots x_k^{i_k}\)
can be expressed as a linear combination of
monomials \(x_1^{0}\cdots x_k^{0}, \ldots,
x_1^{d_1}\cdots x_k^{d_k}\). For example, if \(k=2\), then \(\Inc_{2,(1,2)}\) can be
defined by
\( \Inc_{2,(1,2)}\,\seq{x}\,y\,z =\AddP\;x_{(1,2)}\;(\AddP\;x_{(1,1)}\;(\AddP\;x_{(1,1)}\;x_{(1,0)}))\,y\,z\),
because \(x_1(x_2+1)^2 = x_1x_2^2 + 2x_1x_2+x_1\). \qed

\medskip 
This completes the construction of \(\GRAMAST{2}\).
See Figure~\ref{fig:g2} for the list of all 
rules of \(\GRAMAST{2}\).
\begin{figure}[tbp]
\fbox{
\begin{minipage}{.97\textwidth}
\begin{align*}
S &= \TestAllP\;\OneP\;\underbrace{\ZeroP\,\cdots\,\ZeroP}_{(d_1+1)\cdots(d_k+1)-1}.\\
\TestAllP\;\seq{x} &= (\TestLtP\;(P'\;\seq{x})\;\;(Q'\;\seq{x})) \\& 
\C{\frac{1}{2}} (\TestAllP\,(\Inc_{1,(0,\ldots,0)}\,\seq{x})\,\cdots\,
(\Inc_{1,(d_1,\ldots,d_k)}\,\seq{x})) \\
&\C{\frac{1}{2}}\cdots \C{\frac{1}{2}} (\TestAllP\,(\Inc_{k,(0,\ldots,0)}\,\seq{x})\,\cdots\,
(\Inc_{k,(d_1,\ldots,d_k)}\,\seq{x})).\\
 \TestLtP\;g_1\;g_2 &= \CheckHalf (\CheckLtP\;g_1\;g_2).\\
 \CheckLtP\;g_1\;g_2\;x\;y &= (g_1\;x\;y)\C{\frac{1}{2}}(g_2\;y\;x).\\
\CheckHalf\; y &= F'\; y\; \Te.\\
F'\; g\; x &= g\; x\; (F'\,g\,(F'\,g\,x)).\\
\ZeroP\;x\;y&=y.\\
\OneP\;x\;y&=x\C{\frac{1}{2}}y.\\
\SuccP\;g\;x\;y&=x\C{\frac{1}{2}}(g\;x\;y).\\
\AddP\;g_1\;g_2\;x\;y&=g_1\;x\;(g_2\;x\;y).\\
P'\,\seq{x}&=t'_P\\ 
Q'\,\seq{x}&=t'_Q\\   
\Inc_{j,(i_1,\ldots,i_k)}&= t_\Inc^{j(i_1,\ldots,i_k)} 
\end{align*}
\end{minipage}}
  \caption{The rules of \(\GRAMAST{2}\), where the terms $t'_P$, $t'_Q$ and $t_\Inc^{j,(i_1,\ldots,i_k)}$ are defined based
  on \(\ZeroP\), \(\OneP\), and \(\SuccP\) and \(\AddP\).}
\label{fig:g2}
\end{figure}
By the discussion above, we have:

\begin{theorem}
\label{th:undecidability-ast}
The almost sure termination
of order-2 \pHORS{} is undecidable.
More precisely, the set \(\set{\GRAM\mid \Prob(\GRAM)=1, \mbox{$\GRAM$
is an order-2 \pHORS{}}}\) is \(\Pi^0_1\)-hard.
\end{theorem}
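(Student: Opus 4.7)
The plan is to verify that the reduction already sketched, which sends a pair of polynomials $(P,Q)$ with non-negative coefficients to the order-2 \pHORS{} $\GRAMAST{2}$ of Figure~\ref{fig:g2}, is correct: namely, $\GRAMAST{2}$ is non-AST if and only if there exist $n_1,\ldots,n_k \in \Nat$ with $P(n_1,\ldots,n_k) < Q(n_1,\ldots,n_k)$. Once this equivalence is established, combining it with Lemma~\ref{lem:diophantine} shows that the $\Sigma^0_1$-complete set of Diophantine inequalities many-one reduces to $\{\GRAM \mid \Prob(\GRAM)<1\}$, hence its complement $\{\GRAM \mid \Prob(\GRAM)=1\}$ is $\Pi^0_1$-hard.

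First I would prove a core lemma isolating the behavior of $\CheckHalf$: for any closed term $g \COL \T\to\T\to\T$ whose termination behavior is summarized by a \emph{first-argument probability} $q \in [0,1]$, the term $\CheckHalf\,g$ is non-AST iff $q < \tfrac{1}{2}$. This follows from the fixpoint reasoning of Example~\ref{ex:random-walk}: $\Prob(F'\,g\,\Te)$ is the least solution in $[0,1]$ of $r = q + (1-q) r^2$, which equals $1$ exactly when $q \geq \tfrac{1}{2}$. Next, I would formalize the PF encoding by defining $g$ to \emph{PF-encode} $m \in \Nat$ when its first-argument probability equals $1 - 2^{-m}$, and prove by induction on polynomial structure that $\ZeroP$, $\OneP$ PF-encode $0$ and $1$, and that $\SuccP\,g$, $\AddP\,g_1\,g_2$ PF-encode $m{+}1$, $m_1{+}m_2$ whenever $g,g_1,g_2$ PF-encode the corresponding numbers; the $\AddP$ case rests on the identity $1 - (1-q_1)(1-q_2) = 1 - 2^{-(m_1+m_2)}$ for $q_i = 1 - 2^{-m_i}$. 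Since any monomial and hence any non-negative polynomial expression built from the $x_{(i_1,\ldots,i_k)}$ can be written using only $\ZeroP$ and $\AddP$, it follows that $P'\,\seq{x}$ PF-encodes $P(n_1,\ldots,n_k)$ and $Q'\,\seq{x}$ PF-encodes $Q(n_1,\ldots,n_k)$ when each $x_{(i_1,\ldots,i_k)}$ holds the PF encoding of $n_1^{i_1}\cdots n_k^{i_k}$; similarly, $\Inc_{j,(i_1,\ldots,i_k)}$ correctly produces the PF encoding of $n_1^{i_1}\cdots(n_j+1)^{i_j}\cdots n_k^{i_k}$ via the binomial expansion. Then $\CheckLtP\,g_1\,g_2$ has first-argument probability $\tfrac{1}{2} + \tfrac{1}{2}(2^{-m_2} - 2^{-m_1})$, so by the $\CheckHalf$ lemma, $\TestLtP\,g_1\,g_2$ is non-AST iff $m_1 < m_2$.

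To conclude, I would observe that $\TestAllP$, viewed as a Markov chain on the space of tuples of PF encodings, reaches every tuple of form $(n_1^{i_1}\cdots n_k^{i_k})_{i_1,\ldots,i_k}$ with strictly positive probability (by a straightforward induction on $n_1 + \cdots + n_k$), and at each visit spawns the test $\TestLtP\,(P'\,\seq{x})\,(Q'\,\seq{x})$ with probability $\tfrac{1}{2}$. Since a probabilistic program is AST iff every subterm reached with positive probability is AST, we get that $\GRAMAST{2}$ is AST iff $P(n_1,\ldots,n_k) \geq Q(n_1,\ldots,n_k)$ for \emph{all} $n_1,\ldots,n_k$. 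The hard part, I expect, is the compositionality argument for PF encodings: one must check that $\AddP$ behaves as claimed even when $g_1,g_2$ are themselves complex $\lambda Y$-terms built by nested $\AddP$ rather than primitive probabilistic choices. This can be handled by observing that the first-argument probability is definable purely from the operational semantics of $\redp{d,p}{}$ reductions leading to $\Te$, and that this definition is compositional under the specific call structure $\AddP\,g_1\,g_2\,x\,y = g_1\,x\,(g_2\,x\,y)$.
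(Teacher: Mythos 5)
Your proposal is correct and follows essentially the same route as the paper: the paper's proof of this theorem is a one-line appeal to the construction of $\GRAMAST{2}$ (via $\CheckHalf$, the PF encoding with $\ZeroP$/$\OneP$/$\SuccP$/$\AddP$, the monomial-tuple trick to avoid multiplication, and the enumerating loop $\TestAllP$), combined with Lemma~\ref{lem:diophantine}, which is exactly the reduction you verify. Your additional care about the compositionality of the first-argument probability for nested $\AddP$ terms and the decomposition of $\Prob(\GRAMAST{2})$ as a convex combination over the spawned tests fills in details the paper leaves implicit, but does not change the argument.
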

\begin{proof}
By the construction of \(\GRAMAST{2}\) above, \(\Prob(\GRAMAST{2})=1\) if and only if
\(P(x_1,\ldots,x_k)\geq Q(x_1,\ldots,x_k)\) holds for all \(x_1,\ldots,x_k\in\Nat\).
By Lemma~\ref{lem:diophantine},
the set of pairs \((P,Q)\) that satisfy the latter is \(\Pi^0_1\)-complete,
hence the set
\(\set{\GRAM\mid \Prob(\GRAM)=1, \mbox{$\GRAM$
is an order-2 \pHORS{}}}\) is \(\Pi^0_1\)-hard. \hfill
\end{proof}
As a corollary, we also have:
\begin{theorem}
\label{th:undecidability}
For any rational number \(r\in (0,1]\), the followings are undecidable:
\begin{enumerate}
\item whether a given order-2 \pHORS{} \(\GRAM\) satisfies 
\(\Pr(\GRAM){\geq}r\).
\item whether a given order-2 \pHORS{} \(\GRAM\) satisfies 
\(\Pr(\GRAM){=}r\).
\end{enumerate}
More precisely, the sets
\(\set{\GRAM\in \PHORSSET_2 \mid \Pr(\GRAM){\geq}r}\) 
and
\(\set{\GRAM\in\PHORSSET_2 \mid \Pr(\GRAM){=}r}\) 
are \(\Pi^0_1\)-hard.
\end{theorem}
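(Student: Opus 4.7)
The plan is to give a simple many-one reduction from the almost sure termination problem for order-2 \pHORS{} (which is $\Pi^0_1$-hard by Theorem~\ref{th:undecidability-ast}) to each of the two sets in the statement, by scaling down the termination probability using a top-level probabilistic choice.

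Given a rational $r\in(0,1]$ and any order-2 \pHORS{} \(\GRAM=(\NONTERMS,\RULES,S)\), I would construct an order-2 \pHORS{} \(\GRAM'=(\NONTERMS',\RULES',S')\) as follows: let $S'$ be a fresh non-terminal of type $\T$, set $\NONTERMS' = \NONTERMS\cup\set{S'\mapsto\T}$, and add the single rule
\[
S' \;=\; S \C{r} \Omega,
\]
keeping all other rules of $\RULES$ unchanged. Since $S'$ has ground type and $r$ is a rational in $[0,1]$, this is a well-formed rule, the construction is clearly effective, and $\order(\GRAM')=\order(\GRAM)=2$, so $\GRAM'\in\PHORSSET_2$.

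The key observation is that, by the operational semantics, any reduction sequence $S'\redsp{\cs,p}{\GRAM'}\Te$ must begin with either $S'\redp{L,r}{}S$ or $S'\redp{R,1-r}{}\Omega$, and only the first branch can eventually reach $\Te$ (since $\Omega$ is irreducible and distinct from $\Te$). Hence
\[
\Pr(\GRAM') \;=\; r\cdot \Pr(\GRAM,S) + (1-r)\cdot 0 \;=\; r\cdot \Pr(\GRAM).
\]
Because $r>0$ and $\Pr(\GRAM)\in[0,1]$, this gives the two equivalences
\[
\Pr(\GRAM')\geq r \ \iff\ \Pr(\GRAM)=1, \qquad \Pr(\GRAM')=r \ \iff\ \Pr(\GRAM)=1.
\]
Thus $\GRAM\mapsto\GRAM'$ is a computable many-one reduction from $\set{\GRAM\in\PHORSSET_2\mid\Pr(\GRAM)=1}$ to each of $\set{\GRAM\in\PHORSSET_2\mid\Pr(\GRAM)\geq r}$ and $\set{\GRAM\in\PHORSSET_2\mid\Pr(\GRAM)=r}$, so by Theorem~\ref{th:undecidability-ast} both sets are $\Pi^0_1$-hard, yielding undecidability in both cases.

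There is essentially no obstacle here: the only things to check are that $\C{r}$ is a legal operator (which requires $r\in[0,1]\cap\mathbb{Q}$, given by assumption), that the order does not increase (it does not, since the new non-terminal has type $\T$), and the probability computation above, which is immediate from the definition of $\Pr$. The theorem is essentially a direct corollary of Theorem~\ref{th:undecidability-ast} via this scaling trick.
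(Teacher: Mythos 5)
Your proposal is correct and is essentially identical to the paper's own proof: the paper also adds a fresh start symbol $S'$ with the rule $S' = S\C{r}\Omega$, observes that $\Pr(\GRAM')\geq r$ iff $\Pr(\GRAM')=r$ iff $\Pr(\GRAM)=1$, and concludes by reduction from Theorem~\ref{th:undecidability-ast}. Your write-up is just a more detailed version of the same scaling argument.
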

\begin{proof}
Let \(\GRAM\) be an order-2 \pHORS{} with the start symbol \(S\).
Define \(\GRAM'\) as the \pHORS{} obtained by replacing the start symbol with
\(S'\) and adding the rules \(S'= S\C{r}\Omega\). 
Then \(\Pr(\GRAM')\geq r\) if and only if
\(\Pr(\GRAM')= r\) if and only if \(\Pr(\GRAM)= 1\).
Thus, the result follows from Theorem~\ref{th:undecidability-ast}. \hfill
\end{proof}

\iftwocol
\else
\fi

\begin{remark}
\label{rem:approximation}
Let us write \(\SETofPHORS_{\sim r}\) for the set of order-2 \pHORS{} \(\GRAM\) such that
\(\Pr(\GRAM)\sim r\) where \(\mathord{\sim} \in\set{<,\leq, =, \geq, >}\).
By Theorem~\ref{th:lower-bound} and Theorem~\ref{th:undecidability-ast}, we have: 
\begin{enumerate}[(i)]
\item For any rational number \(r\in [0,1]\), \(\SETofPHORS_{>r}\) is recursively enumerable
(or, belongs to \(\Sigma^0_1\)).
\item For any rational number \(r\in (0,1]\), \(\SETofPHORS_{\ge r}\) is \(\Pi^0_1\)-hard
(whereas \(\SETofPHORS_{\ge 0}\) is obviously recursive).
\item For any rational number \(r\in (0,1]\), \(\SETofPHORS_{=r}\) is \(\Pi^0_1\)-hard
(whereas \(\SETofPHORS_{=0}\) is recursive; recall Remark~\ref{rem:phors-vs-hors}).
\end{enumerate}
It is open whether the following propositions hold or not.
\begin{enumerate}[(i)]
\setcounter{enumi}{3}
\item \(\SETofPHORS_{<r}\) is recursively enumerable for every rational number \(r\).
\item \(\SETofPHORS_{\le r}\) is recursively enumerable for every rational number \(r\).
\item There exists an algorithm that takes an order-2 \pHORS{} \(\GRAM\) and 
a rational number \(\epsilon>0\) as inputs,
and returns a rational number \(r\) such that \(|\Pr(\GRAM)-r|<\epsilon\).
\end{enumerate}
\iffull
Statements (iv) and (vi) are equivalent. In fact, if (iv) is true, 
we can construct an algorithm for (vi) as follows. First, test whether \(\Pr(\GRAM)=0\) (which is decidable). 
If so,
output \(r=0\). Otherwise, pick a natural number \(m\) such that 
\(\frac{1}{m}<\frac{1}{2}\epsilon\), and 
divide the interval \((0,1+\frac{1}{2}\epsilon)\) to
\(m\) (overlapping) intervals 
\[
\begin{array}{l}
\left(0,\frac{1}{m}+\frac{1}{2}\epsilon\right),
\left(\frac{1}{m}, \frac{2}{m}+\frac{1}{2}\epsilon\right),\ldots,
\left(\frac{m-2}{m}, \frac{m-1}{m}+\frac{1}{2}\epsilon\right),
\iftwocol\\\fi
\left(\frac{m-1}{m}, 1+\frac{1}{2}\epsilon\right).
\end{array}
\]
By using procedures for (i) and (iv), 
one can enumerate all the order-2 \pHORSs{} whose termination probabilities
 belong to each interval.
Thus, \(\GRAM\) is eventually enumerated for one of the intervals 
\((\frac{i}{m},\frac{i+1}{m}+\frac{1}{2}\epsilon)\); one can then output \(\frac{i}{m}\) as \(r\).
Conversely, suppose that we have an algorithm for (vi). For each order-2 \pHORS{} \(\GRAM\),
repeatedly run the algorithm for 
\(\epsilon = \frac{1}{2}, \frac{1}{4},\frac{1}{8},\ldots\), and output \(\GRAM\) if the output \(r'\) for
\((\GRAM,\epsilon)\) satisfies \(r'+\epsilon<r\). Then, \(\GRAM\) is eventually output just if
\(\Pr(\GRAM)<r\) (note that if \(\Pr(\GRAM)<r\), then 
 \(\epsilon\) eventually becomes smaller than \(\frac{1}{2}(r-\Pr(\GRAM))\); at that point, the output \(r'\)
satisfies 
\(r'+\epsilon < (\Pr(\GRAM)+\epsilon)+\epsilon < r\)).

Proposition (v) implies (iv) (and hence also (vi)). If there is a procedure for (v), one can enumerate
all the elements of \(\SETofPHORS_{<r}\) by running the procedure for enumerating 
\(\SETofPHORS_{\le r-\epsilon}\) for \(\epsilon=\frac{1}{2}, \frac{1}{4},\frac{1}{8},\ldots\)
\else
Propositions (iv) and (vi) are equivalent, and
Proposition (v) implies (iv) (and hence also (vi))~\cite{KDG19LICSfull}.
\fi
\hfill \qed
\end{remark}

\begin{remark}
Table~\ref{tab:complexity} summarizes the hardness of termination problems in terms of
the arithmetical hierarchy for 
recursive Markov chains (RMC), \pHORS{}, and a probabilistic language whose underlying
(non-probabilistic) language is Turing-complete. 
The results for
RMC and the Turing-complete language come from \cite{Etessami09} and \cite{Kaminski18}.
As seen in the table, the results on \pHORS{} are not tight, except for the problem \(\Prob(\GRAM)>0\). Since the expressive power of \pHORS{} is between those of RMC and the Turing complete language,
the hardness of each problem is between those of the two models.
Theorem~\ref{th:undecidability} shows \(\Sigma^0_1\)-hardness of \(\Prob(\GRAM)<r\), but
we do not know yet whether the problem is \(\Sigma^0_1\)-complete or \(\Sigma^0_2\)-complete,
or lies between the two classes.
\end{remark}
\begin{table}
\caption{Hardness of the termination problems in terms of the arithmetical hierarchy.
For recursive sets (i.e. those in \(\Delta^0_1=\Sigma^0_1\cap \Pi^0_1\)), 
more precise computational complexities of the membership problems are given.
``\pHORS{}'' means order-\(k\) \pHORS{} where \(k\ge 2\).}
\label{tab:complexity}
\begin{tabular}{|l|l|l|l|l|}
\hline
\multicolumn{2}{|l|}{Models}  & \(\SETofPHORS_{>0}\) & \(\SETofPHORS_{>r}\) 
(\(r\in (0,1)\))& \(\SETofPHORS_{<r}\) (\(r\in (0,1]\))\\\hline
\multicolumn{2}{|l|}{RMC}  & 
P  & 
PSPACE & 
PSPACE\\
\hline
\pHORS{} & \textsl{Hardness} &
 \((k-1)\)-EXPTIME &
\((k-1)\)-EXPTIME  & \( \Sigma^0_1\)\\ \cline{2-5}
 &\textsl{Containment}  & 
\((k-1)\)-EXPTIME & \( \Sigma^0_1 \) & \( \Sigma^0_2\)\\
\hline
\multicolumn{2}{|l|}{Turing-complete language}  & \( \Sigma^0_1\)-complete  & \( \Sigma^0_1\)-complete  & \( \Sigma^0_2\)-complete \\
\hline
\end{tabular}
\end{table}

\begin{remark}
Theorem~\ref{th:undecidability-ast} implies that,
in contrast to the decidability of LTL model checking of recursive Markov
chains~\cite{DBLP:journals/fmsd/BrazdilEKK13,DBLP:journals/tocl/EtessamiY12}, 
the corresponding problem for order-2 \pHORS{} 
(of
computing the probability that an infinite transition sequence satisfies a given LTL property)
is undecidable
and there are even no precise approximation algorithms. 
Let us extend terms with events:
\[ t ::= \cdots \midd \evexp{a}t \]
where \(\evexp{a}{t}\) raises an event \(a\) and evaluates \(t\).
Consider the problem of, given an order-2 \pHORS{} \(\GRAM\),
computing the probability \(\Prob_{a^\omega}(\GRAM)\) that
\(a\) occurs infinitely often. Then there is no algorithm to
compute \(\Prob_{a^\omega}(\GRAM)\) with arbitrary precision, in
the sense of (vi) of Remark~\ref{rem:approximation}. To see this,
notice that
by parametric \(\GRAMAST{2}\) with \(\Te\),
we can define a nonterminal
\(F\COL\T\to\T\) such that \(F\;x\) almost surely reduces to \(x\) if and only if
there exist no \(n_1,\ldots,n_k\) such that \(P(n_1,\ldots,n_k)<Q(n_1,\ldots,n_k)\).
Consider the (extended) \pHORS{} \(\GRAM^{P,Q,a^\omega}\) whose start symbol
\(S\) is defined by \(S = \evexp{a}{F(S)}\). Then
\iftwocol
\[
\Prob_{a^\omega}(\GRAM^{P,Q,a^\omega})=\left\{\begin{array}{ll}
0 & \mbox{if there exists \(n_1,\ldots,n_k\)} \\ & \mbox{ such that \(P(n_1,\ldots,n_k)<Q(n_1,\ldots,n_k)\)}\\
1 & \mbox{otherwise}.
\end{array}
\right.
\]
\else
\[
\Prob_{a^\omega}(\GRAM^{P,Q,a^\omega})=\left\{\begin{array}{ll}
0 & \mbox{if there exists \(n_1,\ldots,n_k\)}  \mbox{ such that \(P(n_1,\ldots,n_k)<Q(n_1,\ldots,n_k)\)}\\
1 & \mbox{otherwise}.
\end{array}
\right.
\]
\fi
Thus, there is no algorithm to approximately
compute \(\Prob_{a^\omega}(\GRAM)\) even within the precision of
\(\epsilon=\frac{1}{2}\).
\hfill \qed
\end{remark}

\begin{remark}
The \pHORS{} \(\GRAMAST{2}\) obtained above satisfies the so called ``safety'' restriction~\cite{Knapik01TLCA,Salvati15OI}. Thus, based on the correspondence between safe grammars and pushdown systems~\cite{Knapik01TLCA}, the undecidability result above would also hold for probabilistic second-order pushdown systems
(without collapse operations~\cite{Hague08LICS}).
\end{remark}

\section{Fixpoint Characterization of Termination Probability}
\label{sec:fixpoint}
\label{SEC:FIXPOINT}
Although, as observed in the previous section, there is no general algorithm for
exactly computing the termination probability of \pHORS{}, 
there is still hope that we can \emph{approximately} compute the termination probability.
As a possible route towards this goal,
this section shows that the termination probability of any \pHORS{} $\GRAM$ can be characterized as
the least solution of fixpoint equations on higher-order
functions over \([0,1]\). As mentioned in Section~\ref{sec:intro},
the fixpoint characterization immediately yields a procedure for computing
lower-bounds of termination probabilities, and also serves as a justification
for the method for computing upper-bounds discussed in Section~\ref{sec:upperbound}.
We first introduce higher-order fixpoint equations in Section~\ref{sec:ho-fixpoint}.
We then characterize the termination probability
of an order-\(n\)
\pHORS{} in terms of fixpoint equations on order-\(n\)
functions over \([0,1]\) (Section \ref{sec:order-n-equation}), and
then improve the result by characterizing
the same probability in terms of order-(\(n-1\)) fixpoint equations
for the case \(n\geq 1\) (Section~\ref{sec:order-n-1-equation}).
The latter characterization can be seen as a generalization of the characterization
of termination probabilities of recursive Markov chains as polynomial equations~\cite{Etessami09},
which served as a key step in the analysis of recursive Markov chains (or probabilistic pushdown 
systems)~\cite{Etessami09,DBLP:journals/jacm/EtessamiY15,DBLP:journals/fmsd/BrazdilEKK13,DBLP:journals/jcss/BrazdilBFK14}.

\subsection{Higher-order Fixpoint Equations}
\label{sec:ho-fixpoint}
We define the syntax and semantics of fixpoint equations that are commonly
used in Sections~\ref{sec:order-n-equation} and \ref{sec:order-n-1-equation}.
We first define the syntax of fixpoint equations.
\iftwocol
A system \(\E\) of fixpoint equations  is a set of 
function definitions of the form 
\(f\,(\seq{x}_{1})\,\cdots\, (\seq{x}_{\ell}) = e\),
where the syntax of expressions \(e\) is given by
\(
  e  ::= r \midd x \midd f \midd e_1+e_2 \midd e_1\cdot e_2 \midd e_1e_2
\).
\else
\[
\begin{array}{l}
  \E \mbox{ (equations)} ::= \set{f_1\,(\seq{x}_{1,1})\,\cdots\, (\seq{x}_{1,\ell_1}) = e_1,
    \ldots, f_m\,(\seq{x}_{m,1})\,\cdots\, (\seq{x}_{m,\ell_m}) = e_m};\\
  e \mbox{ (expressions)} ::= r \midd x \midd f \midd e_1+e_2 \midd e_1\cdot e_2 \midd e_1e_2
  \mid (e_1,\ldots,e_k).
\end{array}
\]
\fi
Here,
\(r\) ranges over the set of real numbers in \([0,1]\),
and \((\seq{x})\) represents a tuple of variables \((x_1,\ldots,x_k)\).
In the set \(\E\) of equations,
we require that each function symbol 
occurs at most once on the lefthand side.
The expression \(e_1\cdot e_2\) represents the multiplication
of the values of \(e_1\) and \(e_2\), whereas \(e_1e_2\) represents a function application;
however, we sometimes omit \(\cdot\) when there is no confusion (e.g.,
we write \(0.5x\) for \(0.5\cdot x\)).
\iftwocol
Expressions must be well-typed under a simple type system;
as it is standard, we defer it to \iffull
Appendix~\ref{app:sec4}.
\else
the longer version~\cite{KDG19LICSfull}.
\fi
\else
Expressions must be well-typed under the type system
given in Figure~\ref{fig:typing-eq}. 
\fi
The \emph{order of a system of
  fixpoint equations}
\(\E\) is the largest order of the types of functions in \(\E\),
where the order of the type \(\realt\) of reals is \(0\),
and the order of a function type is defined analogously to the order of types for \pHORS{}
in Section~\ref{sec:problem}.

\begin{example}
\label{ex:eq}
The following is a system of order-2 fixpoint equations:
\[
\begin{array}{c}
\set{ f_1 = f_2\;f_3\;(0.5,0.5), 
f_2\;g\;(x_1,x_2) = g(x_1+x_2),
f_3\,x = 0.3\,x+0.7f_3(f_3\,x)}.
\end{array}
\]
It is well-typed under \(f_1\COL\realt,
f_2\COL(\realt\to\realt)\to(\realt\times\realt)\to\realt,
f_3\COL\realt\to\realt\). \hfill\qed
\end{example}

\iftwocol
\begin{figure*}
\else
\begin{figure}
\fi
\fbox{
\begin{minipage}{0.97\textwidth}
\small
\[ \tau \mbox{ (types)}::= \realt \mid \tau_1\to\tau_2 \mid \tau_1\times \cdots \times \tau_n.\]
\vspace{3pt}
$$
\infer{\Gamma\p r: \realt}{r\in \Reals}
\qquad\qquad
\infer{\Gamma\p e_1+e_2:\realt}{\Gamma\p e_1:\realt\andalso \Gamma\p e_2:\realt}
\qquad\qquad
\infer{\Gamma\p e_1\cdot e_2:\realt}{\Gamma\p e_1:\realt\andalso \Gamma\p e_2:\realt}
$$
\vspace{3pt}
$$
\infer{\Gamma\p x\COL\tau}{\Gamma(x)=\tau}
\qquad\quad
\infer{\Gamma\p e_1e_2:\tau}{\Gamma\p e_1:\tau_2\to\tau\andalso \Gamma\p e_2:\tau_2}
\qquad\quad
\infer{\Gamma\p (e_1,\ldots,e_k):\tau_1\times \cdots \times \tau_k}{\Gamma\p e_i:\tau_i\mbox{ for each $i\in\set{1,\ldots,k}$}}
$$          
\vspace{3pt}      
$$
\infer{\Gamma\p \set{f_i\,(\seq{x}_{i,1})\,\cdots\,(\seq{x}_{i,\ell_i})=e_i
                    \mid i\in\set{1,\ldots,m}}}
      {\Gamma, 
                      (\seq{x}_{i,1})\COL\tau_{i,1},\ldots,(\seq{x}_{i,\ell_i})\COL\tau_{i,\ell_ii}
                      \p e_i
& \Gamma(f_i)=\tau_{i,1}\to\cdots\to\tau_{i,\ell_i}\to\realt
\mbox{ (for each $i\in\set{1,\ldots,m}$)}}
$$                
\end{minipage}}
\caption{Type system for fixpoint equations, where \((x_1,\ldots,x_k)\COL \tau\) denotes
\(x_1\COL\tau_1,\ldots,x_k\COL\tau_k\) whenever \(\tau=\tau_1\times\cdots \times\tau_k\).}
\label{fig:typing-eq}                
\iftwocol
\end{figure*}
\else
\end{figure}
\fi

The semantics of fixpoint equations is defined in an obvious manner.
Let \(\realp\) be the set consisting of non-negative real numbers and \(\infty\).
We extend addition and multiplication by:
\(x+\infty=\infty+x=\infty\),  \(0\cdot \infty=\infty\cdot 0=0\), and
 \(x\cdot \infty=\infty\cdot x=\infty\) if \(x\neq 0\).
Note that \((\realp, \leq, 0)\) forms an \(\omega\)-cpo,
where \(\leq\) is the extension of the usual inequality on
reals with \(x\leq \infty\) for every \(x\in\realp\).
For each type \(\tau\), we interpret \(\tau\) as
the cpo \(\sem{\tau}=(X_\tau, \Leq_\tau, \bot_\tau)\), defined by
induction on \(\tau\): 
\begin{align*}
X_\realt &= \realp\\
\Leq_\realt &= \mathord\leq\\
\bot_\realt &= 0\\
X_{\tau_1\to\tau_2} &= \set{f\in X_{\tau_1}\to X_{\tau_2} \mid 
\mbox{ \(f\) is monotonic and \(\omega\)-continuous}}\\
 \Leq_{\tau_1\to\tau_2}&=\set{(f_1,f_2)\in X_{\tau_1\to\tau_2}\times
 X_{\tau_1\to\tau_2} \mid 
\forall x\in X_{\tau_1}. f_1(x)\Leq_{\tau_2} f_2(x)}\\\
 \bot_{\tau_1\to\tau_2} &= \metalambda x\in X_{\tau_1}.\bot_{\tau_2} \\ 
 X_{\tau_1\times\cdots\times \tau_k} &=X_{\tau_1}\times \cdots \times X_{\tau_k}\\
\Leq_{\tau_1\times\cdots\times \tau_k} &=
\set{((x_1,\ldots,x_k),(y_1,\ldots,y_k))\mid 
x_i\Leq_{\tau_i}y_i
\mbox{ for each $i\in\set{1,\ldots,k}$}}\\
\bot_{\tau_1\times\cdots\times \tau_k} &= (\bot_{\tau_1},\ldots,\bot_{\tau_k}).
\end{align*}
By abuse of notation, we often write \(\sem{\tau}\) also for \(X_\tau\).
We also often omit the subscript \(\tau\) and just write
\(\Leq\) and \(\bot\) for \(\Leq_\tau\) and \(\bot_\tau\) respectively.
The interpretation of base type \(\realt\) can actually be restricted to
\([0,1]\), but for technical convenience (to make the existence of
a fixpoint trivial) we have defined \(X_{\realt}\) as \(\realp\).

For a type environment \(\Gamma\), we write \(\sem{\Gamma}\) for
the set of functions that map each \(x\in\dom(\Gamma)\) to an element of
\(\sem{\Gamma(x)}\). Given \(\rho\in\sem{\Gamma}\) and 
\(e\) such that \(\Gamma\p e:\tau\), its semantics
\(\seme{e}{\rho}\in\sem{\tau}\) is defined by: 
\begin{align*}
\seme{r}{\rho}&=r\\
\seme{x}{\rho}&=\rho(x)\\
\seme{f}{\rho}&=\rho(f)\\
\seme{e_1+e_2}{\rho}&=\seme{e_1}{\rho}+\seme{e_2}{\rho}\\
\seme{e_1\cdot e_2}{\rho}&=\seme{e_1}{\rho}\cdot \seme{e_2}{\rho}\\
\seme{e_1e_2}{\rho}
&=(\seme{e_1}{\rho})(\seme{e_2}{\rho})\\
\seme{(e_1,\ldots,e_k)}{\rho}
&=(\seme{e_1}{\rho},\ldots,\seme{e_k}{\rho}).
\end{align*}
Given \(\E\) such that \(\Gamma\p \E\),
we write \(\rho_{\E}\) for the least solution of \(\E\),
i.e., the least \(\rho\in\sem{\Gamma}\) such that
\(\seme{f(\seq{x}_1)\cdots (\seq{x}_\ell)}{\rho\set{\seq{x}_{1}\mapsto \seq{\bf y}_{1},\ldots,
  \seq{x}_{\ell}\mapsto \seq{\bf y}_{\ell}}}=\seme{e}{\rho\set{\seq{x}_{1}\mapsto \seq{\bf y}_{1},\ldots,
  \seq{x}_{\ell}\mapsto \seq{\bf y}_{\ell}}}\) for
  every equation \(f(\seq{x}_1)\cdots (\seq{x}_\ell)=e\in\E\) and
\(  (\seq{\bf y}_1)\in \sem{\tau_1},\ldots,
(\seq{\bf y}_\ell)\in \sem{\tau_\ell}\)
with \(\Gamma(f)=\tau_1\to\cdots\to\tau_\ell\to\realt\).
Note that \(\rho_{\E}\) always exists, and is given by:
\(
\rho_{\E} = \LFP (\F_\E) = \bigsqcup_{i\in \omega}\F_\E^i(\bot_{\sem{\Gamma}})\),
where \(\F_\E\in \sem{\Gamma}\to \sem{\Gamma}\) is defined as the map
such that
\[\F_\E(\rho)(f) = 
  \metalambda (\seq{\bf y}_{1})\in
  \sem{\tau_1}.\ldots\metalambda(\seq{\bf y}_{\ell})\in\sem{\tau_\ell}.
  \seme{e}{\rho\set{\seq{x}_{1}\mapsto \seq{\bf y}_{1},\ldots,
  \seq{x}_{\ell}\mapsto \seq{\bf y}_{\ell}}}
  \]
  for each \(f(\seq{x}_1)\cdots (\seq{x}_\ell)=e\in \E\) with \(\Gamma(f)=\tau_1\to\cdots\to\tau_\ell\to\realt\).
  Note that \(\F_\E\) is 
  continuous in the \(\omega\)-cpo \(\sem{\Gamma}\).

\begin{example}
Let \(\E\) be the system of equations in Example~\ref{ex:eq}.
Then, \(\rho_{\E}\) is:
\[
\left\{f_1\mapsto \frac{3}{7}, f_2\mapsto \metalambda g\in \R\to\R.\metalambda (x_1,x_2)\in\R\times\R.g(x_1+x_2), f_3\mapsto \metalambda x\in\R.\frac{3}{7}x\right\}.
\tag*{\qed}\]
\end{example}  
\subsection{Order-\(n\) Fixpoint Characterization}
\label{sec:order-n-equation}
We now give a translation from an order-\(n\) \pHORS{} \(\GRAM\)
to a system of order-\(n\) fixpoint equations \(\E\), so that \(\Prob(\GRAM, S)=\rho_{\E}(S)\).
The translation is actually straightforward: we just need to replace
\(\Te\) and \(\Omega\) with the termination probabilities \(1\) and \(0\), and 
probabilistic choices with summation and multiplication of probabilities.
The translation function \(\TtoP{(\cdot)}\) is defined by:
\begin{align*}
\TtoP{(\NONTERMS,\RULES,S)} &= (\TtoP{\RULES}, S)\\
\RULES^\# &= \set{F\,\seq{x}=p\cdot (t_L)^\# +(1-p)\cdot (t_R)^\# 
  \mid \RULES(F)=\lambda\seq{x}.t_L\C{p}t_R}\\
\Te^\# &= 1 \qquad \Omega^\# = 0\qquad
x^\# = x \qquad (st)^\#=s^\# t^\#.
\end{align*}
We write \(\E_\GRAM\) for \(\RULES^\#\). 
We define the translation of types and type environments by:
\begin{align*}
  \T^\# &= \realt\\
  (\sty_1\to\sty_2)^\# &= \sty_1^\#\to \sty_2^\#\\
  (x_1\COL\sty_1,\ldots,x_n\COL\sty_n)^\# &=
   x_1\COL\sty_1^\#,\ldots, x_n\COL\sty_n^\#.
\end{align*}
\iftwocol\else
The following lemma states that the output of the translation is
well-typed. 
\begin{lemma}
  \label{lem:tr-n-wf}
  Let \(\GRAM=(\NONTERMS,\RULES,S)\) be an order-\(n\) \pHORS.
  Then \(\NONTERMS^\# \p \E_{\GRAM}\) and \(\NONTERMS^\#\p S:\realt\).
\end{lemma}
By the above lemma and the definition of the translation
of type environments, it follows that  for an order-\(n\) \pHORS{} \(\GRAM\),
the order of \(\E_{\GRAM}\) is also \(n\).
\fi
The following theorem states the correctness of the translation
(see \iffull Appendix~\ref{sec:proof-tr-n-1} \else \cite{KDG19LICSfull} \fi for a proof).

\begin{theorem}
\label{prop:order-k-fixpoint}
\label{PROP:ORDER-K-FIXPOINT}
  Let \(\GRAM\) be an order-\(n\) \pHORS.
  Then \(\Prob(\GRAM) = \rho_{\E_{\GRAM}}(S)\).
\end{theorem}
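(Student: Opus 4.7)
The plan is to prove the two inequalities $\Prob(\GRAM) \leq \rho_{\E_\GRAM}(S)$ and $\rho_{\E_\GRAM}(S) \leq \Prob(\GRAM)$ separately, by exploiting the Kleene characterization $\rho_{\E_\GRAM} = \Lub_i \F_{\E_\GRAM}^i(\bot)$ on the denotational side and the characterization $\Prob(\GRAM) = \sup_n \Prob(\GRAM, S, n)$ on the operational side. Before attacking either direction, I would establish a routine substitution lemma $\seme{(s[t/x])^\#}{\rho} = \seme{s^\#}{\rho\{x \mapsto \seme{t^\#}{\rho}\}}$ by structural induction on $s$, together with the observation that translation commutes with $\lambda$-abstraction so that $\RULES(F)^\#$ and the equation defining $F^\#$ in $\E_\GRAM$ agree.

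For the upper bound, I would generalize the statement and prove $\Prob(\GRAM, t, n) \leq \seme{t^\#}{\rho_{\E_\GRAM}}$ for every closed applicative term $t\COL\T$ and every $n \in \NN$, by induction on $n$. The base case is immediate since $\Prob(\GRAM, t, 0) > 0$ forces $t = \Te$, which translates to $1$. For the step, $t$ is either $\Te$ (trivial) or of the form $F\,t_1\,\cdots\,t_k$; in the latter case a one-step reduction yields $[\seq{t}/\seq{x}]t_L$ with probability $p$ and $[\seq{t}/\seq{x}]t_R$ with probability $1-p$, so combining the induction hypothesis with the substitution lemma and the fact that $\rho_{\E_\GRAM}$ satisfies the equation for $F^\#$ closes the step. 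Taking $n \to \infty$ gives $\Prob(\GRAM, S) \leq \rho_{\E_\GRAM}(S)$.

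For the lower bound, I would introduce a type-indexed logical relation $R^{\sty}$ between semantic values in $\sem{\sty^\#}$ and closed applicative terms of type $\sty$: at base type, $R^{\T}(r, t)$ iff $r \leq \Prob(\GRAM, t)$, and at arrow type, $R^{\sty_1 \to \sty_2}(f, s)$ iff $R^{\sty_2}(f(a), s\,u)$ whenever $R^{\sty_1}(a, u)$. I would then show by induction on $i$ that $R^{\NONTERMS(F)}(\F_{\E_\GRAM}^i(\bot)(F), F)$ for every non-terminal $F$. The case $i=0$ holds because $\bot$ is zero everywhere at base type. For the step, given arguments related to closed terms by $R^{\sty_j}$, one unfolds the equation for $F^\#$ to get $p \cdot \seme{t_L^\#}{\rho^{(i)}\{\seq{x} \mapsto \seq{a}\}} + (1-p)\cdot \seme{t_R^\#}{\rho^{(i)}\{\seq{x} \mapsto \seq{a}\}}$, and on the operational side one uses $\Prob(\GRAM, F\,\seq{t}) = p\cdot \Prob(\GRAM, [\seq{t}/\seq{x}]t_L) + (1-p)\cdot \Prob(\GRAM, [\seq{t}/\seq{x}]t_R)$; the induction hypothesis on the sub-terms $t_L^\#$ and $t_R^\#$ at type $\T$, combined with the substitution lemma, then matches the two sides.

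The principal obstacle is admissibility of the logical relation, namely the fact that $R^{\sty}$ is closed under the directed suprema computed componentwise in $\sem{\sty^\#}$. At base type this is the elementary continuity of $\leq$ on $\realp$; the lift to higher types is structural but requires care with the order-theoretic structure of function spaces. Once admissibility is in hand, I can pass to the limit and deduce $R^{\T}(\rho_{\E_\GRAM}(S), S)$, which unfolds to $\rho_{\E_\GRAM}(S) \leq \Prob(\GRAM)$; combined with the upper bound, this yields the equality.
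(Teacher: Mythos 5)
Your proposal is correct, but it takes a genuinely different route from the paper. The paper's proof does not split the equality into two inequalities: it first proves the \emph{exact} identity $\Prob(\GRAM,t)=\seme{t^\#}{\rho}$ for \emph{recursion-free} \pHORS{} by induction on the length of the longest reduction sequence (available by strong normalization of the simply-typed calculus plus K\"onig's lemma), and then transfers the result to arbitrary $\GRAM$ via the syntactic finite approximations $\GRAM^{(k)}$ obtained by indexing and unfolding non-terminals $k$ times, showing $\Prob(\GRAM)=\bigsqcup_k\Prob(\GRAM^{(k)})$ and $\F_{\E_\GRAM}^k(\bot)(F)=\LFP(\F_{\E_{\GRAM^{(k)}}})(F^{(k)})$. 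You instead truncate on each side separately: the operational side by the step-bounded probabilities $\Prob(\GRAM,t,n)$ for the inequality $\Prob(\GRAM)\leq\rho_{\E_\GRAM}(S)$, and the denotational side by the Kleene iterates $\F^i(\bot)$ together with a type-indexed logical relation and an admissibility argument for the converse. Your upper-bound direction is more elementary than anything in the paper's proof; your lower-bound direction is heavier, since the logical relation, its fundamental lemma (which, note, is a separate structural induction on open terms rather than merely ``the induction hypothesis on $t_L^\#$ and $t_R^\#$'' --- you should state it explicitly), and admissibility at higher types all have to be set up. What the paper's choice buys is reuse: the same finite-approximation machinery $\GRAM^{(k)}$ is deployed again for the harder order-$(n-1)$ characterization (Theorem~\ref{prop:order-k-1-fixpoint}), whereas your argument is self-contained for this theorem but would not amortize. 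Both proofs rely on the same substitution lemma and on the decomposition $\Prob(\GRAM,F\,\seq{t})=p\cdot\Prob(\GRAM,[\seq{t}/\seq{x}]t_L)+(1-p)\cdot\Prob(\GRAM,[\seq{t}/\seq{x}]t_R)$, so the core computation is shared.
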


\commentout{
Given a type environment $\STE$, a function $\rho$ is said to be
a \emph{quantitative assignment} for $\STE$ if for every variable $x
\in\dom(\STE)$, it
holds that $\rho(x)\in \TtoP{\STE(x)}$. We write \(\TtoP{\STE}\)
for the set of quantitative assignments for \(\STE\).
Given an applicative
term $t$ such that $\STE \p t:\sty$ and 
\(\rho\in \TtoP{\STE}\), the \emph{quantitative value} for $t$, written
 $\qv{t}{\rho}$, is defined as follows:
$$
\qv{x}{\rho}=\rho(x)\qquad\quad
\qv{\Te}{\rho}=1\qquad\quad
\qv{\Omega}{\rho}=0\qquad\quad
\qv{\left(st\right)}{\rho}=\qv{s}{\rho}(\qv{t}{\rho})
$$
It follows by trivial induction on terms that \(\STE\p t:\sty\) and \(\rho\in \TtoP{\STE}\) implies \(\qv{t}{\rho}\in \TtoP{\sty}\).

Let $\GRAM$ be an order-\(n\) \pHORS{} $(\NONTERMS,\RULES,S)$
where the nonterminals are $F_1,\ldots,F_m$, with $F_1=S$, 
and where $\RULES$ consists of the following rules:
$$
F_1 =\;\lambda\seq{x_1}.t_{1,L}\C{p_1} t_{1,R}
\qquad\cdots\qquad
F_m =\;\lambda\seq{x_m}.t_{m,L}\C{p_m} t_{m,R}
$$

We define a functional \(\F_{\GRAM}
\in \TtoP{\NONTERMS}\to\TtoP{\NONTERMS}\)
 as:
\[
\begin{array}{c}
\metalambda \rho\in \TtoP{\NONTERMS}.\\\qquad
\{F_1\mapsto \metalambda\seq{\bf x_1}. p_1\cdot\qv{(t_{1,L})}{\envsubst{\rho}{\seq{x_1}}{\seq{\bf x_1}}}+(1-p_1)\cdot\qv{(t_{1,R})}{\envsubst{\rho}{\seq{x_1}}{\seq{\bf x_1}}},\\\qquad\ 
 \cdots,\\\qquad\ 
F_m\mapsto \metalambda\seq{\bf x_m}. p_m\cdot\qv{(t_{m,L})}{\envsubst{\rho}{\seq{x_m}}{\seq{\bf x_m}}}+(1-p_m)\cdot\qv{(t_{m,R})}{\envsubst{\rho}{\seq{x_m}}{\seq{\bf x_m}}}\}.
\end{array}
\]
Because \(\F_{\GRAM}\) is monotonic and \(\omega\)-continuous,
the least fixpoint \(\LFP(\F_{\GRAM})\) exists, which is given by:
\(\bigsqcup_{i\in\omega}  \F_{\GRAM}^i(\bot)\);
we write \(\rho_{\GRAM}\) for it.
We can also view \((\rho_{\GRAM}(F_1),\ldots,\rho_{\GRAM}(F_m))\)
as the least solution of the following system of
 equations \(E_{\GRAM}\):
 \[f_1 = \F_{\GRAM}(\set{F_1\mapsto f_1,\ldots,F_m\mapsto f_m})(F_1)
 \qquad \cdots \qquad
 f_m = \F_{\GRAM}(\set{F_1\mapsto f_1,\ldots,F_m\mapsto f_m})(F_m).\]
}

\begin{example}
\label{ex:order-n-equation}
Recall \(\GRAM_1=(\NONTERMS_1, \RULES_1, S)\) from Example~\ref{ex:random-walk}:
\begin{align*}
\NONTERMS_1&=\set{S\mapsto \T, F\mapsto \T\to\T};\\
\RULES_1&=\set{S\ =\ F\,\Te\C{1}\Omega,\quad
F\,x\ =\ x \C{p} F(F\,x)}.
\end{align*}
\(\NONTERMS_1^\# = \set{S\mapsto \R, F\mapsto \R\to\R}\), and
\(\E_{\GRAM_1}\) consists of: 
\(
S = 1\cdot F(1)\) and 
\( F\,x = p\cdot x +(1-p)\cdot F(F\,x)\).
The least solution \(\rho_{\E_{\GRAM_1}}\) is
\iftwocol
\(S=\frac{p}{1-p}\) and \(F=\metalambda{\bf x}.
\frac{p}{1-p}\cdot{\bf x}\) if \(0\leq p<\frac{1}{2}\),
and \(S=1\) and \(F=\metalambda{\bf x}.{\bf x}\)
if \(\frac{1}{2}\leq p\leq 1\).
\else
\[
\begin{array}{l}
S = \left\{\begin{array}{ll}
 \frac{p}{1-p} & \mbox{if $0\leq p<\frac{1}{2}$}\\
 1 & \mbox{if $\frac{1}{2}\leq p\leq 1$}
\end{array}\right.
\qquad
F=\metalambda{\bf x}.
\left\{\begin{array}{ll}
 \frac{p}{1-p}\cdot{\bf x} & \mbox{if $0\leq p<\frac{1}{2}$}\\
 {\bf x} & \mbox{if $\frac{1}{2}\leq p\leq 1$.}
\end{array}\right.
\end{array}
\tag*{\qed}
\]
\fi
\end{example}

\begin{example}
\label{ex:equation-listgen}
Recall \(\GRAM_3\) from Example~\ref{ex:listgen}:
\[
\begin{array}{l}
S = \listgen\;(\listgen\,\boolgen)\;\Te\quad
 \boolgen\;k = k\quad
\listgen\;f\;k = k\C{\frac{1}{2}}(f(\listgen\,f\,k)).
\end{array}
\]
The corresponding fixpoint equations are:
\begin{align*}
S &= \listgen\;(\listgen\,\boolgen)\;1\\
 \boolgen\;k &= k\\
\listgen\;f\;k &= \frac{1}{2}k+\frac{1}{2}(f(\listgen\,f\,k)).
\end{align*}
By specializing \(\listgen\) for the cases \(f=\listgen\,\boolgen\)
and \(f=\boolgen\), we obtain:
\begin{align*}
S &= \listlistgen\;1\\
\boolgen\;k &= k\\
\listlistgen\;k &= \frac{1}{2}k+\frac{1}{2}(\listboolgen(\listlistgen\,k))\\
\listboolgen\;k &= \frac{1}{2}k+\frac{1}{2}(\boolgen(\listboolgen\,k)).
\end{align*}
The least solution is:
\[ S=1\quad \boolgen\;k=\listlistgen\;k=\listboolgen\;k=k.
  \tag*{\qed}
\]
\end{example}



\subsection{Order-(\(n-1\)) Fixpoint Characterization}
\label{sec:order-n-1-equation}
\label{SEC:ORDER-N-1-EQUATION}
We now characterize the termination probability of order-\(n\) \pHORS{} (where
\(n>0\)) in terms of
order-(\(n-1\)) equations, so that the fixpoint equations are easier
to solve.  When \(n=1\), the characterization yields polynomial
equations on probabilities; thus the result below may be considered as a
generalization of the now classic result on the reachability problem for recursive
Markov chains~\cite{Etessami09}.

The basic observation (that is also behind the fixpoint characterization for
recursive Markov chains~\cite{Etessami09}) is that
the termination behavior of an order-1 function of type
\(\T^\ell\to\T\) can be represented by a tuple of \emph{probabilities}
\((p_0,p_1,\ldots,p_\ell)\), where (i) \(p_0\) is the probability that
the function terminates \emph{without} using any of its arguments, and (ii) \(p_i\)
is the probability that the function uses the \(i\)-th argument.
To see why, consider a term \(f\,t_1\,\cdots\,t_\ell\) of type \(\T\),
where \(f\) is  an order-1 function of type \(\T^\ell\to\T\).
In order for \(f\,t_1\,\cdots\,t_\ell\) to terminate, the only possibilities
are: (i) \(f\) terminates without calling any of the arguments,
or (ii) \(f\) calls \(t_i\) for some \(i\in\set{1,\ldots,\ell}\), and
\(t_i\) terminates (notice, in this case, that none of the other \(t_j\)'s
are called: since \(t_i\) is of type \(\T\), once
\(t_i\) is called from \(f\), the control cannot go back to \(f\)). Thus, the probability that \(f\,t_1\,\cdots\,t_\ell\) terminates
can be calculated by \(p_0+p_1q_1+\cdots p_\ell q_\ell\), where
each \(q_i\) denotes the probability that \(t_i\) terminates. 
The termination probability is, therefore, independent of
the precise internal behavior of \(f\); only \((p_0,p_1,\ldots,p_\ell)\)
matters. Thus, information about an order-1 function can be represented as
a tuple of real numbers, which is order 0. By generalizing this observation,
we can represent information about an order-\(n\) function
as an order-(\(n-1\)) function on (tuples of) real numbers. Since the general
translation is quite subtle and requires a further insight, however, 
let us first confirm the above idea by revisiting Example~\ref{ex:random-walk}.

\begin{example}
\label{ex:random-walk-eq}
Recall \(\GRAM_1\) from Example~\ref{ex:random-walk},
consisting of:
\(S = F\;\Te\) and 
\(F\,x = x \C{p} F(F\,x)\).
Here, we have two functions: \(S\) of type \(\T\) and \(F\) of type \(\T\to\T\).
Based on the observation above, their behaviors can be represented by
\(S_0\) and \((F_0,F_1)\) respectively, where \(S_0\) (\(F_0\), resp.)
 denotes the probability
 that \(S\) (\(F\), resp.) terminates, and 
 \(F_1\) represents the probability that \(F\) uses the argument.
 Those values are obtained as the least solutions for the following
system of equations.
\begin{align*}
S_0 &= F_0 + F_1\cdot 1\\
F_0 &= p\cdot 0 + (1-p)(F_0 + F_1\cdot F_0)\\
F_1 &= p\cdot 1 + (1-p)(F_1\cdot F_1\cdot 1).
\end{align*}
To understand the last equation, note that the possibilities that
\(x\) is used are: (i) \(F\) chooses the left branch (with probability \(p\)) and
then uses \(x\) with probability \(1\), or (ii) \(F\) chooses the right branch
(with probability \(1-p\)),
 the outer call of \(F\) uses the argument \(F\,x\) (with probability \(F_1\)),
and the inner call of \(F\) uses the argument \(x\).
By simplifying the equations, we obtain:
\begin{align*}
S_0 &= F_0+F_1\\
F_0 &= (1-p)(F_0+F_1F_0) \\
F_1 &= p+(1-p)F_1^2.
\end{align*}
The least solution is the following:
\[
F_0=0\qquad
S_0=F_1 = \left\{\begin{array}{ll} \frac{p}{1-p} & \mbox{if $0\leq p<\frac{1}{2}$}\\
1 & \mbox{if $\frac{1}{2}\leq p\leq 1$.}
\end{array}\right.
\tag*{\qed}
\]
\end{example}

The translation for general orders is more involved.
For technical convenience in formalizing the translation,
we assume below that the rules of \pHORS{} do not contain \(\Te\);
instead, the start symbol \(S\) (which is now a non-terminal
of type \(\T\to\T\)) takes \(\Te\) from the environment.
Thus, the termination probability we consider is \(\Prob(\GRAM,S\,\Te)\),
where \(\Te\) does not occur in \(\RULES\). This is without any loss of generality,
since \(\Te\) can be passed around as an argument without increasing the order
of the underlying \pHORS{}, if it is higher than $0$.

To see how we can generalize the idea above to deal with higher-order functions,
let us now consider the following example of an order-2 \pHORS{}:
\begin{align*}
S\, x &= F\,(H\,x)\,x \\
F\,f\,y &= f(f\,y)\\
H\,x\,y &= x\C{\frac{1}{2}}(y\C{\frac{1}{2}}\Omega).
\end{align*}
Suppose we wish to characterize the termination probability of \(S\;\Te\), i.e.,
the probability that \(S\) uses the first argument. (In this particular case,
one can easily compute the termination probability by unfolding all the functions,
but we wish to find a compositional translation which works well in presence of
recursion.) We need to compute the probability that \(F\;(H\;x)\;x\) reaches (i.e.,
reduces to) \(x\), which is the probability \(p_1\) that \(F\;(H\;x^{(1)})\;x^{(2)}\) reaches \(x^{(1)}\), 
plus the probability \(p_2\) that \(F\;(H\;x^{(1)})\;x^{(2)}\) reaches \(x^{(2)}\);
we have added annotations to distinguish between the two occurrences of \(x\). 
 What information on \(F\) is required for computing it? To compute \(p_2\), we need to obtain the
probability that \(F\) uses the formal argument \(y\). Since it depends on \(f\), we represent
it as a function \(F_1\) defined 
\iftwocol
by
\( F_1\; f_1 = f_1\cdot f_1\).
\else
by:
\[ F_1\; f_1 = f_1\cdot f_1.\]
\fi
Here, \(f_1\) represents the probability that the original argument \(f\) uses its first argument.
We can thus represent \(p_2\) as \(F_1(H_2)\), where
\(H_2\) is \(\frac{1}{4}\), the probability that
\(f=H\,x\) uses the first argument,
i.e., the probability that
\(H\) uses the second argument.
Now let us consider how to represent \(p_1\), the probability that \(F\,(H\,x^{(1)})\,x^{(2)}\) 
reaches \(x^{(1)}\). We construct another function \(F_0\) from the definition of \(F\) for this purpose.
A challenge is that the variable \(x\) is not visible in (the definition of) \(F\); only
the caller of \(F\) knows the reachability target \(x\).
Thus, we pass to \(F_0\), in addition to \(f_1\) above, 
another argument \(f_0\), which represents the probability that the argument \(f\) 
reaches the current target (which is \(x\) in this case). 
Therefore, \(p_1\) is represented as \(F_0\,(H_1,H_2)\), where
\iftwocol
\( F_0\,(f_0,f_1) = f_0+f_1\cdot f_0\) and \(H_1=\frac{1}{2}\).
\else
\[ F_0\,(f_0,f_1) = f_0+f_1\cdot f_0 \qquad H_1=\frac{1}{2}.\]
\fi
In \(f_0+f_1\cdot f_0\),
the occurrence of \(f_0\) on the lefthand side 
represents the probability that the outer call
of \(f\) in \(f(f\,x)\) reaches the target
(without using \((f\,x)\)),
and \(f_1\cdot f_0\) represents the probability
that the outer call of \(f\) uses the argument
\(f\,x\), and then the inner call of \(f\) reaches
the target.
Now, the whole probability that \(S\) uses its argument is represented as \(S_1\), where
\[ S_1 = F_0(H_1,H_2) + F_1(H_2),\]
with the functions \(F_0, F_1, H_1\) and \(H_2\) being as defined above.
Note that the order of the resulting equations is one.
In summary, as information about an order-1 argument \(f\) of arity \(k\), we pass around
a tuple of real numbers \((f_0,f_1,\ldots,f_k)\) where \(f_i\; (i>0)\) represents the probability that
the \(i\)-th argument is reached, and \(f_0\) represents the probability that the ``current target''
 (which is chosen by a caller) is reached.

A further twist is required in the case of order-3 or higher. 
Consider an order-3 function \(G\) defined by:
\[ G\;h\;z = h\,(H\,z)\,z\]
where \(G\COL ((\T\to\T)\to\T\to\T)\to\T\to\T\), and \(H\) is as defined above.
Following the definition of \(F_1\)
above, one may be tempted to define \(G_1\) (for computing the reachability probability
to \(z\)) as
\( G_1\;h_1 = \cdots\),
where
\(h_1\) is a function to be used for
computing
the probability that \(h\) uses its order-0 argument.
However, \(h_1\)
is not sufficient for computing
the reachability probability to \(z\);
passing the reachability
probability to the
\emph{current} target (like \(f_0\) above)
does not help either, since a caller of \(G\) does not know the current target \(z\).
We thus need to add an additional argument
\(h_2\) for computing the probability to
a target that is yet to be set
by a caller of \(h\).
Thus, the definition of \(G_1\) is:\footnote{
For the sake of simplicity,
the following translation slightly deviates from
the general translation defined later.}
\[ G_1\;(h_1, h_2) = h_2(H_1,H_2)+h_1(1).\]
Here,
 \(h_2(H_1,H_2)\) and \(h_1(1)\)
 respectively represent
the probabilities that \(G\,(H\,z^{(1)})\,z^{(2)}\)
reaches \(z^{(1)}\), and \(z^{(2)}\).
The first argument of \(h_2\) (i.e., \(H_1\))
represents the probability that \(H\,z\) reaches \(z\),
and the second argument of \(h_2\) (i.e., \(H_2\)) represents
the probability that \(H\,z\) reaches its argument (the second argument of \(H\)).

We can now formalize the general translation based on the intuitions above.
We often write \(\sty_1\to\cdots\to\sty_k\To \T^\ell\to\T\) for
\(\sty_1\to\cdots\to\sty_k\to \T^\ell\to\T\) when either \(\order(\sty_k)>0\) 
or \(k=0\). We define \(\arity(\sty)\) as the number of the last order-0 arguments,
i.e., \(\arity(\sty_1\to\cdots\to\sty_k\To \T^\ell\to\T)=\ell\).

Given a rule \(F\,z_1\,\ldots\, z_m=t_L\C{p}t_R\) of \pHORS{},
we uniquely decompose \(z_1,\ldots,z_m\) into two (possibly empty) subsequences
\(z_1,\ldots,z_\ell\) and \(z_{\ell+1},\ldots,z_{m}\) so that the order of \(z_\ell\) is
greater than \(0\) if \(\ell>0\) (note, however, that the orders of \(z_1,\ldots,z_{\ell-1}\) may be \(0\)),
 and \(z_{\ell+1},\ldots,z_{m}\) are order-0 variables
(in other words, \(z_{\ell+1},\ldots,z_{m}\)  is the maximal postfix of
\(z_1,\ldots,z_m\) consisting of only order-0 variables).
Since (the last consecutive occurrences of) order-0 arguments will be treated in a special manner, 
as a notational convenience, when we write
\(F\;\seq{y}\;\seq{x}=t_L\C{p}t_R\) for a rule of \pHORS{},
we implicitly assume that \(\seq{x}\) is the maximal postfix of
the sequence \(\seq{y}\;\seq{x}\) consisting of only order-0 variables.
Similarly, when we write \(F\,\seq{s}\,\seq{t}\) for a fully-applied term (of order 0),
we implicitly assume that \(\seq{t}\) is the maximal postfix of
the sequence of arguments, consisting of only order-0 terms.

Consider a function definition of the form:
\[ F\;y_1\;\cdots\;y_m\; x_1\;\cdots\;x_k=t_L\C{p}t_R\]
where (following the notational convention above) the sequence \(x_1,\ldots,x_k\) 
is the maximal postfix of \(y_1,\ldots,y_m,x_1,\ldots,x_k\) consisting of
only order-0 variables.
We transform each subterm \(t\) of the righthand side \(t_L\C{p}t_R\) by using
the translation relation of the form:
\[
\STE;x_1,\ldots,x_k \pN t: \sty \tr 
(e_0,e_1,\ldots,e_{\ell+k+1})
\]
where \(\NONTERMS\) and \(\STE\) are type environments for the underlying non-terminals and
\(y_1,\ldots,y_m\) respectively, and \(\sty\) is the type of \(t\) with \(\arity(\sty)=\ell\).
We often omit the subscript \(\NONTERMS\).
The output of the translation, 
\((e_0,e_1,\ldots,e_{\ell+k+1})\), can be interpreted as capturing the following information.
\begin{itemize}
\item \(e_0\):
the reachability probability (or a function that
returns the probability, given appropriate
arguments; similarly for the other
\(e_i\)'s below)
to the current target (set by a caller
of \(F\)).
\item \(e_{i}\; (i\in\set{1,\ldots,\ell})\):
the reachability
probability to \(t\)'s \(i\)-th order-0 argument.
\item \(e_{\ell+i}\; (i\in\set{1,\ldots,k})\):
the reachability probability to
\(x_i\).
\item \(e_{\ell+k+1}\): the reachability
probability to
a ``fresh'' target
(that can be set by a caller of \(t\));
this is the component that should be
passed as \(h_2\) in the discussion above.
In a sense,
this component represents 
the reachability probability to
a variable \(x_{k+1}\) that is
``fresh'' for \(t\)
(in that it does not occur in
\(t\)).
\end{itemize}
In the translation, each variable \(y\)
(including non-terminals) of type
\(\seq{\sty}\To \T^m\to\T\) is replaced by
\((y_0,y_1,\ldots,y_m,y_{m+1})\),
which represents information
 analogous to
 \((e_0,e_1,\ldots,e_\ell,e_{\ell+k+1})\):
\(y_0\) represents (a function for computing) the reachability
 probability to the current target, \(y_i\; (i\in\set{1,\ldots,m})\)
 represents the reachability probability to the \(i\)-th order-0
 argument (among the last \(m\) argument), and \(y_{m+1}\)
(which corresponds to \(h_2\) in the explanation above)
 represents the reachability probability to a fresh target (to be set later).
In contrast, the variables \(x_1,\ldots,x_k\) will be removed
 by the translation.

The translation rules are given in Figure~\ref{fig:tr-n-1}.
In the rules, to clarify the correspondence between source terms
and target expressions, we use metavariables \(s,t,\ldots\) (with subscripts)
also for target expressions (instead of \(e\)). 
We write \(e^k\) for the \(k\) repetitions of \(e\).

\iftwocol
\begin{figure*}[tbp]
\else
\begin{figure}[tbp]
\fi
\footnotesize
\fbox{
\begin{minipage}{.97\textwidth}
\typicallabel{Tr-AppG}
\infrule[Tr-Omega]{}{\STE; x_1,\ldots,x_k\pN \Omega:
  \T\tr 
  (0^{k+2})}
\vspace{0.25cm}
\infrule[Tr-GVar]{}{\STE; x_1,\ldots,x_k\pN x_i:\T\tr
  (0^{i},1,0^{k-i+1})}

\vspace{0.25cm}

\infrule[Tr-Var]{\STE(y)=\seq{\sty}\To\T^\ell\to\T}
        {\STE; x_1,\ldots,x_k\pN y:\seq{\sty}\To\T^\ell\to\T \tr
          (y_0,y_1,\ldots,y_\ell,(y_{\ell+1})^{k+1})}

\vspace{0.25cm}
\infrule[Tr-NT]{\NONTERMS(F)=\seq{\sty}\To\T^\ell\to\T}
        {\STE; x_1,\ldots,x_k\pN F:\seq{\sty}\To\T^\ell\to\T \tr
          (F_0,F_1,\ldots,F_\ell,(F_{0})^{k+1})}
\vspace{0.25cm}

\infrule[Tr-App]{\STE;x_1,\ldots,x_k\pN s:\sty_1\to\seq{\sty}\To\T^\ell\to\T\tr (s_0,\ldots,s_{\ell+k+1})\\
 \STE;x_1,\ldots,x_k\pN t:\sty_1\tr (t_0,\ldots,t_{\ell'+k+1})\andalso
\arity(\sty_1) = \ell'}
{\STE;\seq{x}\pN st:\seq{\sty}\To\T^\ell\to\T 
  \tr (s_0(t_0,\ldots,t_{\ell'},t_{\ell'+k+1}),
  s_1(t_1,\ldots,t_{\ell'},t_{\ell'+k+1}), \ldots, s_\ell(t_1,\ldots,t_{\ell'},t_{\ell'+k+1}), \\
\hfill 
s_{\ell+1}(t_{\ell'+1},t_1,\ldots,t_{\ell'},t_{\ell'+k+1})
\ldots, s_{\ell+k+1}(t_{\ell'+k+1}, t_1,\ldots,t_{\ell'},t_{\ell'+k+1}))}

\vspace{0.25cm}

\infrule[Tr-AppG]{\STE;x_1,\ldots,x_k\pN s:\T^{\ell+1}\to\T\tr (s_0,\ldots,s_{k+\ell+2})\andalso
\STE;x_1,\ldots,x_k\pN t:\T\tr (t_0,\ldots,t_{k+1})}
{\STE; x_1,\ldots,x_k\pN st:\T^\ell\to\T\tr
(s_0+s_1\cdot t_0,s_2,\ldots,s_{\ell+1},s_{\ell+2}+s_1\cdot t_1,\ldots,s_{\ell+k+2}+s_1\cdot t_{k+1})
}

\vspace{0.25cm}

\infrule[Tr-Rule]{
y_1\COL\sty_1,\ldots,y_\ell\COL\sty_\ell;x_1,\ldots,x_k \pN t_d:\T\tr (t_{d,0},\ldots,t_{d,k+1})\mbox{ for each $d\in\set{L,R}$}\\
  \seq{y_i} = (y_{i,0},\ldots,y_{i,\arity(\sty_i)+1})\andalso
  \seq{y_i}' = (y_{i,1},\ldots,y_{i,\arity(\sty_i)+1})
}
        {\NONTERMS\p (F\,\seq{y}\;x_1\,\cdots\,x_k = t_L\C{p}t_R)\tr\\\qquad
          \set{F_i\,\seq{y_1}'\,\cdots\,\seq{y_\ell}'=p t_{L,i} + (1-p)t_{R,i} \mid i\in\set{1,\ldots,k}}
 \cup \set{F_0\,\seq{y_1}\,\cdots\,\seq{y_\ell}=p t_{L,0} + (1-p)t_{R,0}
}}
  
\vspace{0.25cm}  
  
\infrule[Tr-Gram]{
  \E = \bigcup \set{\E_i \mid \NONTERMS\p
    (F\,\seq{y}\;x_1\,\cdots\,x_k = t_L\C{p}t_R)\tr \E_i,
    (F\,\seq{y}\;x_1\,\cdots\,x_k = t_L\C{p}t_R)\in \RULES}
}{(\NONTERMS,\RULES,S)\tr (\E,S_1)}
\end{minipage}}
\caption{Translation rules for the order-\((n-1)\) fixpoint characterization}
\vspace{-0.2cm}
\label{fig:tr-n-1}
\iftwocol
\end{figure*}
\else
\end{figure}
\fi

We now explain the translation rules.
In rule \rn{Tr-Omega} for the constant \(\Omega\),
all the components are \(0\) because \(\Omega\) represents divergence.
There is no rule for \(\Te\);
this is due to the assumption that \(\Te\) never occurs in the rules.
Rule \rn{Tr-GVar} is for order-0 variables, for which only one component is 1 and all
the others are 0. The (\(i+1\))-th component is \(1\),
because it represents the probability that \(x_i\) is reached.
In rule \rn{Tr-Var} for variables, the first \(\ell+1\) components are provided by
the environment. Since \(y\) (that is provided by the environment) does not ``know''
the local variables \(x_1,\ldots,x_k\)
(in other words, \(y\) cannot be instantiated to a term that contains \(x_i\)),
 the default parameter \(y_{\ell+1}\)
(for computing the reachability probability to a ``fresh'' target) is used for
all of those components. 
The rule \rn{Tr-NT} for non-terminals is almost the same as \rn{Tr-Var}, except that
\(F_0\) is used instead of \(F_{\ell+1}\).
This is because \(F\) does not contain any free variables;
the reachability target for \(F\) is not set yet, hence \(F_0\) can be used for computing the reachability
probability to a fresh target.
Rule \rn{Tr-App} is for applications. Basically,
the output of the translation of \(t\) is passed to \(s_i\); note however that
\(t_0\) is passed only to \(s_0\);  since \(s_1,\ldots,s_{\ell+k}\) should provide the reachability
probability to order-0 arguments of \(s\) or local variables,
the reachability probability to the
current target (that is represented by \(t_0\)) is irrelevant for them.
For \(s_{\ell+1},\ldots,s_{\ell+k}\), the reachability targets are \(x_1,\ldots,x_k\);
 thus, information about how \(t\) reaches those variables
is passed as the first argument of \(s_{\ell+1},\ldots,s_{\ell+k}\). For the last component,
\(s_{\ell+k+1}\) and \(t_{\ell'+k+1}\) are used so that 
the reachability target can be set later.
In rule \rn{Tr-AppG}, the reachability probability to the current target
(expressed by the first component) is computed by \(s_0+s_1\cdot t_0\), because
the current target is reached without using \(t\) (as represented by \(s_0\)),
or \(t\) is used (as represented by \(s_1\)) and \(t\) reaches the current target (as
represented by \(t_0\)); similarly for 
the reachability probability to local variables.
\rn{Tr-Rule} is the rule for translating
a function definition. From the definition for
\(F\), we generate definitions for
functions \(F_0,\ldots,F_{k}\).
For \(i\in \set{1,\ldots,k}\), \(t_{d,i}\) is chosen as the body of \(F_i\),
since it represents the
reachability probability to \(x_i\).
Rule \rn{Tr-Gram} is the translation for
the whole \pHORS{}; we just collect
the output of the translation for each rule.

For a \pHORS{} \(\GRAM=(\NONTERMS,\RULES,S)\) (where
\(\NONTERMS(S)=\T\to\T\)), we write \(\EQref{\GRAM}\) for
\(\E\) such that 
\((\NONTERMS,\RULES,S)\tr (\E,S_1)\).
Such an \(\E\) 
is actually unique (up to \(\alpha\)-equivalence), given \(\NONTERMS\) and \(\RULES\).
Note also that by definition of the translation relation, the output of
the translation always exists.

\begin{example}
\label{ex:order2-phors-eq}
Recall the order-2 \pHORS{} \(\GRAM_2\) in Example~\ref{ex:order2-phors}:
\begin{align*}
  S\ &=\ F\,H \\
  H\,x &= x \C{\frac{1}{2}} \Omega \\ 
  F\,g &= (g\,\Te) \C{\frac{1}{2}} (F(D\,g))\\
  D\,g\,x &= g\,(g\,x). 
\end{align*}
It 
can be modified to the following rules so that \(\Te\) does not occur.
\begin{align*}
S\,z &= F\,H\,z\\
H\,x &= x \C{\frac{1}{2}} \Omega\\
F\,g\,z &= (g\,z) \C{\frac{1}{2}} (F(D\,g)\,z)\\
D\,g\,x &= g\,(g\,x). 
\end{align*}
Here, \(\Te\) can be passed around through the variable \(z\).
Consider the body \(F\,H\,z\) of \(S\).
\(F\) and \(H\) are translated as follows.
\[
\begin{array}{l}
\emptyset;z\pN F:(\T\to\T)\to\T\to\T\tr
(F_0,F_1,F_0,F_0)\\
\emptyset;z\pN H:\T\to\T\tr
(H_0,H_1,H_0,H_0)\\
\end{array}
\]
By applying \rn{Tr-App}, we obtain:
\[
\begin{array}{l}
\emptyset;z\pN F\,H:\T\to\T\tr 
(F_0(H_0,H_1,H_0), F_1(H_1,H_0), F_0(H_0,H_1,H_0), \iftwocol\\\qquad \fi F_0(H_0,H_1,H_0)).
\end{array}
\]
Using \rn{Tr-GVar}, \(z\) can be translated as follows.
\[\emptyset;z\pN z:\T\tr (0,1,0).\]
Thus, by applying \rn{Tr-AppG}, we obtain:
\[
\begin{array}{l}
\emptyset;z\pN F\,H\,z:\T\tr\\\qquad
(F_0(H_0,H_1,H_0)+ F_1(H_1,H_0)\cdot 0,\\\qquad \;
F_0(H_0,H_1,H_0)+F_1(H_1,H_0)\cdot 1,\iftwocol\\\qquad \fi
F_0(H_0,H_1,H_0)+F_1(H_1,H_0)\cdot 0).
\end{array}
\]
By simplifying the output, we obtain:
\[
\begin{array}{l}
\emptyset;z\pN F\,H\,z:\T\tr\iftwocol\\\qquad \fi
(F_0(H_0,H_1,H_0),
F_0(H_0,H_1,H_0)+F_1(H_1,H_0),\iftwocol\\\qquad \fi
F_0(H_0,H_1,H_0)).
\end{array}
\]
 Thus, we have the following equations for \(S_0\) and \(S_1\). 
 \[
\begin{array}{l}
 S_0 = 
F_0(H_0,H_1,H_0)\qquad S_1 =  F_0(H_0,H_1,H_0)+ F_1(H_1,H_0).
\end{array}
\]
The following equations are obtained for the other non-terminals.
\begin{align*}
   H_0 &= 0\qquad\qquad H_1=\frac{1}{2}\\
   F_0\,(g_0,g_1,g_2) 
&= \frac{1}{2}g_0+F_0(D_0(g_0,g_1,g_2),D_1(g_1,g_2),D_0(g_2,g_1,g_2))\\
   F_1\,(g_1,g_2)  &= \frac{1}{2}(g_1+g_2)
   +\frac{1}{2}(F_0(D_0(g_2,g_1,g_2),D_1(g_1,g_2),D_0(g_2,g_1,g_2))\\
   &\qquad+F_1(D_1(g_1,g_2),D_0(g_2,g_1,g_2))\\
   D_0(g_0,g_1,g_2)
   &=g_0+g_1g_0\\
   D_1(g_1,g_2)&=g_2+g_1(g_1+g_2).
\end{align*}
 We can observe that the values of the variables \(g_0\) and \(g_2\) are always \(0\).
 Thus, by removing redundant arguments, we obtain:
\begin{align*}
  S_0&=F_0(\frac{1}{2})\\
  S_1&=F_0(\frac{1}{2})+F_1(\frac{1}{2})\\
  F_0(g_1)&=F_0(D_1(g_1))\\
  F_1(g_1) &= \frac{1}{2}g_1+\frac{1}{2}(F_0(D_1(g_1))+F_1(D_1(g_1)))\\
  D_0(g_1)&=0\\
  D_1(g_1)&=g_1^2.
\end{align*}
By further simplification (noting that the least solution for \(F_0\) is
\(\lambda g_1.0\)), we obtain:
\[
\begin{array}{c}
S_1 = F_1(\frac{1}{2})\qquad\qquad
F_1(g_1) = \frac{1}{2}g_1+\frac{1}{2}F_1(g_1^2).
\end{array}
\]
The least solution of \(S_1\) is \(\Sigma_{i\geq 0} \frac{1}{2^{2^i+i+1}} = 0.3205\cdots\).
\hfill\qed
\end{example}

\begin{example}
  Consider the following order-3 \pHORS{}:
  \[
  S\, x=F(C\,x)\quad F\,g = g\,H\quad C\,x\,f=f\,x\quad H\,x=x\C{\frac{1}{2}}\Omega,  \]
  where
  \[
\begin{array}{l}
  S\COL\T\to\T, F\COL ((\T\to\T)\to\T)\to\T,
  C\COL \T\to (\T\to\T)\to\T,
  H\COL\T\to\T.
\end{array}
  \]
  This is a tricky example, where in the body of \(S\), \(x\) is embedded into
  the closure \(C\,x\) and passed to another function \(F\); so, in order to compute
  how \(S\) uses \(x\), we have to take into account how \(F\) uses the closure
  passed as the argument. The \pHORS{} is translated to:
  \begin{align*}
    S_0&= F_0(C_0(0,0),C_0(0,0))\\
    S_1&= F_0(C_0(1,0),C_0(0,0))\\
    F_0\,(g_0,g_1)&=g_0(H_0,H_1,H_0)\\
    C_0\,(x_0,x_1)\,(f_0,f_1,f_2) &= f_0+f_1\cdot x_0\\
    H_0 &=0\qquad H_1=\frac{1}{2},
  \end{align*}
  where
  \begin{align*}
  S_0&\COL\realt, S_1\COL\realt,\\ 
  F_0&\COL (\realt\times\realt\times\realt\to\realt)\times (\realt\times\realt\times\realt\to\realt)\to\realt,\\
  C_0&\COL (\realt\times\realt)\to (\realt\times\realt\times\realt)\to\realt,\\
  H_0&\COL\realt,H_1\COL\realt. 
  \end{align*}
  The order of the equations is \(2\) (where the largest order is that of
  the type of  \(F_0\)).
  We have:
  \[
  S_1=F_0(C_0(1,0),C_0(0,0))=C_0(1,0)(H_0,H_1,H_0) = H_0+H_1\cdot 1 = \frac{1}{2}.
  \]
  In fact, the probability that \(S\,\Te\) reaches \(\Te\) is \(\frac{1}{2}\). \qed
\end{example}

\begin{example}
\label{ex:eq-treegenp}
Recall \pHORS{} \(\GRAM_5\) from Example~\ref{ex:listgen-variant}:
\begin{align*}
S\;x &= \treegen\;H\;\boolgen\;x\\
\boolgen\;k &= k\\ 
H\;x\;y &= x\C{\frac{1}{2}}y\\
G\;p\;x\;y &= x\C{\frac{1}{2}}(p\;x\;y)\\
\treegen\;p\;f\;k &= p\;k\;
(f(\treegen\,(G\;p)\,f\,(\treegen\,(G\;p)\,f\,(\treegen\,(G\;p)\,f\,k)))).
\end{align*}
(Here, we have slightly modified the original \pHORS{} so that \(S\) is parameterized with \(\Te\).)
As the output of the translation as defined above is too complex, we show below
a hand-optimized version of the fixpoint equations.
\begin{align*}
S_1 &= \treegen_1\;(H_1,H_2)\\
H_1 &= H_2=\frac{1}{2}\\
G_1\;(p_1,p_2) &= \frac{1}{2}+\frac{1}{2}p_1\\
G_2\;(p_1,p_2) &= \frac{1}{2}p_2\\
\treegen_1\;(p_1,p_2) &= 
p_1 + p_2\cdot (\treegen_1\;(G_1(p_1,p_2), G_2(p_1,p_2)))^3.
\end{align*}
Here, \(\treegen_1\) is the function that returns the probability that
\(\treegen\;p\;\boolgen\;x\) reaches \(x\), where the parameters \(p_1\) and \(p_2\)
represent the probabilities that \(p\) chooses the first and second branches respectively.
Let \(\rho\) be the least solution of the fixpoint equations above.
We can find \(\rho(S_1)=1\) based on the following reasoning (which is also confirmed by
the experiment reported in Section~\ref{sec:exp}).
Let us define an \(m\)-th approximation \(\treegen_1^{(m)}\) of \(\rho(\treegen_1)\) by
\begin{align*}
  \treegen_1^{(0)}\;(p_1,p_2) &= 0 \\
  \treegen_1^{(m+1)}\;(p_1,p_2) &= p_1 + p_2\cdot (\treegen_1^{(m)}\;(G_1(p_1,p_2), G_2(p_1,p_2)))^3.
\end{align*}
Then \(\rho(\treegen_1)\;(p_1,p_2)\geq \treegen_1^{(m)}\;(p_1,p_2)\) for every \(m\geq 0\).
We show \(\treegen_1^{(m)}\;(1-\frac{1}{2^{n}},\frac{1}{2^n})\geq 1-\frac{1}{2^{n+m-1}}\)
for every \(n\geq 2\), \(m\geq 1\) by induction on \(n\).
When \(m=1\), we have:
\[
\treegen_1^{(1)}\;\left(1-\frac{1}{2^n},\frac{1}{2^n}\right) = 
\left(1-\frac{1}{2^n}\right) + \frac{1}{2^n}\cdot 0 = 1-\frac{1}{2^n}=1-\frac{1}{2^{n+m-1}}.
\]
About the inductive step, we have
\begin{align*}
\treegen_1^{(m+1)}\;\left(1-\frac{1}{2^n},\frac{1}{2^n}\right) 
&= 
1-\frac{1}{2^n} + \frac{1}{2^n}\cdot \left(\treegen_1^{(m)}\;\left(1-\frac{1}{2^{n+1}},\frac{1}{2^{n+1}}\right)\right)^3\\
&\geq 
1-\frac{1}{2^n} + \frac{1}{2^n}\cdot \left(1-\frac{1}{2^{n+m}}\right)^3\\
&\geq 
1-\frac{1}{2^n} + \frac{1}{2^n}\cdot \left(1-3\frac{1}{2^{n+m}}\right)\\
&\geq
1-\frac{1}{2^{2n+m-2}}
\geq
1-\frac{1}{2^{n+(m+1)-1}}
\end{align*}
as required.
Thus, 
\[
\treegen_1^{(m)}\left(\frac{1}{2},\frac{1}{2}\right)
= \frac{1}{2}+\frac{1}{2}\left(\treegen_1^{(m-1)}\left(1-\frac{1}{2^2},\frac{1}{2^2}\right)\right)^3
\geq \frac{1}{2}+\frac{1}{2}\left(1-\frac{1}{2^{(2+(m-1)-1)}}\right)^3
\]
for every \(m\geq 2\). Thus, \(\rho(S_1)=\rho(\treegen_1)(\frac{1}{2},\frac{1}{2})\) 
(which should be no less than 
\(\treegen_1^{(m)}\;(\frac{1}{2},\frac{1}{2})\) for every \(m\))
must be \(1\).
\qed
\end{example}

\subsubsection*{Correctness of the Translation}
To state the well-formedness of the output of the translation,
we define
the translation of types as follows.
\[
\begin{array}{l}
\trT{(\sty_1\to\cdots\to\sty_k\To \T^\ell\to\T)}\\
=(\trT{\sty_1}\to \cdots \to \trT{\sty_k}\to \realt)
\iftwocol\\\quad\fi\times (\trTp{\sty_1}\to \cdots \to \trTp{\sty_k}\to \realt)^\ell
\times (\trT{\sty_1}\to \cdots \to \trT{\sty_k}\to \realt)\\
\trTp{(\sty_1\to\cdots\to\sty_k\To \T^\ell\to\T)}\\
= (\trTp{\sty_1}\to \cdots \to \trTp{\sty_k}\to \realt)^\ell
\times (\trT{\sty_1}\to \cdots \to \trT{\sty_k}\to \realt).
\end{array}
\]
We also write \(\trTn{(\sty_1\to\cdots\to\sty_k\To \T^\ell\to\T)}{m}\) for
\[
\begin{array}{l}
(\trT{\sty_1}\to \cdots \to \trT{\sty_k}\to \realt)\iftwocol \\\fi
\times (\trTp{\sty_1}\to \cdots \to \trTp{\sty_k}\to \realt)^\ell
\times (\trT{\sty_1}\to \cdots \to \trT{\sty_k}\to \realt)^{m+1}.
\end{array}
\]
It represents the type
of the tuple \((e_0,\ldots,e_{\ell+m+1})\) obtained by translating
a term of type \(\sty_1\to\cdots\to\sty_k\To \T^\ell\to\T\) with
order-0 variables \(x_1,\ldots,x_m\). The distinction between
\(\trT{\sty_i}\) and \(\trTp{\sty_i}\) reflects the fact that
in the output \((e_0,e_1,\ldots,e_\ell,e_{\ell+1},\ldots,e_{\ell+m+1})\)
of the translation, \(e_1,\ldots,e_\ell\) take one less argument
(recall \rn{Tr-App}).
The translation of the type environment \(\NONTERMS\) for non-terminals is defined by:
\[
\begin{array}{l}
(F_1\COL\sty_1,\ldots,F_k\COL\sty_k)^\dagger
=
(F_{1,0},\ldots,F_{1,\arity(\sty_1)})\COL {\sty_1}^{\dagger-1},
\ldots, 
(F_{k,0},\ldots,F_{k,\arity(\sty_k)})\COL \sty_k^{\dagger-1}.
\end{array}
\]
The following lemma states that the output of the translation is 
 well-typed. 
\begin{lemma}[Well-typedness of the output of transformation]
  \label{lem:tr-well-typedness}
  Let \(\GRAM=(\NONTERMS,\RULES,S)\) be a \pHORS{}.
  If \(\GRAM\tr (\E,S_1)\), then
   \(\trT{\NONTERMS} \p \E\) and \(\trT{\NONTERMS}(S_1)=\realt\).  
\end{lemma}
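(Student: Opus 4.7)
The plan is to prove the statement by first establishing a stronger auxiliary lemma at the level of terms, and then instantiating it at each rule of $\RULES$. Specifically, I will show by induction on the derivation of $\STE; x_1,\ldots,x_k \pN t:\sty \tr (e_0,e_1,\ldots,e_{\ell+k+1})$ (where $\arity(\sty)=\ell$) that
\[
  \trT{\NONTERMS},\;\trT{\STE},\; x_1\COL\realt,\ldots,x_k\COL\realt \;\p\; (e_0,\ldots,e_{\ell+k+1}) : \trTn{\sty}{k},
\]
where $\trT{\STE}$ replaces every $y\COL\sty'\in\STE$ with the sequence of $\arity(\sty')+2$ typed variables $y_0\COL(\trT{\sty'_1}\to\cdots\to\realt)$, $y_1,\ldots,y_{\arity(\sty')}\COL(\trTp{\sty'_1}\to\cdots\to\realt)$, $y_{\arity(\sty')+1}\COL(\trT{\sty'_1}\to\cdots\to\realt)$, matching the components of $\trT{\sty'}$.

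The base cases \textsc{Tr-Omega} and \textsc{Tr-GVar} are immediate: every component is a real constant, $\sty=\T$, and $\trTn{\T}{k}=\realt^{k+2}$. The cases \textsc{Tr-Var} and \textsc{Tr-NT} are read directly off the definition of $\trT{\STE}$ and the clauses of $\trT{\sty}$: the first $\ell+1$ components $y_0,y_1,\ldots,y_\ell$ have precisely the types of the first $\ell+1$ coordinates of $\trTn{\sty}{k}$, and the $k+1$ trailing copies of $y_{\ell+1}$ (respectively $F_0$) have the unprimed type required for the remaining coordinates. The key step is \textsc{Tr-App}: when $s\COL\sty_1\to\seq{\sty}\To\T^\ell\to\T$ is applied to $t\COL\sty_1$ with $\arity(\sty_1)=\ell'$, the induction hypothesis tells us that $(s_0,\ldots,s_{\ell+k+1})$ is typed by $\trTn{\sty_1\to\seq{\sty}\To\T^\ell\to\T}{k}$ and $(t_0,\ldots,t_{\ell'+k+1})$ by $\trTn{\sty_1}{k}$. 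Unfolding these types, $s_0$ expects an argument of type $\trT{\sty_1}$ (an $(\ell'+2)$-tuple), which is exactly $(t_0,t_1,\ldots,t_{\ell'},t_{\ell'+k+1})$; each $s_i$ for $i\in\{1,\ldots,\ell\}$ expects an argument of type $\trTp{\sty_1}$ (an $(\ell'+1)$-tuple), which is $(t_1,\ldots,t_{\ell'},t_{\ell'+k+1})$; and each $s_{\ell+j}$ for $j\in\{1,\ldots,k+1\}$ expects again a $\trT{\sty_1}$-tuple, namely $(t_{\ell'+j},t_1,\ldots,t_{\ell'},t_{\ell'+k+1})$. The application typing rule then gives that each resulting component has type $\realt$ after applying $s_i$ or its curried form, so the output tuple has type $\trTn{\seq{\sty}\To\T^\ell\to\T}{k}$. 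Rule \textsc{Tr-AppG} is similar but simpler, as the additions and multiplications operate entirely at base type $\realt$.

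Once the term-level lemma is in hand, \textsc{Tr-Rule} is handled as follows. Given $F\,\seq{y}\,x_1\cdots x_k = t_L \C{p} t_R$ with $\NONTERMS(F)=\sty_1\to\cdots\to\sty_\ell\To\T^k\to\T$, the induction hypothesis on each $t_d$ yields $(t_{d,0},\ldots,t_{d,k+1})$ all of type $\realt$ in an environment that provides the full tuples $\seq{y_i}\COL\trT{\sty_i}$ and the order-$0$ variables $x_1,\ldots,x_k\COL\realt$. Hence for the $F_0$ equation the body $pt_{L,0}+(1-p)t_{R,0}$ is a $\realt$-valued expression in the environment carrying $\seq{y_1}\COL\trT{\sty_1},\ldots,\seq{y_\ell}\COL\trT{\sty_\ell}$, which is exactly the type declared for $F_0$ in $\trT{\NONTERMS}$, namely $\trT{\sty_1}\to\cdots\to\trT{\sty_\ell}\to\realt$. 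For each $F_i$ with $i\in\{1,\ldots,k\}$, the body $pt_{L,i}+(1-p)t_{R,i}$ is typed in an environment that drops the first component of every $\seq{y_j}$ (so uses $\seq{y_j}'\COL\trTp{\sty_j}$), agreeing with the type $\trTp{\sty_1}\to\cdots\to\trTp{\sty_\ell}\to\realt$ that $\trT{\NONTERMS}$ assigns to $F_i$; crucially, $x_i\COL\realt$ is no longer needed because $t_{d,i}$ concerns the reachability probability to $x_i$ itself, which is represented by moving the $x_i$-component into the output rather than keeping it as a parameter. Rule \textsc{Tr-Gram} assembles all these well-typed equations into $\E$ with $\trT{\NONTERMS}\p\E$, and since $\NONTERMS(S)=\T\to\T$, the only generated name used as the ``answer'' is $S_1$ of type $\realt$, yielding $\trT{\NONTERMS}(S_1)=\realt$.

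The principal obstacle is purely bookkeeping: the \textsc{Tr-App} case requires carefully tracking which slots of the output tuple use the ``current target'' component $t_0$ versus a local-variable component $t_{\ell'+j}$ versus the ``fresh target'' component $t_{\ell'+k+1}$, and matching these against the distinction between $\trT{\sty_1}$ (full, $\ell'+2$ arguments) and $\trTp{\sty_1}$ (primed, $\ell'+1$ arguments). Once the invariant $|\text{output tuple}|=\arity(\sty)+k+2$ is maintained and the typing discipline of Figure~\ref{fig:typing-eq} is applied mechanically, no genuinely difficult mathematical step remains.
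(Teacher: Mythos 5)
Your overall strategy---a term-level typing-preservation lemma proved by induction on the translation derivation, then instantiated at each rewrite rule via \textsc{Tr-Rule} and \textsc{Tr-Gram}---is exactly the route the paper takes, and your case analysis of \textsc{Tr-App} correctly tracks which slots receive a $\trT{\sty_1}$-shaped tuple versus a $\trTp{\sty_1}$-shaped one. The inclusion of $x_1\COL\realt,\ldots,x_k\COL\realt$ in the environment of the auxiliary lemma is harmless but superfluous, since the translation eliminates all occurrences of the order-$0$ variables from the output expressions.

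There is, however, one genuine gap, in the \textsc{Tr-Rule} step. Your term-level lemma types $t_{d,i}$ in the environment carrying the \emph{full} tuples $(\seq{y_j})\COL\trT{\sty_j}$, i.e., including the components $y_{j,0}$. But the equation generated for $F_i$ with $i\ge 1$ binds only the truncated tuples $(\seq{y_j}')\COL\trTp{\sty_j}$, so to conclude that its body is well-typed under $\trT{\NONTERMS}$ you must strengthen the environment by deleting every $y_{j,0}$---and that is sound only if $y_{j,0}$ does not occur free in $t_{d,i}$ for $i\ge 1$. This is a separate syntactic invariant of the translation, which the paper isolates as its own lemma and proves by an independent induction on $t$: in \textsc{Tr-Var} the component $y_0$ appears only in the first output coordinate, and \textsc{Tr-App}/\textsc{Tr-AppG} feed $s_0$ and $t_0$ only into the first coordinate of their output, so the property propagates. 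You assert the conclusion (``is typed in an environment that drops the first component of every $\seq{y_j}$'') without establishing this occurrence property, and the justification you do offer addresses why the $x_i$ can be dropped, not why the $y_{j,0}$ can. Stating and proving this independence fact closes the gap; the rest of your argument goes through as written.
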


\noindent
As a corollary, it follows that for any order-\(n\) \pHORS\;  \(\GRAM\) (where \(n>0\)),
the order of \(\EQref{\GRAM}\) is \(n-1\).

The following result is the main theorem of this section, which states the correctness
of the translation.
\begin{theorem}
  \label{prop:order-k-1-fixpoint}
  \label{PROP:ORDER-K-1-FIXPOINT}
  Let \(\GRAM=(\NONTERMS,\RULES,S)\) be an order-\(n\) \pHORS{},
  Then, \(\Prob(\GRAM,S\,\Te) = \rho_{\EQref{\GRAM}}(S_1)\).
\end{theorem}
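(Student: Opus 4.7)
The plan is to establish Theorem~\ref{prop:order-k-1-fixpoint} by relating the operational semantics of the \pHORS{} \(\GRAM\) to the fixpoint solution of \(\EQref{\GRAM}\) via a soundness lemma for the translation \(\tr\). The key semantic intuition is that, due to call-by-name evaluation with only nullary ground constants \(\Te\) and \(\Omega\) (neither of which returns control to an enclosing context), any free order-0 variable \(x_i\) can be reached at most once in a given successful reduction to \(\Te\); consequently, the probability that a ground term reaches a given target is a \emph{multilinear} combination of probabilities across distinct targets. This linearity is precisely what permits an order-\(n\) function over \([0,1]\) to be decomposed into order-\((n-1)\) components, as performed by \(\tr\).

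First, Lemma~\ref{lem:tr-well-typedness} is verified by routine induction on the translation rules. The crux is a key soundness lemma of the form: for every derivation
\[
  \STE; x_1, \ldots, x_k \pN t : \sty \tr (e_0, e_1, \ldots, e_{\ell+k+1}),
\]
where \(\ell = \arity(\sty)\), and for every environment \(\rho\) that ``represents'' a corresponding semantic substitution for \(\NONTERMS\) and \(\STE\), the value \(\seme{e_i}{\rho}\) coincides with the probability that \(t\), suitably instantiated and applied to fresh markers in its ground-argument positions, reaches the \(i\)-th distinguished target (the caller's current target for \(i=0\); the \(i\)-th ground argument for \(1\le i\le \ell\); the variable \(x_{i-\ell}\) for \(\ell+1\le i\le \ell+k\); and an externally supplied fresh target for \(i=\ell+k+1\)). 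The proof is by induction on the derivation of \(\tr\). The base cases (\rn{Tr-Omega}, \rn{Tr-GVar}, \rn{Tr-Var}, \rn{Tr-NT}) are direct. The case \rn{Tr-AppG} encodes the fact that in \(s\,t\) with \(s:\T\to\sty\) and \(t:\T\), any target is reached either without ever invoking \(t\) (contributing \(s_0\) or \(s_{\ell+i}\)) or by \(s\) calling its argument (with probability \(s_1\)) and \(t\) then reaching the target (contributing \(t_0\) or \(t_i\)), yielding the patterns \(s_0+s_1\cdot t_0\) and \(s_{\ell+i}+s_1\cdot t_i\).

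I would then lift the lemma across rules via \rn{Tr-Rule} and across the whole grammar via \rn{Tr-Gram}: the equations defining each \(F_i\) in \(\EQref{\GRAM}\) express precisely the fixpoint equations for the \(i\)-th component of the semantics of \(F\). Using the order-\(n\) characterization of Theorem~\ref{prop:order-k-fixpoint} as a bridge, one checks that the componentwise decomposition of the least solution \(\rho_{\E_\GRAM}\) satisfies the equations \(\EQref{\GRAM}\), and conversely any solution of \(\EQref{\GRAM}\) reassembles into a solution of \(\E_\GRAM\); by monotonicity and \(\omega\)-continuity of the associated functionals, the two least fixpoints agree componentwise. Applying this to \(S\), its translation yields \(S_0,S_1\), with \(S_1\) denoting the probability that \(S\) reaches its first order-0 argument; instantiating that argument to \(\Te\) gives \(\Prob(\GRAM,S\,\Te)=\rho_{\EQref{\GRAM}}(S_1)\).

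The main obstacle is the \rn{Tr-App} case of the key lemma, where a higher-order argument \(t\) is plugged into the head \(s\). One must justify why \(t_0\) is passed only into \(s_0\) (and not into the \(s_i\) for \(i\geq 1\)), and why the ``fresh target'' slot \(t_{\ell'+k+1}\) is used in its place, as well as as the \emph{last} argument of every component \(s_j\). The intuition is that when \(s_i\) (with \(i\ge 1\)) tracks reachability to \(s\)'s \(i\)-th ground argument or to a local variable, the higher-order parameter \(t\), if invoked by \(s\), sees a ``current target'' that \(s\) itself chooses at runtime, distinct from the caller's current target; hence the appropriate component of \(t\) to use is the fresh-target one. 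Making this precise amounts to setting up a logical relation that links operational substitution with the componentwise encoding of higher-order functions, and this is the technical heart of the argument; the rest of the proof is bookkeeping once the relation is in place.
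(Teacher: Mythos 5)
Your overall architecture is genuinely different from the paper's: you propose a denotational argument --- a ``key soundness lemma'' proved by induction on the translation derivation, glued to the order-\(n\) characterization of Theorem~\ref{prop:order-k-fixpoint} by showing that the componentwise decomposition of \(\rho_{\E_\GRAM}\) and the least solution of \(\EQref{\GRAM}\) determine each other --- whereas the paper argues \emph{operationally}. The paper's proof introduces a modified reduction relation with explicit substitutions for order-0 variables, proves a subject-reduction property (if \(t\) translates to \((v_0,\ldots)\) and reduces to \(t_L\), \(t_R\) with probabilities \(p\), \(1-p\), then \(v_i\cong p\,w_{L,i}+(1-p)\,w_{R,i}\) for translations of the redexes), derives the result for recursion-free \pHORS{} by induction on the length of the longest reduction sequence, and finally handles recursion by finite approximations \(\GRAM^{(k)}\) together with \(\omega\)-continuity. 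Your multilinearity intuition (each order-0 variable is reached at most once under call-by-name) is exactly the paper's motivating observation, and your reading of \rn{Tr-AppG} is correct.

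The gap is that the entire technical content of the theorem sits inside the two steps you defer. First, your key lemma cannot be proved ``by induction on the derivation of \(\tr\)'' as stated: for a redex \(F\,\seq{s}\,\seq{t}\) the reachability probability depends on unfoldings of \(F\), which are not subderivations, so the induction is not well-founded in the presence of recursion. You must stratify --- either by Kleene approximants of the two fixpoints simultaneously, or (as the paper does) by reduction length after cutting down to recursion-free approximations --- and your appeal to ``monotonicity and \(\omega\)-continuity'' does not by itself supply the needed commutation of the decomposition map with the two functionals at every approximation stage. Second, the \rn{Tr-App} case that you flag as ``the technical heart'' is precisely where the paper needs its delicate substitution lemma: substituting a higher-order argument requires \emph{several distinct} substitutions \(\theta_0,\theta_1,\ldots,\theta_k\) (one per reachability target), exploiting the syntactic invariant that the \(y_{i,0}\) components never occur in the non-zeroth output components, and threading the fresh-target slot consistently. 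Naming this a ``logical relation'' and calling the rest ``bookkeeping'' inverts the actual difficulty: without that lemma worked out, the proposal establishes neither soundness nor completeness of the decomposition, so as written it is an accurate plan rather than a proof.
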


\iftwocol
A proof of the theorem and more examples of
the translation are found in
\iffull Appendix~\ref{sec:proof-sec42}. \else
the longer version~\cite{KDG19LICSfull}.
\fi

\else
A proof of the theorem is found in Appendix~\ref{sec:proof-sec42}.
Here we only sketch the proof. 
We first prove the theorem for \emph{recursion-free} \pHORSs{}
(so that any term is strongly normalizing; see
\iffull Appendix~\ref{sec:proof-tr-n-1} \else \cite{KDG19LICSfull} \fi
for the precise definition), and
extend it to general \pHORSs{} \(\GRAM\) by
using \emph{finite approximations} of \(\GRAM\), 
obtained by unfolding each non-terminal a finite number of times.
To show the theorem for recursion-free \pHORSs{}, we prove that
the translation relation is preserved by reductions in a certain sense; this is,
however, much more involved than the corresponding proof for
Section~\ref{sec:order-n-equation}: we introduce an alternative
operational semantics for \pHORS{} that uses \emph{explicit} substitutions.
See Appendix~\ref{sec:proof-sec42} for details.
\fi

\section{Computing Upper-Bounds of Termination Probability}
\label{sec:upperbound}
\label{SEC:UPPERBOUND}
Theorems~\ref{prop:order-k-fixpoint} and
\ref{prop:order-k-1-fixpoint} immediately provide procedures
for computing \emph{lower}-bounds of the termination probability as
precisely as we need
\footnote{Theorem~\ref{th:lower-bound} also provides a procedure
for computing lower-bounds, but the fixpoint characterizations by
Theorems~\ref{prop:order-k-fixpoint} and \ref{prop:order-k-1-fixpoint}
provide a more efficient procedure.}
The termination
probability, in other words, is a \emph{recursively enumerable} real number
(see, e.g.~\cite{CaludeInformationRandomness}), but it is still open
whether it is a \emph{recursive} one.
Indeed, computing good upper-bounds is non-trivial.  For example, an
upper-bound for the \emph{greatest} solution of \(\EQref{\GRAM}\) can be easily computed,
but it does not provide a good upper-bound for the \emph{least}
solution of \(\EQref{\GRAM}\), unless the solution is unique. Take, as an example, the
trivial \pHORS{}  consisting of a single equation $S=S$: the
greatest solution is $1$, while the least is $0$.

In this section, we will describe how \emph{upper approximations} to
the termination probability can be computed in practice.
We focus our attention mainly on order-2 \pHORSs{}, which yield equations over
first-order functions on real numbers. Order-\(n\) case is only briefly discussed
in Section~\ref{sec:ub:higherorder}.
\subsection{Properties of the Fixpoint Equations Obtained from \pHORS{}}
\label{sec:properties}
Before discussing how to compute an upper-bound of the termination
probability, we first summarize several important properties of the
(order-1) fixpoint equations obtained from an order-2 \pHORS{}
(by the translation in Section~\ref{sec:order-n-1-equation}), which are exploited in
computing upper-bounds.
\begin{asparaenum}
\item
The fixpoint equations can be written in the form:
  \begin{equation}\label{equ:fpediscr}    
  f_1(x_1,\ldots,x_{\ell_1}) = e_1,\qquad\cdots\qquad f_k(x_1,\ldots,x_{\ell_k})=e_k,
  \end{equation}
  where each \(e_i\) consists of (i) non-negative constants,
  (ii) additions, (iii) multiplications, and (iv) function applications.
  Each variable \(x_i\) ranges over \([0,1]\).
\item
The formal arguments \(x_1,\ldots,x_{\ell_i}\) of each
  function \(f_i\) can be partitioned into several groups of
  variables \((x_1,\ldots,x_{d_{i,1}}),(x_{d_{i,1}+1},\ldots,x_{d_{i,2}}),\ldots,
  (x_{d_{g_{i}-1}+1},\ldots,x_{\ell_i})\), so that the relevant 
  input values are those such that \emph{the sum} of the
  values of the variables in each group ranges over \([0,1]\). This is
  because each group of
  variables \((x_{d_{i,j-1}+1},\ldots,x_{d_{i,h}})\) either corresponds
  to an order-$0$ argument (and has thus length $1$) or to an
  order-$1$ argument of an order-2 function \(F_i\) of the original
  \pHORS{}, where one of the variables represents the probability
  that \(F_i\) terminates without using any arguments, and each of the
  other variables represents the probability that \(F_i\) uses each
  argument of \(F_i\). Since these events are mutually exclusive, the sum of
 those values ranges over \([0,1]\).
\item The functions \(f_1,\ldots,f_k\) can also be partitioned into several groups
of functions \((f_1,\ldots,f_{j_1}),(f_{j_1+1},\ldots,f_{j_2}),\ldots,(f_{j_{\ell-1}+1},\ldots,f_{j_\ell})\),
so that the sum \(f_{j_{m-1}+1}(\seq{x})+\cdots + f_{j_m}(\seq{x})\) of the return values
of the functions in each group ranges over \([0,1]\) (assuming that the arguments
\(\seq{x}\) are in the valid domain, i.e., the sum of \(\seq{x}\) ranges over \([0,1]\)). 
This is because an order-2 function \(F_i\)
is translated to a tuple of order-1 functions \((F_{i,0},\ldots,F_{i,j})\), and the components of
the tuple return the probabilities to reach mutually different targets.\footnote{According 
to the translation in Section~\ref{sec:order-n-1-equation}, the first element 
\(F_{i,0}\) takes one more argument than the other elements; for the sake of
simplicity, we assume in this section that all the functions in each partition take the same number
of arguments, by adding dummy arguments as necessary.} We write \(\fgroup(f)\) for the partition
that \(f\) belongs to, i.e., \(\fgroup(f_i) = \set{f_{j_{m-1}+1},\ldots,f_{j_m}}\) if \(j_{m-1}+1\leq
i\leq j_m\).
\item
Suppose that \((x_1,\ldots,x_{\ell_i})\) ranges over the
  valid domain of \(f_i\). Then, the value of each subexpression
  of \(e_i\) ranges over \([0,1]\); this is because each subexpression
  represents some probability.  This invariant is not
  necessarily preserved by simplifications like
  \(\frac{1}{2}x+\frac{1}{2}y = \frac{1}{2}(x+y)\); the value
  of \(x+y\) may not belong to \([0,1]\).  We apply
  simplifications only so that the invariant is maintained.
\end{asparaenum}
The properties above can be easily verified by inspecting
the translations from Section~\ref{sec:order-n-1-equation}.

Finally, another important property is pointwise convexity. The
least solution $f$ of the fixpoint equations, as well as any finite
approximations obtained from \(\bot\) by Kleene iterations, are
\emph{pointwise convex}, i.e., convex on each variable,
i.e., \(f(x_1,\ldots,(1-p)x+py,\ldots,x_n) \leq
(1-p)f(x_1,\ldots,x,\ldots,x_n)+p f(x_1,\ldots,y,\ldots,x_n)\)
whenever \(0\leq p\leq 1\) and \(0\leq x,y\).  Note, however,
that \(f\) is not necessarily convex in the usual
sense: \(f((1-p)\vec{x}+p\vec{y}) \leq (1-p)f(\vec{x})+pf(\vec{y})\)
may not hold for some \(\vec{0}\leq \vec{x},\vec{y}\) and \(0\leq
p\leq 1\). For example, let \(f(x_1,x_2)\) be \(x_1\cdot x_2\). Then,
\(\frac{1}{4}=f\left(\frac{1}{2}, \frac{1}{2}\right)
=f\left(\frac{1}{2}(0,1)+\frac{1}{2}(1,0)\right)
> \frac{1}{2}f(0,1)+\frac{1}{2}f(1,0)=0\).
Recall that \(\F_{\EQref{\GRAM}}\) is the functional associated with
the fixpoint equations \(\EQref{\GRAM}\); we simply write \(\F\)
for \(\F_{\EQref{\GRAM}}\) below.
\begin{lemma}
$\F^m(\bot)$ and $\mathbf{lfp}(\F)$ are both pointwise convex.
They are also monotonic.
\end{lemma}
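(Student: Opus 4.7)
The plan is to prove the monotonicity and pointwise convexity of $\F^m(\bot)$ by induction on $m$, and then pass to the supremum to handle $\mathbf{lfp}(\F) = \bigsqcup_m \F^m(\bot)$. The delicate point is choosing a strong enough invariant for the induction.

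First I would observe that plain pointwise convexity is \emph{not} preserved by all the constructors allowed on the right-hand sides $e_i$ of the fixpoint equations. For example, $h(u,v) = uv$ is monotonic and pointwise convex (linear in each argument) yet not jointly convex, so a naive composition $h(e_1(x), e_2(x))$ might fail to be convex unless one also controls the mixed partial $\partial_u \partial_v h$---which here happens to be $\geq 0$. Accordingly, I would strengthen the inductive hypothesis to \emph{absolute monotonicity}: on the valid domain, every function $\F^m(\bot)(f_i)$ has $\partial^{\alpha} \geq 0$ for every multi-index $\alpha$. Monotonicity is the case $\alpha = e_j$ and pointwise convexity the case $\alpha = 2e_j$, so this invariant implies both conclusions.

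The inductive step is then handled by a secondary structural induction on the expression $e_i$, using closure of absolutely monotonic functions on $\RR_+^k$ under the four constructors allowed by property~(1). Non-negative constants and free variables are trivially absolutely monotonic; addition preserves the invariant term-by-term; multiplication preserves it by the Leibniz rule
\[
\partial^{\alpha}(fg) \;=\; \sum_{\beta \leq \alpha} \binom{\alpha}{\beta}\, \partial^{\beta} f \cdot \partial^{\alpha - \beta} g,
\]
whose summands are non-negative when both factors are absolutely monotonic; and function application $h(e_1,\ldots,e_k)$ preserves it by the Fa\`a di Bruno formula, which expresses each mixed partial of the composition as a non-negative combinatorial sum of products of mixed partials of $h$ and of the $e_i$'s, all $\geq 0$ by the two induction hypotheses. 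The base case $\F^0(\bot) = \bot$ is immediate since the zero function has all derivatives equal to $0$.

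For $\mathbf{lfp}(\F) = \bigsqcup_m \F^m(\bot)$, it is cleaner to argue directly on convexity rather than on derivatives, using two standard facts: the pointwise supremum of an $\omega$-chain of monotonic functions is monotonic, and the pointwise supremum of any family of convex functions of one variable is convex. Applied slice by slice---fixing all but one variable---the latter yields pointwise convexity of $\mathbf{lfp}(\F)$, completing the proof. The main obstacle is really this choice of invariant: pointwise convexity alone is too weak to push through multiplication and function application, but once the invariant is strengthened to absolute monotonicity, every closure step reduces to non-negativity of the coefficients appearing in Leibniz' and Fa\`a di Bruno's formulas, after which the verification is routine.
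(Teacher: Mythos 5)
Your proof is correct, but it reaches the key invariant by a noticeably heavier route than the paper. The paper simply observes that $\F^m(\bot)$ is, by construction from non-negative constants, variables, sums, products and substitution, a tuple of multivariate polynomials with non-negative coefficients; monotonicity and pointwise convexity on the non-negative orthant are then immediate, since each single-variable slice is a univariate polynomial with non-negative coefficients. Your strengthened invariant of absolute monotonicity is, for polynomials on $\RR_{\geq 0}^k$, equivalent to having non-negative coefficients (the coefficients are the mixed partials at the origin up to positive factors), so the two inductions establish the same property of $\F^m(\bot)$; but where the paper's closure argument is a one-line syntactic observation about the grammar of the $e_i$, yours routes through the Leibniz rule and the multivariate Fa\`a di Bruno formula. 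What your version buys is generality---it would survive the addition of further absolutely monotone, non-polynomial primitives to the equations---and it makes explicit \emph{why} plain pointwise convexity is too weak an inductive invariant to push through multiplication and application, a point the paper leaves implicit. Your treatment of $\mathbf{lfp}(\F)$ by taking pointwise suprema slice by slice (sup of an $\omega$-chain of monotone functions is monotone, sup of convex functions of one variable is convex) is the same argument the paper gives in its displayed inequality followed by passage to the supremum.
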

\begin{proof}
The pointwise convexity and monotonicity of $\F^m(\bot)$ follow from the fact
that, following our first observation, it is (a tuple of) 
multi-variate polynomials with non-negative integer coefficients.
%
The pointwise convexity of $\mathbf{lfp}(\F)$ follows from the fact that, for every $m$,
when $\vec{x}$ and $\vec{y}$ differ by at most one coordinate,
\[
\begin{array}{lcl}
(1-p)\mathbf{lfp}(\F)(\vec{x})+p\mathbf{lfp}(\F)(\vec{y})&\geq&
(1-p)\F^m(\bot)(\vec{x})+p\F^m(\bot)(\vec{y})\\&\geq&
\F^m(\bot)((1-p)\vec{x}+p\vec{y})
\end{array}\] 
and we can then take the supremum to conclude.
The monotonicity of $\mathbf{lfp}(\F)$ also follows from a similar argument.
\end{proof}

\subsection{Computing an Upper-Bound by Discretization}
\label{sec:ub-by-disc}
Given fixpoint equations as in (\ref{equ:fpediscr}), we can compute an
upper-bound of the least solution of the equations, by
overapproximating the values of \(f_1,\ldots,f_k\) at a finite number of
discrete points, \emph{\`a la} ``Finite Element Method''. To clarify the idea,
we first describe a method for the simplest case of a single
equation \(f(x)=e\) on a unary function in Section~\ref{sec:ub:unary}.
We then extend it to deal with a \emph{binary} function in
Section~\ref{sec:ub:binary}, and discuss the general case (where we
need to deal with \emph{multiple equations on multi-variate functions}) in
Section~\ref{sec:ub:general}.
\subsubsection{Computing an Upper-Bound for a Unary Function}
\label{sec:ub:unary}
Suppose that we are given a \pHORS{} \(\GRAM\) and that 
\(\EQref{\GRAM}\) consists of a single equation 
\(f(x)=e\), where $f$ is a function \(f\) from \([0,1]\) to \([0,1]\),
and where \(e\) consists of non-negative real constants, the variable \(x\),
additions, multiplications, and applications of \(f\).  We
abstract \(f\) to a sequence of real numbers \((r_0,\ldots,r_n)\in[0,1]^{n+1}\),
where
\(r_i\) represents the value of \(f(\frac{i}{n})\).
Thus, the abstraction function \(\alpha\) mapping any function
$f:[0,1]\rightarrow[0,1]$ to its abstract form
$[0,1]^{n+1}$ is defined by
\[\alpha(f) = \left(f\left(\frac{0}{n}\right), f\left(\frac{1}{n}\right),\ldots,f\left(\frac{n}{n}\right)\right).\]
We write \(\gamma\) for any concretization
function, 
mapping any element of $[0,1]^{n+1}$
back to a function in $[0,1]\rightarrow[0,1]$. The idea here
is that \emph{if \(\gamma\) satisfies certain assumptions}, to be 
given later in Lemma~\ref{lem:ub-soundness},
then we can obtain an upper-bound of the least 
solution of \(f=\F(f)\) by solving the following system of inequalities on 
the real numbers $\vec{r}=(r_0,\ldots,r_n)$:
\begin{equation}\label{equ:inapprox}
\vec{r}\geq \alpha(\F(\gamma(\vec{r}))).
\end{equation}
Let \(\widehat{\F}\) be the functional
$\metalambda\vec{s}.\alpha(\F(\gamma(\vec{s})))$.  Notice that
solutions to (\ref{equ:inapprox}) are precisely the pre-fixpoints
of \(\widehat{\F}\), and we will thus call them \emph{abstract pre-fixpoints}
of $\F$.

There are at least two degrees of freedom
  here:
\begin{asparaenum}
\item \textbf{How could we define the concretization function?}
  Here we have at least two choices (see
  Figure~\ref{fig:overapproximation}): 
  \begin{varitemize} 
  \item[(a)]
        $\gamma(\vec{r})$ could be
        the \emph{step function} \(\hat{f}\) such that \(\hat{f}(0)=r_0\)
        and \(\hat{f}(x)=r_i\) if \(x\in
        (\frac{i-1}{n}, \frac{i}{n}]\).  
  \item[(b)] 
        $\gamma(\vec{r})$ could 
        be the \emph{piecewise linear function} \(\hat{f}\) such that \(\hat{f}(x)=r_i
        + \frac{x-\frac{i}{n}}{\frac{1}{n}}(r_{i+1}-r_i) (= (i+1-nx)r_i +
        (nx-i)r_{i+1})\)
        if \(x\in [\frac{i}{n}, \frac{i+1}{n}]\).
  \end{varitemize}
   The discussion above on abstract pre-fixpoints suggests that it is natural
   to require that \((\alpha,\gamma)\) satisfies a Galois connection-like property.
   The first choice indeed turns \((\alpha,\gamma)\) into a Galois connection
   between the set of monotonic functions and that of non-decreasing sequences of
    real numbers. The second choice of \((\alpha,\gamma)\) is not exactly
     a Galois connection (because \(\alpha(f)\le \vec{r}\) does not imply
     $f \le \gamma(\vec{r})$ if $f$ is not convex), but is quite close:
    if an abstraction $\vec{r}$ majorizes
  $\alpha(f)$ for some pointwise convex function $f$, we
  immediately have $\gamma(\vec{r})\geq f$. This way,
  $\gamma$ satisfies the assumption of Lemma~\ref{lem:ub-soundness}
  below.
\item \textbf{How could we solve inequalities?} Again, we have at least two choices.
  \begin{varitemize}
  \item[(c)] Use the decidability of \emph{theories of real arithmetic} 
    (e.g., minimize \(\sum_i r_i\) so that all the inequalities are satisfied).
  \item[(d)] \emph{Abstract} also the values of \(\vec{r}\) so that they can take only
    finitely many discrete values, say, \(0,\frac{1}{m},\ldots, \frac{m-1}{m},1\).
      The inequality (\ref{equ:inapprox}) is then replaced by:
      \[
      \vec{s}\geq\alpha_h(\alpha(\F(\gamma(\vec{s})))),
      \]
      where every \(s_i\) is the ``discretized version'' of \(r_i\), and the abstraction
      function \(\alpha_h\), given a tuple of reals as an input,  replaces
      each element \(r\in[0,1]\) 
      with \(\frac{\ceil{rm}}{m}\).
      Since they are now inequalities over a \emph{finite} domain, we can obtain the least solution
      by a \emph{finite} number of Kleene iterations, starting from \(\vec{s}=\vec{0}\).
  \end{varitemize}
\end{asparaenum}
\begin{figure}
\begin{center}
\includegraphics[scale=0.3]{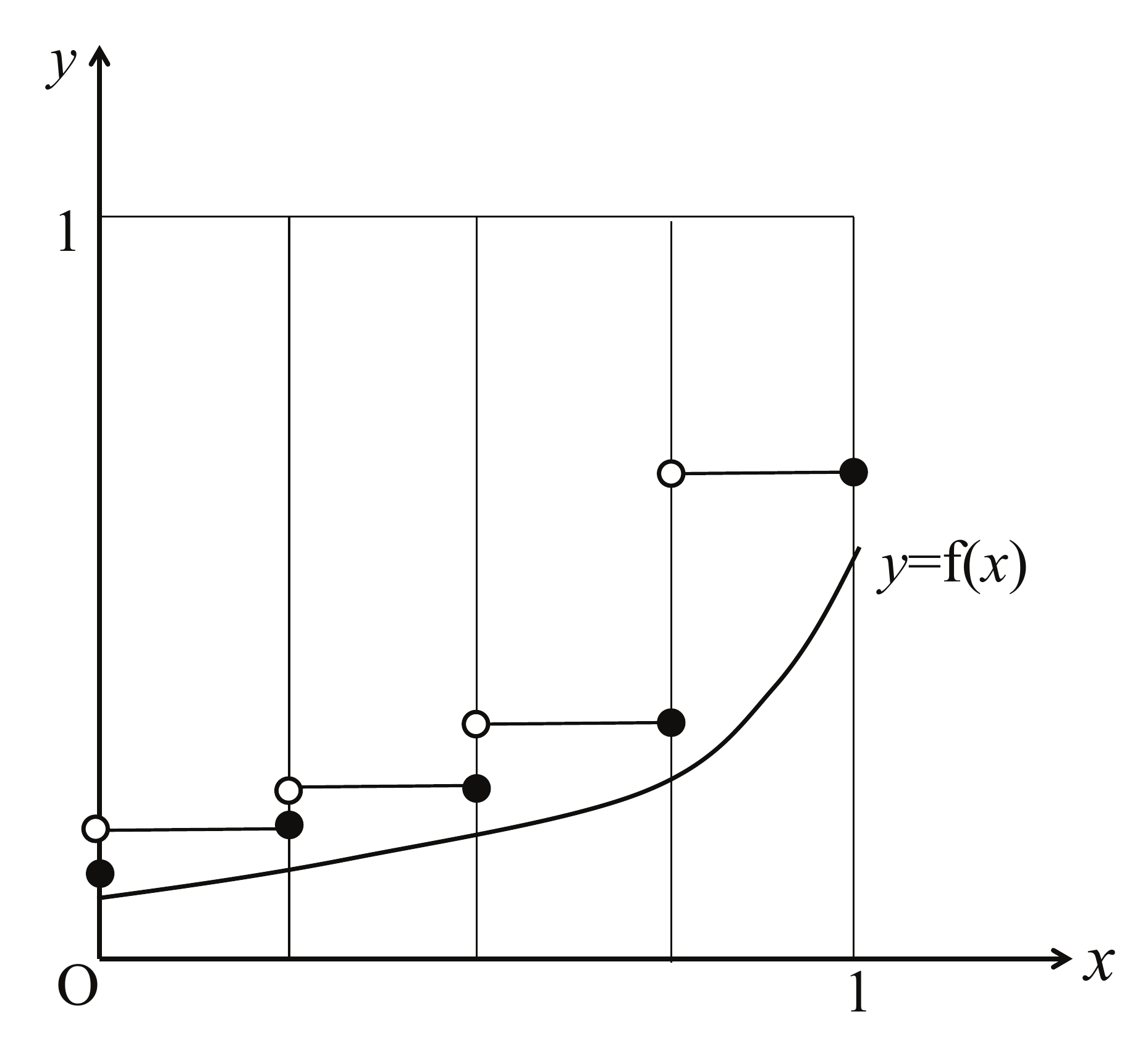}\qquad
\includegraphics[scale=0.3]{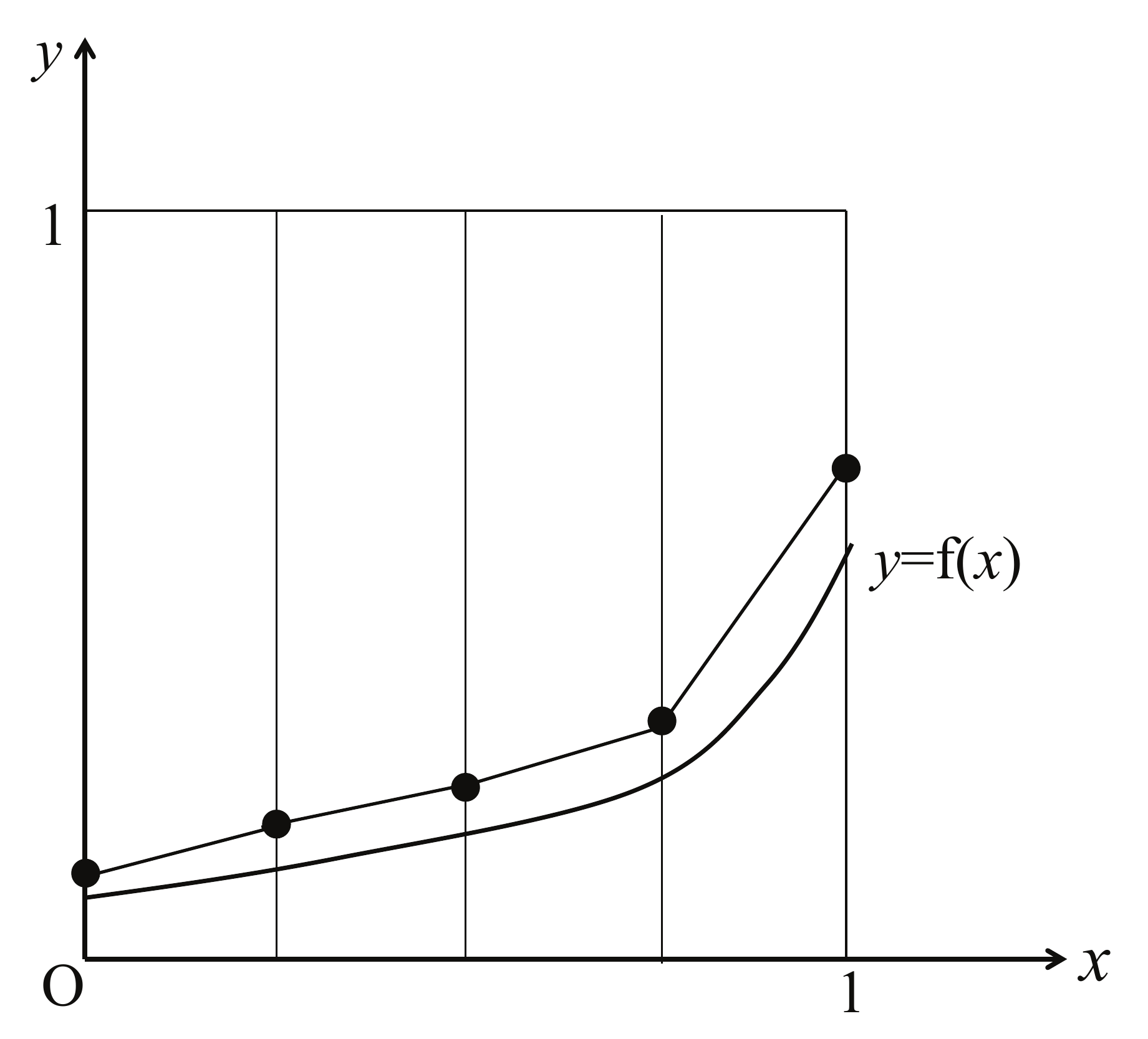}
\end{center}
\caption{Overapproximation by a step-function (left) and a stepwise linear function (right)}
\label{fig:overapproximation}
\end{figure}
The following lemma ensures that the inequality (\ref{equ:inapprox})
is indeed a sufficient condition for \(\gamma(\vec{r})\) to be an upper-bound
on \(\mathbf{lfp}(\F)\). Note that both step functions and stepwise linear
functions satisfy the assumption of the lemma below.
\begin{lemma}
\label{lem:ub-soundness}
  Suppose that the concretization function \(\gamma\) is monotonic,
  and that $\vec{r}\geq\alpha(f)$ implies $\gamma(\vec{r})\geq f$ for
  every pointwise convex $f$. Then, any abstract pre-fixpoint of $\F$
  is an upper bound of $\mathbf{lfp}(\F)$. 
\end{lemma}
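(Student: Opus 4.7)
The plan is to prove the lemma by Kleene iteration, showing by induction on $n$ that $\gamma(\vec{r}) \geq \F^n(\bot)$ holds for every $n \in \mathbb{N}$, and then passing to the supremum. The base case $n = 0$ is immediate since $\F^0(\bot) = \bot$ and $\gamma(\vec{r}) \geq \bot$ trivially in the pointwise order on $[0,1]\to[0,1]$.

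For the inductive step, assume $\gamma(\vec{r}) \geq \F^n(\bot)$. First I would invoke monotonicity of $\F$ (established in the previous lemma of this section) to get $\F(\gamma(\vec{r})) \geq \F^{n+1}(\bot)$. Next I would use the evident monotonicity of $\alpha$, which just samples its argument at the finitely many points $\frac{i}{n}$, to deduce $\alpha(\F(\gamma(\vec{r}))) \geq \alpha(\F^{n+1}(\bot))$. Combining this with the assumption that $\vec{r}$ is an abstract pre-fixpoint, i.e.\ $\vec{r} \geq \alpha(\F(\gamma(\vec{r})))$, gives $\vec{r} \geq \alpha(\F^{n+1}(\bot))$. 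Crucially, by the previous lemma, $\F^{n+1}(\bot)$ is pointwise convex, so the hypothesis on $\gamma$ applies and yields $\gamma(\vec{r}) \geq \F^{n+1}(\bot)$, closing the induction.

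Finally, since $\mathbf{lfp}(\F) = \bigsqcup_{n \in \omega} \F^n(\bot)$ and $\gamma(\vec{r})$ is an upper bound of the chain $\{\F^n(\bot)\}_n$, it is at least the supremum, giving $\gamma(\vec{r}) \geq \mathbf{lfp}(\F)$ as desired.

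I do not expect any real obstacle here: the proof is essentially a Park-style induction on Kleene approximants, with the two nontrivial ingredients being (i) monotonicity of $\F$ and $\alpha$, and (ii) the pointwise convexity of every $\F^n(\bot)$, both of which are already in hand. The only subtle point to highlight is \emph{why} we only need the hypothesis on $\gamma$ for pointwise convex functions and not for arbitrary $f$: we only ever apply it to $\F^{n+1}(\bot)$, which is a finite iterate of a polynomial functional starting from $\bot$ and hence pointwise convex, so the restricted form of the hypothesis is exactly what the argument consumes.
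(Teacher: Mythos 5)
Your proof is correct, but it is organized differently from the paper's. You run a direct Park-style induction on the single statement $\gamma(\vec{r})\geq\F^n(\bot)$, folding the pre-fixpoint inequality $\vec{r}\geq\alpha(\F(\gamma(\vec{r})))$ into every inductive step; the paper instead first proves, by induction on $m$, the intermediate fact $\gamma(\widehat{\F}^m(\bot))\geq\F^m(\bot)$ about the \emph{abstract} Kleene iterates $\widehat{\F}^m(\bot)$, and only then brings in the pre-fixpoint via the chain $\gamma(\vec{r})\geq\gamma(\widehat{\F}^m(\vec{r}))\geq\gamma(\widehat{\F}^m(\bot))\geq\F^m(\bot)$. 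Your route is slightly leaner: it never mentions $\widehat{\F}$ and, notably, never uses the monotonicity of $\gamma$ (only that of $\F$ and $\alpha$, plus the convexity hypothesis), so it proves the lemma under marginally weaker assumptions. What the paper's detour buys is the additional information that the abstract iterates themselves are sound over-approximations of the concrete ones, which is the fact that directly justifies the implemented algorithm (Kleene iteration in the abstract domain starting from $\vec{0}$); under your argument this is recovered only indirectly, by observing that the abstract least fixpoint is in particular an abstract pre-fixpoint. One small attribution slip: the monotonicity you invoke for the step $\F(\gamma(\vec{r}))\geq\F^{n+1}(\bot)$ is monotonicity of $\F$ \emph{as a functional} on the function space, which comes from the continuity of $\F_\E$ established in Section~\ref{sec:ho-fixpoint}, not from the preceding lemma of Section~\ref{sec:properties} (that lemma asserts monotonicity of the functions $\F^m(\bot)$ in their real-valued arguments). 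The fact you need is nonetheless available, so the argument stands.
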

\begin{proof}
  First, we show
  that \(\gamma(\widehat{\F}^m(\bot))\geq \F^m(\bot)\)
  holds for every $m$, by induction on $m$.
     The base case $m=0$ is trivial, since $\F^0(\bot)=\bot$.
     If $m>0$, then we have
     \[
\begin{array}{l}
     \widehat{\F}^m(\bot)=
     \alpha(\F(\gamma(\widehat{\F}^{m-1}(\bot)))) 
  \geq
     \alpha(\F(\F^{m-1}(\bot)))=
     \alpha(\F^{m}(\bot)).
\end{array}
     \]
     Since, by hypothesis, $\vec{r}\geq\alpha(\F^m(\bot))$ implies that
     $\gamma(\vec{r})\geq \F^m(\bot)$ (since $\F^m(\bot)$ is pointwise convex),
     we can conclude that $\gamma(\widehat{\F}^m(\bot))\geq \F^m(\bot)$.
  Now, suppose $\vec{r}$ is an abstract pre-fixpoint of \(\F\). Then 
\(
  \gamma(\vec{r})\geq\gamma(\widehat{\F}^m(\vec{r}))\geq\gamma(\widehat{\F}^m(\bot))\geq \F^m(\bot)
\),
  and as $\F$ is $\omega$-continuous, we have 
\(
  \gamma(\vec{r})\geq\sup_{m\in\omega}\F^m(\bot)=\mathbf{lpf}(\F)
\)
as required. 
  \end{proof}

Below we consider the combination of (b) and (d).
Figure~\ref{fig:algo-unary} shows a pseudo code for computing an upper-bound of
\(f(c)\) for the least solution \(f\) of \(f(x)=e\) and \(c\in[0,1]\).
In the figure, \(\alpha_h(x)=\frac{\ceil{mx}}{m}\).
The algorithm terminates under the assumptions that (i) \(e\) consists of
non-negative constants, \(x\), \(+\), \(\cdot\), and applications of \(f\), and 
(ii) every subexpression of \(e\) 
evaluates to a value in \([0,1]\) (if \(x\in [0,1]\) and \(f\in [0,1]\to[0,1]\)),
which are satisfied by the fixpoint equations obtained from a \pHORS{}
(recall Section~\ref{sec:properties}).
\iftwocol
\begin{figure*}[tb]
\else
\begin{figure}[tb]
\fi
\begin{alltt}
main(\(e\), \(c\))\{
  \(r\) := \([0,\ldots,0]\);  \(r'\) := \([1,\ldots,1]\) (* dummy *);
  while not(\(r\)=\(r'\)) do \{
    \(r'\) := \(r\); (* copy the contents of array \(r\) to \(r'\) *)
    for each \(i\in\set{0,\ldots,n}\) do \(r'[i]\) := \(\alpha_h\)(eval(\(e\), \(\set{f\mapsto r, x\mapsto \frac{i}{n}}\)))\};
  return apply(\(r\), \(c\)); \}
apply(\(r\), \(c\)) \{ (* apply the function represented by array \(r\) to \(c\) *)
  \(i\) := \(\floor{nc}\); (* \(\frac{i}{n}\leq c <\frac{i+1}{n}\) *)
  return \((i+1-nc)r[i]+(nc-i)r[i+1]\); \}
eval(\(e\), \(\rho\))\{
  match \(e\) with
     \(x\) \(\to\) return \(\rho(x)\) |  \(c\) \(\to\) return \(c\)
   | \(f(e')\) \(\to\) return apply(\(\rho(f)\), eval(\(e'\), \(\rho\)))
   | \(e_1+e_2\) \(\to\) return eval(\(e_1\),\(\rho\))+eval(\(e_2\),\(\rho\))
   | \(e_1\cdot{}e_2\) \(\to\) return eval(\(e_1\),\(\rho\))\(\cdot\)eval(\(e_2\),\(\rho\)) \}
\end{alltt}
\caption{Pseudo code for computing an upper-bound of \(f(c)\) where \(f(x)=e\) (unary function case)}
\label{fig:algo-unary}
\iftwocol
\end{figure*}
\else
\end{figure}
\fi
\iftwocol\else
\begin{example}
Consider \(f(x)=\frac{1}{4}x+\frac{3}{4}f(f(x))\) and let \(n=2\) and \(m=4\). 
The value \(r^{(j)}\) of \(r\) after the \(j\)-th iteration is given by:
\[
r^{(0)} = [0,0,0]; \qquad
r^{(1)} = [0,0.25,0.25]; \qquad
r^{(2)} = [0,0.25,0.5]; \qquad
r^{(3)} = [0,0.25,0.5].
\]
Thus, the upper-bound obtained for \(f(1)\) is \(0.5\). The exact value of \(f(1)=\frac{1}{3}\).
A more precise upper-bound is obtained by increasing the values of \(n\) and \(m\).
For example, if \(n=16\) and \(m=256\),
the upper-bound (obtained by running the tool reported in a later section)
is \(0.3398\cdots\). \qed
\end{example}
\fi
\subsubsection{Computing an upper-bound for a binary function}
\label{sec:ub:binary}
We now consider a fixpoint equation of the form \(f(x_1,x_2)=e\), where
\(x_1\) and \(x_2\) are such that \(0\leq x_1,x_2\), and \(x_1+x_2\leq 1\).
Such an equation is obtained from an order-2 \pHORS{} by using the fixpoint characterization in 
the previous section.
A new difficulty compared with the unary case is that 
\(f(x_1,x_2)\) may take a value outside \([0,1]\), or may even be undefined
for \((x_1,x_2)\in [0,1]\times[0,1]\) such that
\(x_1+x_2>1\). Figure~\ref{fig:multivariate} shows how we discretize the domain of \(f\).
The grey-colored and red-colored areas show the valid domain of \(f\), for which we wish to approximate
$f(x_1,x_2)$ using the values at discrete points.
An upper-bound of the value of \(f\) at a point \((x_1,x_2)\) in the grey area can be
obtained by (pointwise) linear interpolations from
(upper-bounds of) the values at the surrounding four points,
i.e.,
\((\frac{i_1}{n},\frac{i_2}{n}),
(\frac{i_1}{n},\frac{i_2+1}{n}),
(\frac{i_1+1}{n},\frac{i_2}{n}), (\frac{i_1+1}{n},\frac{i_2+1}{n})\)
where \(x_1\in [\frac{i_1}{n},\frac{i_1+1}{n}]\)
and \(x_2\in [\frac{i_2}{n},\frac{i_2+1}{n}]\),
as follows.
\[
\begin{array}{l}
\hat{f}(x_1,x_2) \\=
(i_2+1-nx_2)\hat{f}(x_1,\frac{i_2}{n}) + (nx_2-i_2)\hat{f}(x_1,\frac{i_2+1}{n})\\
= 
(i_2+1-nx_2)(i_1+1-nx_1)\hat{f}(\frac{i_1}{n},\frac{i_2}{n}) \\\qquad+
(i_2+1-nx_2)(nx_1-i_1)\hat{f}(\frac{i_1+1}{n},\frac{i_2}{n})\\\qquad +
(nx_2-i_2)(i_1+1-nx_1)\hat{f}(\frac{i_1}{n},\frac{i_2+1}{n}) \\\qquad+
(nx_2-i_2)(nx_1-i_1)\hat{f}(\frac{i_1+1}{n},\frac{i_2+1}{n}).
  \end{array}
\]
Note that \(\hat{f}(x,y)\geq f(x,y)\) at the four points imply that
\(\hat{f}(x_1,x_2)\geq f(x_1,x_2)\), because
\(f(x,y)\) is convex on each of \(x\) and \(y\)
(recall Section~\ref{sec:properties}).

\begin{figure}[tb]
\begin{center}
\includegraphics[scale=0.25]{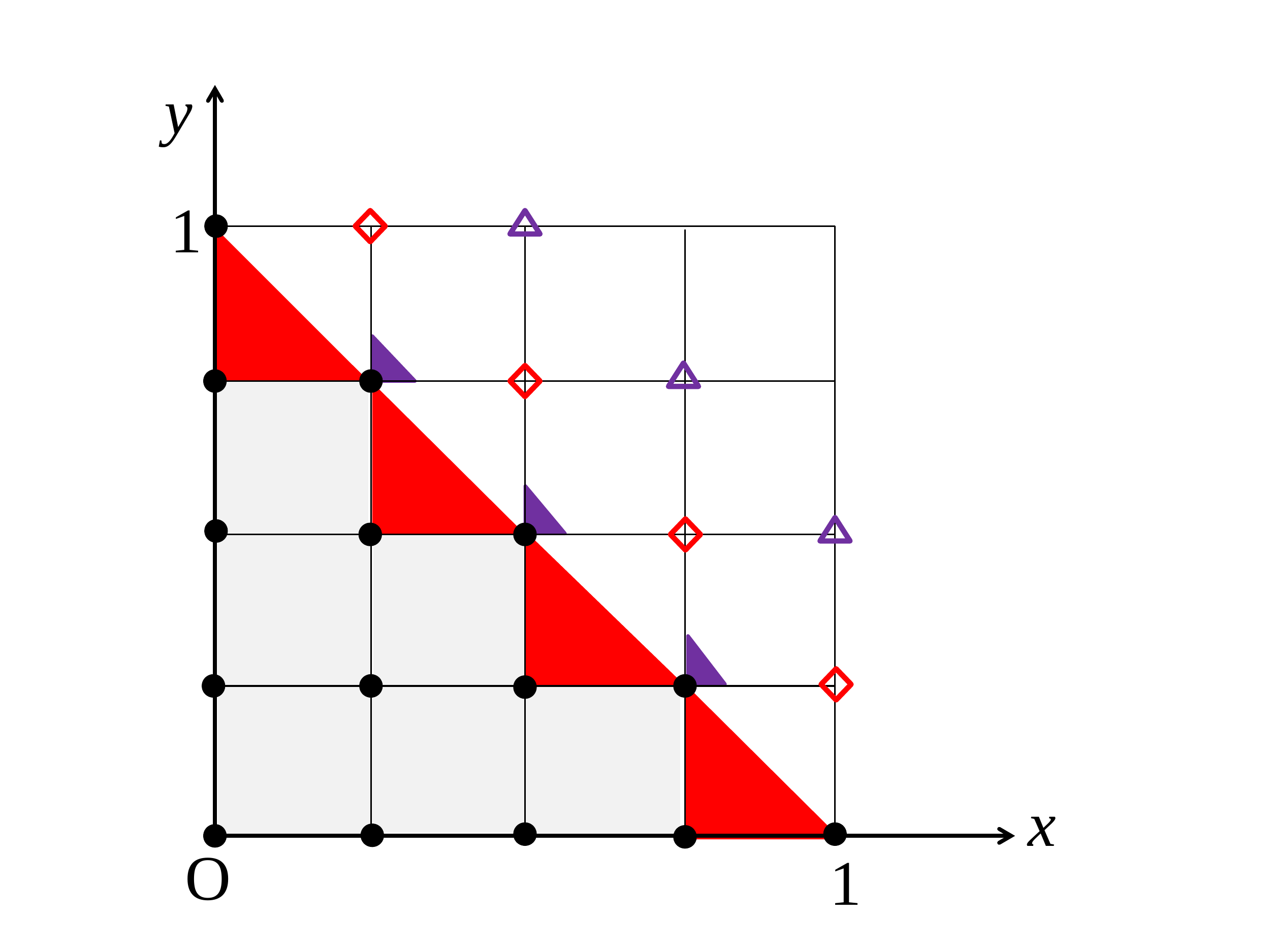}
\end{center}
\caption{Discretization in the case of a binary function \(f(x,y)\), whose domain is
\(\set{(x,y) \mid 0\leq x, 0\leq y, x+y\leq 1}\) (i.e., the grey and red areas). 
Outside the domain, the value of \(f(x,y)\) 
may not belong to \([0,1]\), or may be even undefined. The value at a point in the grey area
can be estimated by using those at discrete points 
\(\bullet\). To estimate the value
at a point in the red area, we also need the value of \(f(x,y)\) at a point marked by \(\textcolor{red}{\Diamond}\).}
\label{fig:multivariate}
\end{figure}

A difficulty is that to estimate the value of $f$ at a point in the
red area, we need the value at a red
point \textcolor{red}{\(\Diamond\)}, but the value of \(f\) at the red
point may be greater than 1 or even \(\infty\), being outside $f$'s
domain. To this end, we discretize the codomain of \(f\)
to \(\set{0,\frac{1}{m},\ldots,\frac{mh-1}{m},h,\infty}\) (instead
of \(\set{0,\frac{1}{m},\ldots,\frac{m-1}{m},1}\)) for some \(h\geq
1\). Any value greater than \(h\) is approximated to \(\infty\). The
value at a point in the red area is then approximated in the same way
as for the case of a point in the grey area, except that
if \(\hat{f}(\frac{i_1+1}{m},\frac{i_2+1}{m})=
\infty\), then \(\hat{f}(x_1,x_2)=1\).

A further complication arises when the equation contains
function compositions, as in \(f(x_1,x_2)=E[f(f(x_1,x_2),x_2)]\)
where \(E\) denotes some context. 
In this case, the point \((\hat{f}(x_1,x_2),x_2)\) may even be outside
the area surrounded by \textcolor{red}{\(\Diamond\)} and \(\bullet\)-points
either if \((x_1,x_2)\) is a \textcolor{red}{\(\Diamond\)}-point or
if \((x_1,x_2)\) is a \(\bullet\)-point but \(\hat{f}(x_1,x_2)+x_2\) is too large
due to an overapproximation. In such a case, \((\hat{f}(x_1,x_2),x_2)\) belongs to the purple area 
(lower left triangles) in the figure.
To this end, we also compute (upper-bounds of) the values at points marked by
\textcolor{purple}{\(\triangle\)} and use them to estimate the value at
a point in the purple area. 
If the point \((\hat{f}(x_1,x_2),x_2)\) is even outside the area surrounded by
\(\bullet\), \textcolor{red}{\(\Diamond\)}, or \textcolor{purple}{\(\triangle\)}, then 
 we use \(\infty\) as an upper-bound of \(f(f(x_1,x_2),x_2)\)
if \((x_1,x_2)\) is a \textcolor{red}{\(\Diamond\)}-, or \textcolor{purple}{\(\triangle\)}-point, 
and \(1\) if  \((x_1,x_2)\) is a \(\bullet\)-point.

Except the above differences, the overall algorithm is similar to the unary case
in Figure~\ref{fig:algo-unary}, and essentially the same soundness argument
as Lemma~\ref{lem:ub-soundness} applies.

\begin{example}
\label{ex:ub-binary}
Consider \(f(x_1,x_2)=x_1+x_2(f(x_1,x_2))^2\), and let \(n=m=2\).
The value of \(r=\left(\begin{array}{ccc}r_{0,2}&r_{1,2}&r_{2,2}\\
r_{0,1}&r_{1,1}&r_{2,1}\\
r_{0,0}&r_{1,0}&r_{2,0}
\end{array}\right)\),
where \(r_{i,j}\) is an upper-bound of the value of \(f(\frac{i}{2},\frac{j}{2})\),
changes as follows.
\[
\begin{array}{l}
\left(\begin{array}{ccc}0&0&0\\
0&0&0\\
0&0&0
\end{array}\right)
\longrightarrow
\left(\begin{array}{ccc}0&0.5 & 1\\
0& 0.5& 1\\
0&0.5& 1
\end{array}\right)
\iftwocol \\\else\fi
\longrightarrow
\left(\begin{array}{ccc}0& 1 & \infty\\
0& 1& \infty\\
0&0.5& 1
\end{array}\right)
\longrightarrow
\left(\begin{array}{ccc}0& \infty & \infty\\
0& 1& \infty\\
0&0.5& 1
\end{array}\right)
\end{array}
\]
Thus, for example, \(f(0, 0.5)\) 
and \(f(0.3,0.3)\) are overapproximated respectively by \(0\) and 
\(0.36 \hat{f}(0.5,0.5)+0.24 \hat{f}(0.5,0)+0.24\hat{f}(0,0.5)+0.16\hat{f}(0,0) = 0.48\). 
The exact values for \(f(0,0)\) and \(f(0.3,0.3)\) are \(0\) and \(\frac{1}{3}\); so
the upper-bound for \(f(0.3,0.3)\) is sound but imprecise.
By choosing \(n=16\) and \(m=256\), we obtain \(0.3359\cdots\) as an upper-bound of 
\(f(0.3,0.3)\). 

This is an example where the values of \(f\) at
red points in Figure~\ref{fig:multivariate} are \(\infty\).
The exact value of \(f(x_1,x_2)\) for general \(x_1\) and \(x_2\) is given by:
\[
f(x_1,x_2) = \left\{ \begin{array}{ll}x_1 & \mbox{if \(x_2=0\)}\\
  \frac{1-\sqrt{1-4x_1x_2}}{2x_2} & \mbox{if $x_1>0$}
\end{array}\right.
\]
Thus, 
\(f(0.5,0.5)=1\), but \(f(x_1,x_2)\) is undefined whenever \(x_1x_2>0.25\); in particular
 \(f(0.5,1)\) and \(f(1,0.5)\) (the values of red points in Figure~\ref{fig:multivariate}
for the case \(n=2\)) are undefined.
\qed
\end{example}

\subsubsection{Computing an Upper-Bound: General Case}
\label{sec:ub:general}

The binary case discussed above can be easily extended to handle the general case,
where the goal is to estimate the value of \(f_1(c_1,\ldots,c_{\ell_1})\)
for the least solution of the fixpoint equations:
\[
f_1(x_1,\ldots,x_{\ell_1}) = e_1\qquad\cdots\qquad f_k(x_1,\ldots,x_{\ell_k})=e_k.
\]
Here, the formal arguments 
\(x_1,\ldots,x_{\ell_i}\) of each function \(f_i\) are partitioned into
several groups \((x_1,\ldots,x_{d_{i,1}}),(x_{d_{i,1}+1},\ldots,x_{d_{i,2}}),\ldots,
(x_{d_{g_{i}-1}+1},\ldots,x_{\ell_i})\), so that the sum of the values of the variables
in each group ranges over \([0,1]\).
Following the binary case, we discretize the domain so that each variable ranges over
\(\set{0,\frac{1}{n},\ldots,\frac{n-1}{n},1}\), where the variables
in each group \((x_{d_{i,j-1}+1},\ldots,x_{d_{i,j}})\) are constrained by
\(x_{d_{i,j-1}+1}+\cdots+x_{d_{i,j}}\leq \frac{n+2}{n}\).
Note that we choose \(\frac{n+2}{n}\) instead of \(1\) as the upper-bound of the sum, to include
the points 
\textcolor{red}{\(\Diamond\)} and \textcolor{purple}{\(\triangle\)}
in Figure~\ref{fig:multivariate}. We write \(D_i\) for the discretized domain of
function \(f_i\), and \(D_i'\) for the subset of \(D_i\) where 
the variables in each group are constrained by \(x_{d_{i,j-1}+1}+\cdots+x_{d_{i,j}}\leq 1\);
note that \(f_i(x_1,\ldots,x_{\ell_i})\in [0,1]\) for \((x_1,\ldots,x_{\ell_i})\in D_i'\),
but \(f_i(x_1,\ldots,x_{\ell_i})\) may be greater than \(1\) or undefined 
for \((x_1,\ldots,x_{\ell_i})\in D_i\setminus D_i'\).
We also write \(\overline{D_i}\) for the set 
\(\set{(x_1,\ldots,x_{\ell_i})\mid (\ceil{nx_1}/n,\ldots,\ceil{nx_{\ell_i}}/n)\in D_i}\)
(i.e., the set of points for which the value of \(f_i\) can be approximated by using
values at points in \(D_i\)).

The pseudo code for computing an upper-bound of \(f_1(c_1,\ldots,c_{\ell_1})\)
is given in Figure~\ref{fig:algo-general}.
On the 9th line (``\(\texttt{if }\vec{v}\in D_i'\texttt{ then ...}\)''), 
we also make use of the constraint that \(\Sigma_{f'\in\fgroup(f_i)} f'(\seq{v})\) ranges over \([0,1]\)
if \(\seq{v}\) belongs to the valid domain \(D_i'\)
(recall the 3rd property in Section~\ref{sec:properties}). We assume that the procedure
\(\texttt{lb}(f',\seq{v})\) returns a sound lower-bound of \(f'(\seq{v})\), e.g., by using Kleene iteration.
See Remark~\ref{rem:ub-by-lb} to understand the need for this additional twist.
\iftwocol
\begin{figure*}[tb]
\else
\begin{figure}[tb]
\fi
\begin{alltt}
main(\(e_1,\ldots,e_k\), \(\vec{c}\))\{
  \(\rho\) := \([f_1\mapsto [\vec{0}],\ldots,f_k\mapsto [\vec{0}]]\);
      (* \(\rho(f_i)\) is an array indexed by each element of \(D_i\) *)
  \(\rho'\) := \([f_1\mapsto [\vec{1}],\ldots,f_k\mapsto [\vec{1}]]\); (* dummy *)
  while not(\(\rho\)=\(\rho'\)) do \{
    \(\rho'\) := \(\rho\); (* copy the contents *)
    for each \(i\in\set{0,\ldots,k}\) do
       for each \(\vec{v}\in{}D_i\) do 
          let r = eval(\(e_i\), \(\rho\set{\vec{x}\mapsto \vec{v}}\), \(\vec{v}\stackrel{?}{\in}D'_i\)) in
            if \(\vec{v}\in D'_i\) then \(\rho'(f_i)[\vec{v}]\) := \(\alpha_h\)(min(r, \(1-\Sigma_{f'\in\fgroup(f_i)\setminus\{f\}}\) lb(\(f'\), \(\vec{v}\)))) 
            else \(\rho'(f_i)[\vec{v}]\) := \(\alpha_h\)(r);
  return apply(\(\rho(f_1)\), \(\vec{c}\)); \}

eval(\(e\), \(\rho\), \(b\))\{ 
 (* \(b\) represents whether we are computing the value of \(f_i\) in the valid
    domain; in that case, the value of \(e\) should range over \([0,1]\). *)
  let \(r\) = 
    match \(e\) with
       \(x\) \(\to\) \(\rho(x)\) | \(c\) \(\to\) \(c\)
     | \(f_i(\vec{e'})\) \(\to\) let \(\vec{v}\) = eval(\(\vec{e'}\),\(\rho\),\(b\)) in
                 if \(\vec{v}\not\in \overline{D_i}\) then \(\infty\) else apply(\(\rho(f_i)\), \(\vec{v}\))
     | \(e_1+e_2\) \(\to\) eval(\(e_1\), \(\rho\), \(b\))+eval(\(e_2\), \(\rho\), \(b\))
     | \(e_1\cdot{}e_2\) \(\to\) eval(\(e_1\), \(\rho\), \(b\))\(\cdot\)eval(\(e_2\), \(\rho\), \(b\)) 
  in if \(b\) then return min(\(r\),1) else return \(r\) \}
\end{alltt}
\caption{Pseudo code for computing an upper-bound for the general case}
\label{fig:algo-general}
\iftwocol
\end{figure*}
\else
\end{figure}
\fi

The function \(\alpha_h\) takes a real value
(or \(\infty\)) \(x\), and returns the least element in 
\(\set{0,\frac{1}{n},\ldots,\frac{mh-1}{m},h,\infty}\) that is no less than \(x\).
The function \texttt{apply} in the figure takes the current approximations of
values of \(f_i\) at the points \(D_i\) and the arguments \(\vec{v}\in \overline{D_i}\), and 
returns an approximation of \(f_i(\vec{v})\). 
It is given by \(\hat{f}(\vec{v})\), where:
\[
\begin{array}{l}
\hat{f_i}(x_1,\ldots,x_{\ell_i}) =\iftwocol\\\fi \sum_{b_1,\ldots,b_{\ell_i}\in\set{0,1}} p_1^{b_1}(1-p_1)^{1-b_1}\cdots
p_{\ell_i}^{b_{\ell_i}}(1-p_{\ell_i})^{1-b_{\ell_i}}\iftwocol\\\hfill\fi\hat{f_i}\left(\frac{i_1+b_1}{n},\ldots,\frac{i_{\ell_i}+b_{\ell_i}}{n}\right)
\end{array}
\]
Here, \(i_j = \floor{nx_j}\),
 \(p_j=nx-i_j\), and
\(\hat{f_i}(\frac{i_1+b_1}{n},\ldots,\frac{i_m+b_m}{n})\) is the current
approximation of the value of \(f_i\) at 
\((\frac{i_1+b_1}{n},\ldots,\frac{i_m+b_m}{n})\in D_i\).
The function \(\hat{f}\) above is obtained by applying linear interpolations coordinate-wise.

\begin{remark}\rm
\label{rem:ub-by-lb}
To see the motivation for the 9th line in Figure~\ref{fig:algo-general}, consider the following
fixpoint equations:
\begin{align*}
S &= f_1()\\
f_1() &= 0.5 \cdot (f_1()\cdot f_1() + f_2()\cdot f_2())\\
f_2() &= 0.5 + f_1()\cdot f_2().
\end{align*}
They are obtained from the following order-1 PHORS \(\GRAM_{\mathtt{treeeven}}\):
\begin{align*}
S\,z &= F\,z\,\Omega\\
F\;x_1\;x_2 &= x_2 \C{p} F\;(F\;x_1\;x_2)\;(F\;x_2\;x_1),
\end{align*}
where \(p=0.5\), and
\(f_1()\) (\(f_2()\), resp.) represents the probability that
\(x_1\) (\(x_2\), resp.) is used by \(F\).
This PHORS is actually a variation
of \(\GRAM_6\) from Example~\ref{ex:listgen-even} (with manual optimization),
whose termination probability represents the probability that a program that randomly generates 
binary trees (instead of lists, unlike in the case of Example~\ref{ex:listgen-even}) 
contains an even number of leaves.
Since the events that \(F\) uses the first and second arguments are mutually exclusive, we have
the constraint \(f_1()+f_2()\leq 1\). The exact solution for the equations above is
\(f_1()=1-\frac{1}{\sqrt{2}}\) and \(f_2()=\frac{1}{\sqrt{2}}\). Since their lower-bounds can
be computed with arbitrary precision, thanks to the part 
\texttt{\(1-\Sigma_{f'\in\fgroup(f_i)\setminus\{f\}}\) lb(\(f'\), \(\vec{v}\))} of
the 9th line of Figure~\ref{fig:algo-general}, we can also compute upper-bounds with
arbitrary precision (as upper-bounds of \(f_1()\) and \(f_2()\) are respectively provided
by \(1-\texttt{lb}(f_2,())\) and \(1-\texttt{lb}(f_1,())\)).

If the then-clause were the same as the else-clause on the 10th line, then we would not get 
a precise upper-bound for the following reason. 
When the main loop in Figure~\ref{fig:algo-general} stops, 
upper-bounds \(\overline{f}_1()\) and \(\overline{f}_2()\) 
must either have reached the maximal value \(1\), or satisfy:
\[
\begin{array}{l}
 \overline{f}_1() \ge 0.5 \cdot (\overline{f}_1()\cdot \overline{f}_1() + \overline{f}_2()\cdot \overline{f}_2())\\
\overline{f}_2() \ge 0.5 + \overline{f}_1()\cdot \overline{f}_2().
\end{array}
\]
These conditions imply that:
\[\overline{f}_1()+\overline{f}_2() \ge 
0.5 \cdot (\overline{f}_1()\cdot \overline{f}_1() + \overline{f}_2()\cdot \overline{f}_2())
+0.5 + \overline{f}_1()\cdot \overline{f}_2(),\]
i.e., 
\[0\ge (\overline{f}_1()+ \overline{f}_2()-1)^2,\]
which is equivalent to \(\overline{f}_1()+ \overline{f}_2()=1\).
Thus, unless the co-domain of \(\alpha_h\) contains the exact values 
\(1-\frac{1}{\sqrt{2}}\) and \(\frac{1}{\sqrt{2}}\), 
the main loop would only return the imprecise upper-bound \(\overline{f}_1()=\overline{f}_2()=1\). \qed
\end{remark}

\subsection{Order-$n$ Case}\label{sec:ub:higherorder}
We now briefly discuss how to extend the method discussed above
to obtain a sound (but incomplete) method for overapproximating the
termination probability of \pHORS{} of order greater than 2.
Recall that by the fixpoint characterization given in Section~\ref{sec:order-n-1-equation},
it suffices to overapproximate the least solution of equations of the form
\( \vec{f} = {\F}(\vec{f})\)
where \(\vec{f}\) is a tuple of order-(\(n-1\)) functions on reals.

The abstract interpretation framework~\cite{DBLP:conf/popl/Cousot97}
provides
a sound but incomplete methodology: the reason why we decided to
slightly divert from it in Section~\ref{sec:ub-by-disc} is that this
allows us to use piecewise linear functions, which are more precise.
We first recall a basic principle of abstract interpretation.
Let \((C, \Leq_C,\bot_C)\) and \((A, \Leq_A,\bot_A)\) be \(\omega\)-cpos.
Suppose that \(\alpha: (C,\Leq_C)\to(A, \Leq_A)\) and 
\(\gamma:(A, \Leq_A)\to (C,\Leq_C)\) are continuous (hence also monotonic) such that
\(\alpha(\gamma(a))=a\) for every \(a\in A\), and 
\( c\Leq_C \gamma(\alpha(c))\) for every \(c\in C\).
Suppose also that \(\F\) is a continuous 
function from \((C,\Leq_C,\bot_C)\) to \((C,\Leq_C,\bot_C)\).
Let \(\abs{\F}:(A,\Leq_A,\bot_A)\to(A,{\Leq_A},\bot_A)\) be 
\(\lambda x\in A.\alpha(\F(\gamma(x)))\), which is an ``abstract version'' of $\F$. 
Note that \(\abs{\F}\) is also continuous. Then, we have:
\begin{proposition}
  \label{prop:abs-fixpoint}
\(\LFP(\F) \Leq_C \gamma(\LFP(\abs{\F}))\). 
\end{proposition}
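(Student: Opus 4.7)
The plan is to prove the inequality by Kleene iteration, establishing by induction on $n \in \omega$ the intermediate claim
\[
\F^n(\bot_C) \Leq_C \gamma(\abs{\F}^n(\bot_A))
\]
and then passing to the supremum. This is the standard soundness argument in abstract interpretation, and none of the steps should present a serious obstacle given the stated hypotheses.

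For the base case $n=0$, the left-hand side is $\bot_C$, which is below anything in $C$, in particular below $\gamma(\bot_A)$. For the inductive step, I would first use monotonicity of $\F$ (which follows from its $\omega$-continuity) together with the induction hypothesis to get
\[
\F^{n+1}(\bot_C) \;=\; \F(\F^n(\bot_C)) \;\Leq_C\; \F(\gamma(\abs{\F}^n(\bot_A))).
\]
Then I would apply the hypothesis $c \Leq_C \gamma(\alpha(c))$ to the element $c = \F(\gamma(\abs{\F}^n(\bot_A)))$, yielding
\[
\F(\gamma(\abs{\F}^n(\bot_A))) \;\Leq_C\; \gamma(\alpha(\F(\gamma(\abs{\F}^n(\bot_A))))) \;=\; \gamma(\abs{\F}^{n+1}(\bot_A)),
\]
where the equality is by definition of $\abs{\F}$. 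Chaining these two inequalities closes the induction.

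To conclude, I would take the $\omega$-supremum on both sides. On the left this gives $\LFP(\F) = \bigsqcup_n \F^n(\bot_C)$ since $\F$ is continuous on the $\omega$-cpo $C$. On the right, since $\gamma$ is continuous, it commutes with the $\omega$-supremum, so
\[
\bigsqcup_n \gamma(\abs{\F}^n(\bot_A)) \;=\; \gamma\!\left(\bigsqcup_n \abs{\F}^n(\bot_A)\right) \;=\; \gamma(\LFP(\abs{\F})),
\]
using also that $\abs{\F}$ is continuous (as a composition of the continuous maps $\gamma$, $\F$, $\alpha$). Combining the two sides yields $\LFP(\F) \Leq_C \gamma(\LFP(\abs{\F}))$, as required.

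If any step deserves care, it is merely checking that $\abs{\F}$ is itself $\omega$-continuous so that $\LFP(\abs{\F}) = \bigsqcup_n \abs{\F}^n(\bot_A)$ is guaranteed; this is immediate from the assumed continuity of $\alpha$, $\F$, and $\gamma$. The property $\alpha(\gamma(a))=a$ is not needed for this inequality—only the one-sided condition $c \Leq_C \gamma(\alpha(c))$ and monotonicity/continuity of the various maps are used.
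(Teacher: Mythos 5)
Your proof is correct and follows essentially the same route as the paper's: the same induction establishing $\F^n(\bot_C) \Leq_C \gamma(\abs{\F}^n(\bot_A))$, with the inductive step driven by monotonicity of $\F$ and the hypothesis $c \Leq_C \gamma(\alpha(c))$, followed by passing to the supremum. Your closing observations (that only monotonicity of $\gamma$ is really needed on the right-hand side, and that $\alpha(\gamma(a))=a$ is not used) are accurate refinements of the same argument.
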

\noindent
This result is standard (see, e.g., \cite{DBLP:conf/popl/Cousot97}, Proposition~18) but we provide a proof 
for the convenience of the reader.

\ifacm
\begin{proof}[Proof of Proposition~\ref{prop:abs-fixpoint}]
\else
\begin{proofn}{Proof of Proposition~\ref{prop:abs-fixpoint}}
\fi
By the monotonicity of \(\F\) and \(\abs{\F}\), we have:
\(
\bot_C\Leq_C \F(\bot_C) \Leq_C \F^2(\bot_C) \Leq_C \cdots
\) and 
\(\bot_A\Leq_A \abs{\F}(\bot_A) \Leq_A {\abs{\F}}^2(\bot_A) \Leq_A \cdots\);
hence both \({\Lub_C} \set{ \F^i(\bot_C)\mid {i\in \omega}}\) and 
\({\Lub_A}\set{ {\abs{\F}}^i(\bot_A)\mid i\in \omega}\) exist, and by
the \(\omega\)-continuity of
\(\F\) and \(\abs{\F}\), they are the least fixpoints of \(\F\) and \(\abs{\F}\) respectively.
Therefore, it suffices to show that 
\(\F^i(\bot_C) \Leq_C \gamma({\abs{\F}}^i(\bot_C))\). The proof proceeds by induction on \(i\).
The base case \(i=0\) is trivial. If \(i>0\), we have:
\begin{align*}
\gamma({\abs{\F}}^i(\bot_C))
&= \gamma(\alpha(\F(\gamma({\abs{\F}}^{i-1}(\bot_C))))) \iftwocol \\ \fi
&&\tag*{(by the definition of \(\abs{\F}\))}\\
&\sqsupseteq_C \F(\gamma({\abs{\F}}^{i-1}(\bot_C))) &&\tag*{(by \(\gamma(\alpha(x))\sqsupseteq_C x\))}\\
&\sqsupseteq_C \F(\F^{i-1}(\bot_C)) &&\tag*{(by induction hypothesis)}\\
&= \F^i(\bot_C).
    \tag*{\qedhere}
\end{align*}
\ifacm
\end{proof}
\else
\end{proofn}
\fi

By the proposition above, to overapproximate the least fixpoint of \(\F\), it suffices
to find an appropriate abstract domain \((A,\Leq_A,\bot_A)\) and \(\alpha,\gamma\)
that satisfy the conditions above, so that the least fixpoint of \(\abs{\F}\) is easily computable.
In the case of overapproximation of the termination probability of order-\(n\) \pHORSs{}, we need to set
up an abstract domain \((A,\Leq_A,\bot_A)\) for 
a tuple of order-(\(n-1\)) functions on reals.
A simple solution (that is probably too naive in practice) is to use
the abstract domain consisting of higher-order step functions, inductively defined by:
\begin{align*}
A^{\realt} &= \set{\frac{0}{m},\frac{1}{m},\ldots,\frac{m-1}{m},\frac{m}{m}, \infty}\\
\Leq_{A^{\realt}} &= \set{(\frac{i}{m},\frac{j}{m}) \mid 0\leq i\leq j\leq m} \cup \set{(\frac{i}{m},\infty) \mid 0\leq i\leq m}\\
\alpha^{\realt}(x) &= \left\{\begin{array}{ll}
   \frac{i}{m} &\mbox{if \(\frac{i-1}{m}\leq x\leq \frac{i}{m}\)}\\
   \infty &\mbox{if \(x>1\)}
\end{array}
\right.\\
\gamma^{\realt}(x) &= x\\
A^{\tau_1\to\tau_2} &=
  \set{f\in A^{\tau_1}\to A^{\tau_2}\mid \mbox{$f$ is monotonic}}\\
\Leq_{A^{\tau_1\to\tau_2}} &= 
  \set{(f_1,f_2)\in A^{\tau_1\to\tau_2}\times A^{\tau_1\to\tau_2} \iftwocol\\\hfill\fi\mid 
 \forall x\in A^{\tau_1\to\tau_2}. f_1\,x \Leq_{A^{\tau_2}}f_2\,x }\\
\alpha^{\tau_1\to\tau_2}(f) &=
  \set{(y, \alpha^{\tau_2}(f(\gamma^{\tau_1}(y)))) \mid y\in A^{\tau_1}}\\
\gamma^{\tau_1\to\tau_2}(f') &=
  \set{(x, \gamma^{\tau_2}(f'(\alpha^{\tau_1}(x)))) \mid x\in C^{\tau_1}}.
\end{align*}
Here, the concrete domain \(C^\tau\) denotes \(\sem{\tau}\) in
Section~\ref{sec:ho-fixpoint}.
Then, \(\alpha^{\tau}\) and \(\beta^{\tau}\) satisfy the required conditions
(\(\alpha(\gamma(a))=a\) and \( c\Leq_C \gamma(\alpha(c))\)). Since \(A^{\tau}\) is finite,
we can effectively compute \(\LFP(\abs{\F})\).

We note, however, that the above approach has the following shortcomings. First,
although \(A^{\tau}\) is finite, its size is too large: \(k\)-fold exponential for order-\(k\)
type \(\tau\). As in the case of non-probabilistic HORS model checking~\cite{Kobayashi09PPDP,Kobayashi13horsat,Ramsay14POPL}, therefore, we need a practical algorithm that avoids eager enumeration of
abstract elements.
Second, due to the use of step functions, the obtained upper-bound will be
too imprecise. 
To see why step functions suffer from the incompleteness, consider the equations:
\(s = f(\frac{1}{2})\) and \(f(x)=\frac{1}{2}x + f(\frac{1}{2}x)\).
The exact least solution is \(s=\frac{1}{2}\) and \(f(x)=x\).
With step functions (where \(\frac{1}{n}\) is the size of each interval), however,
the abstract value \(\hat{f}(\frac{1}{n})\) must be no less than
\(\frac{1}{2}\frac{1}{n}+f(\frac{1}{2}\cdot\frac{1}{n})\),
but \(f(\frac{1}{2}\cdot\frac{1}{n})\)  is overapproximated as
\(\hat{f}(\frac{1}{n})\) (because \(\frac{1}{2n}\) belongs to the interval
\((0,\frac{1}{n}]\)). Therefore, 
  \(\hat{f}(\frac{1}{n})\) should be no less than
\(\frac{1}{2n}+\hat{f}(\frac{1}{n})\), which is impossible.
Thus, the computation diverges and \(1\) is obtained as an obvious upper bound.

The step functions only use monotonicity of the least solution of
fixpoint equations.
As in the use of stepwise multilinear functions in
the case of order-1 equations (for order-2 \pHORSs{}), exploiting 
an additional property like convexity would be important for obtaining a more precise method;
this is left for future work.

\section{Experiments}
\label{sec:exp}

We have implemented a prototype tool to compute lower/upper bounds of
the least solution of order-1 fixpoint equations (that are supposed to
have been obtained from order-2 or order-1 \pHORS{} by using the translations in 
Section~\ref{sec:fixpoint} modulo some simplifications;
we have not yet implemented the translators from \pHORS{} to fixpoint equations, which is easy but tedious).
The computation of a lower bound is based on naive Kleene iterations, and that of an upper-bound
is based on the method discussed in Section~\ref{sec:ub-by-disc}.
The tool uses floating point arithmetic, and ignores rounding errors.

We have tested the tool on several small but tricky examples.
The experimental results are summarized in Table~\ref{tab:exp}.
The column ``equations'' lists the names of systems of equations.
The column
``\#iter'' shows the number of Kleene iterations used for computing a
lower-bound. The columns  ``\#dom'' and ``\#codom'' show
the numbers of partitions of the interval \([0,1]\) for the domain
and codomain of a function respectively. The default values
for them were set to 12, 16, and 512, respectively in the experiment;
they were, however, adjusted for some of the equations.
The columns ``l.b.'' and ``u.b.'' are lower/upper bounds computed by
the tool. The lower (upper, resp.) bounds shown in the table have been obtained by 
rounding down (up, resp.) the outputs of the tool to 3-decimal places.
The column ``step'' shows the upper-bounds obtained by using step functions
instead of piecewise linear functions; this column has been prepared to confirm the advantage
of piecewise linear functions over step functions.
The column ``exact'' shows the exact value of the least solution
when we know it. The column ``time'' shows the total time for
computing both lower and upper bounds (excluding the time for ``step'').

\begin{table}[tbp]
\caption{Experimental results (times are in seconds).}
\label{tab:exp}
\begin{center}
\begin{tabular}{|l|r|r|r|r|r|r|r|r|}
\hline 
equations & \#iter & \#dom & \#codom & l.b. & u.b. & step & exact & time \\
\hline \hline
Ex2.3-1 & 12 & 16 & 512 & 0.333 & 0.336 & 1.0 & \(\frac{1}{3}\) & 0.010 \\\hline
Ex2.3-0 & 12 &  16 & 512 &0.333 & 0.334 & 0.334 & \(\frac{1}{3}\)& 0.008 \\\hline
Ex2.3-v1 & 12 &  16 & 512 & 0.312 & 0.315 & 0.365 & - &0.005 \\\hline
Ex2.3-v2 & 12 &  16 & 512 & 0.262 & 0.266 & 0.321 & - &0.022 \\\hline
Ex2.3-v3 & 12 &  16 & 512 & 0.263 & 0.266 & 0.309 & - &0.01\\\hline
Ex2.4 & 12 &  16 & 512 & 0.320 & 0.323 & 0.329 & - &0.011 \\\hline
Double & 12 &  16 & 512 & 0.649 & 0.653 &1.0 & - & 0.010\\\hline
Listgen & 15 &  16 & 512 & 0.999 & 1.0 &1.0 & 1.0 & 0.009\\\hline
Treegen & 15 &  64 & 4096 & 0.618 & 0.619 &1.0 & $\frac{\sqrt{5}-1}{2}$ & 0.471\\\hline
Treegenp & 12 &  16 & 512 & 1.0 & 1.0 &1.0 & 1.0 & 0.011\\\hline
ListEven & 12 &  32 & 1024 & 0.666 & 0.667 & 0.667 & $\frac{2}{3}$ & 0.009\\\hline
ListEven2 & 12 &  16 & 512 & 0.749 & 0.75 & 0.75 & $\frac{3}{4}$ & 0.013\\\hline
Determinize & 12 &  16 & 512 & 0.993 & 1.0 &1.0 & 1.0 & 9.64\\\hline
TreeEven(0.5) & 15 & 64  & 4096 & 0.286 & 0.299 & 0.300 & \(1-\frac{1}{\sqrt{2}}\) & 0.050\\\hline
TreeEven(0.49) & 15 & 64 & 4096 & 0.276 & 0.280 & 0.280 & 0.2774\(\cdots\)& 0.052\\\hline
TreeEven(0.51) & 15 & 64 & 4096 & 0.287 & 0.290 & 0.290 & 0.2887\(\cdots\) & 0.055\\\hline
Ex5.4(0,0) & 12 &  16 & 512 & 0.0 & 0.0 & 0.0 & 0 & 0.008\\\hline
Ex5.4(0.3,0.3) & 12 &  16 & 512 & 0.333 & 0.336 & 0.35 & \(\frac{1}{3}\) & 0.007\\\hline
Ex5.4(0.5,0.5) & 10000 &  16 & 512 & 0.999 & 1.0 & 1.0 & 1 & 0.010\\\hline
Discont(0,1) &  12 & 16 & 512 & 0.0 & 0.0 & 0.0 & 0 & 0.006\\\hline
Discont(0.01,0.99) & 1000 & 16 & 512 & 0.999 & 1.0 & 1.0 & 1 & 0.006\\\hline
Incomp & 10000 & 16 & 512 & 0.299 & 1.0 & 1.0 & 0.3 & 0.003\\\hline
Incomp & 10000 &10 & 100 & 0.299 & 0.3 & 0.3 & 0.3 & 0.003\\\hline
Incomp2 & 12 & 16 & 512 & 0.249 & 1.0 & 1.0 & 0.25 & 0.003\\\hline
Incomp2 & 12 & 256 & 65536 & 0.249 & 1.0 & 1.0 & 0.25 & 2.87\\\hline
\end{tabular}
\end{center}
\end{table}

The equations ``Ex2.3-1'' and ``Ex2.3-0'' are 
order-1 and order-0 equations obtained from the \pHORS{} in
Example~\ref{ex:random-walk} (see also Examples~\ref{ex:order-n-equation} and \ref{ex:random-walk-eq})
by using the translations
in Sections~\ref{sec:order-n-equation} and
\ref{sec:order-n-1-equation} respectively; specifically,
``Ex2.3-1'' consists of \(s=f(1)\) and \(f(x)=0.25x+0.75f(f(x))\).
The equations ``Ex2.3-v1'', ``Ex2.3-v2'', and ``Ex2.3-v3'' are variations
of them, where the equation on \(f\) is replaced by
\(f(x)=0.25x+0.75f(f(x^2))\), 
\(f(x)=0.25x+0.75f(f(f(x^2)))\),  and 
\(f(x)=0.25x+0.75(f(x))^2\), respectively.
``Ex2.4'' is the equations obtained from the order-2 \pHORS{}
in Example~\ref{ex:order2-phors} (see also Example~\ref{ex:order2-phors-eq}).
The equations ``Double'' are those obtained from the following order-2 \pHORS{}:
\begin{align*}
S\ &=\ F\,H \\ 
H\,x\,y &= x \C{\frac{1}{2}} y\\
F\,g &= g\,\Te\, (F(D\,g))\\
D\,g\,x\,y &= g\,(g\,x\,y)\,y,
\end{align*}
with manual simplifications.
The equations ``Listgen'', ``Treegen'', and  ``Treegenp'' are from Example~\ref{ex:listgen},
corresponding to \(\GRAM_3\), \(\GRAM_4\) and \(\GRAM_5\), respectively.
The equations ``ListEven'' and ``ListEven2'' are from Example~\ref{ex:listgen-even}, and
``Determinize'' is from Example~\ref{ex:palgo}.
``TreeEven(\(p\))'' (for \(p\in\set{0.5,0.49,0.51}\)) is from Remark~\ref{rem:ub-by-lb}.
If we disable the trick (the one on line 9 in Figure 7) we discussed in the remark, 
the tool returns an imprecise upper-bound of \(1.0\) for \(p=0.5\) (for \(p=0.49\) and \(p=0.51\),
however, the tool can compute a precise upper-bound even without the trick).
The equations ``Ex5.4($x$,$y$)'' (for \((x,y)\in\set{(0,0), (0.3,0.3),(0.5,0.5)}\))
are from Example~\ref{ex:ub-binary}. 
The equations ``Discont($p$,$1-p$)'' consist of:
\(s=f(p,1-p)\) and \(f(x_0,x_1)=x_0+x_1f(x_0,x_1)\),
which is obtained from \pHORS{}:
\[ S = F\,G\qquad F\,g = g\,\Te\,(F\,g)\qquad G\,x_0\,x_1 = x_0\C{p}x_1.\]
Interestingly, \(f\) is discontinuous at \((0,1)\) (in the usual sense
of analysis in mathematics; it is still \(\omega\)-continuous
as functions on \(\omega\)-cpo's): the exact value of \(f\) is
given by:
\[f(x_0,x_1) = \left\{\begin{array}{ll}
  0 & \mbox{if $x_0=0$}\\
  \frac{x_0}{1-x_1}&\mbox{if $x_0>0$}.
  \end{array}
  \right.
  \]
The equations ``Incomp'' and ``Incomp2'' consist of:
\[ s = f(s)\qquad\qquad f\,x = x^2+0.4x+0.09,\]
and 
\[ s = f(s)\qquad\qquad f\,x = 0.5x^2+2f(0.5x)\]
respectively.
They do not correspond to any \pHORS{}
--- in fact, the value of \(f(1)\) for Incomp is \(1.49\), 
which does not make sense as a probability.
We have included them since they show a source of the possible incompleteness
of our method. Indeed, the tool fails to find precise upper bounds.
To see why the tool does not work for Incomp1 (with the the default values of \#dom and \#codom), note that since \(s = f(s) = s^2+0.4s+0.09\), 
 \(\hat{s} \geq \hat{s}^2+0.4\hat{s}+0.09\) must be satisfied for any valid upper-bound \(\hat{s}\).
However, \(\hat{s} \geq \hat{s}^2+0.4\hat{s}+0.09\)
 is equivalent to \(0\geq (\hat{s}-0.3)^2\), which is satisfied only by \(\hat{s}=0.3\). 
So, the only valid upper-bound for \(s\) is actually the exact one \(0.3\). But then
an upper-bound \(\hat{f}\) of \(f\) must satisfy \(\hat{f}(0.3)=0.3\), which can be found only if
the set of discrete values (used for abstracting the domain and codomain) contains
\(0.3\).
That is why the tool returns
 \(1\) (which is the largest value, assuming that \(s\) represents a probability) for the default
values of \#dom and \#codom. When we adjust them to \(10\) and \(100\) (so that \(0.3\) belongs
to the sets of abstract values of domains and codomains), 
the precise upper-bound (i.e., \(0.3\)) is obtained; 
this is, however, impossible in general, without knowing the exact solution a priori.

The reason for ``Incomp2'' is more subtle.
Notice that the least solution for
\[ s = f(s)\qquad f\,x = 0.5x^2+2f(0.5x)\]
is \(f(x)=x^2\). Let \(\frac{1}{n}\) be
 the size of each interval used for abstracting the domain.
Suppose that, at some point,
an upper-bound of \(f(\frac{1}{n})\) becomes \(\frac{c}{n^2}\).
Due to the linear interpolation (and since the value at \(x=0\) converges to
\(0\)), the value of \(f\) at \(0.5r\) (which belongs to the domain \((0,\frac{1}{n})\))
is overapproximated by
\(0.5 \cdot \frac{c}{n^2}\).
Thus, at the next iteration,
the upper-bound at \(\frac{1}{n}\) is further updated to a value greater than
\(0.5\frac{1}{n^2}+2\cdot 0.5\cdot \frac{c}{n^2} = \frac{c+0.5}{n^2}\). Thus,
the computation of an upper-bound for the value at \(x=\frac{1}{n}\) never converges.
In this case, changing the parameters \#dom and \#codom does not help.
We do not know, however, whether such situations occur in the fixpoint equations that arise
from actual order-2 \pHORSs{}; it is left for future work to see whether our method (or a minor 
modification of it) is actually complete (in the sense that upper-bounds can always 
be computed with arbitrary precision by increasing the parameters \#dom and \#codom).

To summarize, for all the \emph{valid} inputs (i.e., except ``Incomp''
and ``Incomp2'', which are invalid in the sense that
they do not correspond to \pHORS{}), our
tool (with piecewise linear functions) could properly compute lower/upper bounds. 
In contrast, from the column ``step'',
we can observe that the replacement of 
piecewise (multi)linear functions with step functions
not only worsens the precision (as in ``Ex2.3-v1'', ``Ex2.3-v2'', and ``Ex2.3-v3'') significantly,
but also makes the procedure obviously incomplete\footnote{As already mentioned, 
our method with piecewise linear functions may also be incomplete, but that does not
show up in the current benchmark set.}, as in ``Ex2.3-1'' and ``Double''
(recall the discussion on the incompleteness of step functions
in Section~\ref{sec:ub:higherorder}).


\section{Related Work}
\label{sec:related}

As already mentioned in Section~\ref{sec:intro}, this work is intimately
related to both probabilistic model checking, and higher-order model checking.
Let us give some hints on \emph{how} our work is related to the
two aforementioned research areas, without any hope to be exhaustive.

\smallbreak

\noindent
\textbf{Model checking of probabilistic recursive systems.}
Model checking of probabilistic systems with \emph{recursion} (but not higher-order functions),
such as recursive Markov chains and probabilistic pushdown systems,
has been actively studied~\cite{Etessami09,DBLP:journals/jacm/EtessamiY15,DBLP:journals/jcss/BrazdilBFK14}. Our \pHORS{} are strictly more expressive than those models, 
as witnessed by the undecidability
result from Section~\ref{sec:undecidability}, and the encoding of recursive Markov chains
into order-1 \pHORS{s} \iffull given in Appendix~\ref{sec:encoding-rmc}.
\else \cite{KDG19LICSfull}. \fi
Our fixpoint characterization of the termination probability of \pHORS{}
is a generalization of the fixpoint characterization of the
termination probability for recursive Markov
models~\cite{Etessami09} to arbitrary
orders.
Various methods have been studied for solving
the order-0 fixpoint equations
(or, polynomial equations) obtained from recursive
Markov chains~\cite{Etessami09,DBLP:conf/stoc/KieferLE07,DBLP:conf/stacs/EsparzaGK10}.
Interestingly, also in those methods,
computing an upper-bound of the least solution
is more involved than computing a lower-bound.
It is left for future work to investigate whether some of the ideas in those methods
can be used also for solving order-1 fixpoint equations.

\smallbreak

\noindent
\textbf{Termination of probabilistic infinite-data programs.}
Methods for computing the termination probabilities of infinite-data programs 
(with real-valued variables,
but without higher-order recursion) have also been actively studied, mainly
in the realm of imperative programs (see, as an
example,~\cite{bournezgarnier,esparza12,DBLP:conf/popl/FioritiH15,DBLP:conf/popl/ChatterjeeNZ17,Kaminski18,AvanziniDalLagoYamada18,chakarov13}); 
to the best of our knowledge, none of those methods deal with higher-order
programs, at least directly.
All these pieces of work present sound but \emph{incomplete} methodologies
for checking almost sure termination of programs. Incompleteness is of course inevitable
due to the Turing completeness of the underlying language considered.
In fact, 
\ifacm \citeN{Kaminski15} have 
\else
Kaminsiki and Katoen~\cite{Kaminski15} have 
\fi
shown that almost sure termination of probabilistic imperative
programs is $\Pi_2^0$-complete. Since their proof relies on Turing
completeness of the underlying language, it does not apply to the
setting of our model \pHORS{}, which is a probabilistic extension of a
Turing-\emph{incomplete} language, namely that of HORS.

\smallbreak

\noindent
\textbf{Model checking of higher-order programs.}
Model checking of (non-probabilistic) higher-order
programs has been an active topic of research
in the last fifteen years, with many positive 
results~\cite{Knapik02FOSSACS,Ong06LICS,Hague08LICS,Kobayashi13JACM,KO09LICS,KSU11PLDI,DBLP:conf/csl/GrelloisM15,DBLP:conf/mfcs/GrelloisM15,DBLP:conf/csl/TsukadaO14,Salvati11ICALP,DBLP:conf/stacs/Parys18}.
Strikingly, not only termination, but also a much larger class of
properties (those expressible in the modal $\mu$-calculus)
are known to be decidable
for ordinary (i.e. non-probabilistic) HORS. This is in stark
contrast with our undecidability result from Section~\ref{sec:undecidability}:
already at order-2 and for a very simple property like termination,
verification cannot be effectively solved.

\smallbreak

\noindent
\textbf{Probabilistic functional programs.}
Probabilistic functional programs have recently attracted the
attention of the programming language community, although
probabilistic $\lambda$-calculi have been known for forty years
now~\cite{SahebDjahromi,JonesPlotkin1989}.  Most of the work in this
field is concerned with operational semantics~\cite{dallagozorzi2012},
denotational semantics (see,
e.g.,~\cite{JUNG199870,danosharmer,ehrhardtassonpagani2014,DBLP:conf/lics/StatonYWHK16,DBLP:conf/lics/BacciFKMPS18}),
or program equivalence (see,
e.g.,~\cite{DLSA14,DBLP:conf/esop/CrubilleL14,DBLP:conf/popl/SangiorgiV16}),
which sometimes becomes decidable (e.g.~\cite{MurawskiO05}), but only when higher-order
recursion is forbidden.  The interest in probabilistic higher-order
functional languages stems from their use as a way of writing
probabilistic graphical models, as in languages like
\textsf{Church}~\cite{church} or \textsf{Anglican}~\cite{anglican}.
There are some studies to analyze the termination behavior of probabilistic
higher-order programs (with infinite data) by using types.
\ifacm \citeN{DBLP:conf/esop/LagoG17} 
\else
Dal Lago and Grellois~\cite{DBLP:conf/esop/LagoG17} 
\fi generalized
sized
types~\cite{hughes-pareto-sabry:sized-types,barthe-et-al:type-based-termination}
to obtain a sound but highly incomplete technique.
\ifacm \citeN{BreuvartDalLago} 
\else
Breuvart and Dal Lago~\cite{BreuvartDalLago} 
\fi
developed systems of intersection
types from which the termination probability of higher-order programs
can be inferred from (infinitely many) type derivations. This however does
\emph{not} lead to any practical verification methodology.

\smallbreak

\noindent
\textbf{Relevant proof techniques.}
Our  technique (of using the undecidability of Hilbert's 10th problem) for
proving the undecidability of almost sure termination
of order-2 \pHORS{} has been inspired by Kobayashi's proof of undecidability
of the inclusion between order-2 (non-probabilistic) word languages
and the Dyck language~\cite{KobayashiDyck}. Other undecidability results
on probabilistic systems include the undecidability of the emptiness of
probabilistic automata~\cite{DBLP:conf/icalp/GimbertO10}.
Their proof is based on the reduction from Post correspondence problem.
The technique does not seem applicable to our context.

\section{Conclusion}
\label{sec:conc}
We have introduced \pHORS{}, a probabilistic extension of higher-order recursion schemes, and studied the problem of computing their termination probability.
We have shown that almost sure termination is undecidable. As positive results,
we have also shown that the termination probability of order-\(n\) \pHORS{} can be
characterized by order-(\(n-1\)) fixpoint equations, which immediately yields a method
for computing a precise lower-bound of the termination probability. Based on the
fixpoint characterization, we have proposed a sound procedure for computing
an upper-bound of the termination probability,
which worked well on preliminary experiments.

It is left for future work to settle the question of whether it is possible to
compute the termination probability with arbitrary precision, which seems to be
a  difficult problem. Another direction of future work is to develop a
(sound but incomplete) model checking procedure for \pHORS{}, using
the procedure for computing the termination probability as a backend.

\subsection*{Acknowledgments}
We would like to thank Kazuyuki Asada and Takeshi Tsukada for discussions on the topic,
and anonymous referees for useful comments. 
This work was supported by JSPS \textsc{KAKENHI} Grant Number JP15H05706, JP20H00577, and
JP20H05703, and by ANR \textsc{PPS} Grant Number 19-CE48-0014, and by ERC
CoG \textsc{DIAPASoN} Grant Agreement 818616.

\bibliographystyle{alpha}
\bibliography{full,koba}

\appendix
\section*{Appendix}
\section{Relationship between \pHORS{} and Recursive Markov Chains}
In this section, we provide mutual translations between order-1 \pHORSs{} and
recursive Markov chains.
\subsection{Encoding Recursive Markov Chains into Order-1 \pHORSs{}}
\label{sec:encoding-rmc}
In this section, we will give a sketch of a proof that any recursive
Markov chain (RMC in the following) can be faithfully encoded as an
order-1 \pHORS{}. In doing that, we will closely follow the notational
conventions and definitions from \cite{Etessami09}, Section 2.

Let us first of all fix an RMC $A=(A_1,\ldots,A_k)$, where each
component graph is
$A_i=(N_i,B_i,Y_i,\mathit{En}_i,\mathit{Ex}_i,\delta_i)$. We fix a
reachability problem, given in the form of a triple $(i_I,s_I,q_I)$
where $i_I\in\{1,\ldots,k\}$, $s_I$ is a vertex of $A_{i_I}$, and
$q_I\in\mathit{Ex}_{i_I}$, where a vertex of each $A_i$ is defined as an
element of
$$
N_i\cup\bigcup_{b\in B_i}\mathit{Call}_b\cup\bigcup_{b\in B_i}\mathit{Return}_b.
$$
Here, \(\mathit{Call}_b = \set{(b,\mathit{en})\mid \mathit{en}\in \mathit{En}_{Y_i(b)}}\) and \(\mathit{Return}_b = \set{(b,\mathit{ex})\mid \mathit{ex}\in \mathit{Ex}_{Y_i(b)}}\)
The reachability problem \((i_I,s_I,q_I)\) specifies \(\langle \epsilon,s_I\rangle\) as
the initial state, where \(s_I\) is a vertex of the component graph \(i_I\),
and \(\langle \epsilon,q_I\rangle\) as the reachability target (cf.
Section 2.2 of \cite{Etessami09}).
The \pHORS{} $\GRAM_A=(\NONTERMS_A,\RULES_A,S_A)$ is defined as follows:
\newcommand\trprob[3]{#1_{#2,#3}}
\begin{varitemize}
\item
  Nonterminals are defined as symbols of the form
  $F_{i,s}$ where $i\in\{1,\ldots,k\}$, and 
  $s$ is a vertex $A_{i}$.
  The type $\NONTERMS_A(F_{i,s})$ is $\T^{|\mathit{Ex}_i|}\to\T$.
  There is also a nonterminal $S_A$ of type \(\T\), which is taken to be
  $(i_I,s_I,q_I)$. The start symbol is \(S_A\).
\item
  Rules in $\RULES_A$ are of four kinds:
  \begin{varitemize}
  \item
    There is a rule
    $$
    (i_I,s_I,q_I)=S_A=F_{i_I,s_I}(\underbrace{\Omega,\ldots,\Omega}_{\mbox{$j-1$ times}},\Te,\Omega,\ldots,\Omega)
    $$
    where $\mathit{Ex_i}=\{s_1,\ldots,s_{|\mathit{Ex_i}|}\}$ and \(q_I=s_j\).
  \item
    For every $i$ and for every exit node
    $s_j\in\mathit{Ex_i}=\{s_1,\ldots,s_{|\mathit{Ex_i}}|\}$, there is a
    rule
    $$
    F_{i,s_j}(x_1,\ldots,x_{|\mathit{Ex}_i|})=x_j
    $$
  \item
    For each $i$ and for each non-exit node
    or return port $s$ of $A_i$, 
    there is a rule
    $$
    F_{i,s}(x_1,\ldots,x)=\bigoplus_{j} \trprob{p}{s}{q} F_{i,q}(x_1,\ldots,x)
    $$
    where $\trprob{p}{s}{q}$ is the probability to go from $s$ to $q$,
    as given by the transition function $\delta_i$.
    \nk{I have changed the notation from \(p_s^q\) to \(\trprob{p}{s}{q}\),
    since the former can be confused with \((p_s)^q\). Also,
    the latter is consistent with the notation used in \cite{Etessami09}.}
  \item
    For every $i$ and for every call port $s=(b,\mathit{en})$ of $A_i$ which is
    in $\mathit{Call}_b$, there is a rule
    $$
    F_{i,s}(\vec{x})=F_{Y_i(b),\mathit{en}}(F_{i,(b,{\mathit{ex}_1})}(\vec{x}),\ldots,F_{i,(b,{\mathit{ex}_v})}(\vec{x}))
    $$
    and $\mathit{Ex}_{Y_i(b)}=\{\mathit{ex}_1,\ldots,\mathit{ex}_v\}$.
  \end{varitemize}
\end{varitemize}
The next step is to put any global state in $M_A$ in correspondence to
a term of $\GRAM_A$. This is actually quite easy, once one realizes
that:
\begin{varitemize}
\item
  $\GRAM_A$ is designed so that every term to which $S_A$ reduces
   can be seen as a complete ordered tree.
\item
  The rules in $\RULES_A$ have been designed so as to closely mimick
  the four inductive clauses by which the transition relation
  \(\Delta\)
    of the Markov chain $M_A$ is defined. 
   In particular, any such pair $\langle\beta,u\rangle$ is such that
  the length of $\beta$ corresponds to the height of the corresponding
  term to which $S_A$ reduces. The only caveat is that the first
  such inductive clause needs to be restricted, because in \pHORS{},
  contrarily to Markov chains, one needs to fix \emph{one}
  initial state.
\item
  $(\langle \beta,u\rangle,p,\langle\beta',u'\rangle)\in\Delta$ if and only if the term corresponding to
   $\langle \beta,u\rangle$
  rewrites to the term corresponding to $\langle\beta',u'\rangle$ with probability $p$ 
  in one step.
\end{varitemize}
As a consequence, one easily derive that $\Prob(\GRAM_A)$ is precisely
the probability, in $M_A$, to reach $\langle{\epsilon,q_I}\rangle$ starting from
$\langle{\epsilon,s_I}\rangle$. 

\subsection{Encoding Order-1 \pHORSs{} into Recursive Markov Chains}
\label{sec:encoding-phors}

In this section, we show 
that any order-1 \pHORS{} can be encoded into
 a recursive Markov chain that has the same termination probability.

First, we can normalize any order-1 \pHORS{} to the one consisting of
the rewriting rules of the form:
\[
\begin{array}{l}
  S = F_1\,\Te\,\cdots\,\Te\\
  F_1\,x_1\,\cdots\,x_k = t_{1,L}\C{p_1}t_{1,R}\\
  \cdots\\
  F_m\,x_1\,\cdots\,x_k = t_{m,L}\C{p_1}t_{m,R},
\end{array}
\]
where each \(t_{i,d}\) (\(i\in \set{1,\ldots,m}, d\in \set{L,R}\)) is a
variable \(x_j\ (j\in\set{1,\ldots,k})\), or is of the form:
\[F_i\,(F_{j_1}\,x_1\,\cdots\,x_k)\,\cdots\,(F_{j_k}\,x_1\,\cdots\,x_k).\]
Note that \(\Omega\) on the righthand side can be replaced by
\(F\,x_1\,\cdots\,x_k\) where \(F\) is defined by
\[ F\,x_1,\cdots\,x_k = F(F\,x_1,\cdots\,x_k)\cdots(F\,x_1,\cdots\,x_k).\]

\newcommand\En{\mathit{En}}
\newcommand\Ex{\mathit{Ex}}
Given the normalized order-1 HORS above,
let \(M\) be a recursive Markov chain consisting of a single component
\(A_1=(N_1,B_1,Y_1,\En_1,\Ex_1,\delta_1)\) where:
\begin{itemize}
\item \(B_1\) is the set of terms of the form
  \(  F_i\,(F_{j_1}\,x_1\,\cdots\,x_k)\,\cdots\,(F_{j_k}\,x_1\,\cdots\,x_k)\) on
  the righthand side.
\item \(Y_1(b)=1\) for every \(b\in B_1\).
\item \(\En_1 = \set{F_1,\ldots,F_m}\).
\item \(\Ex_1 = \set{x_1,\ldots,x_k}\).
\item \(N_1 = \En_1\cup \Ex_1\).
\item \(\delta_1\) is the least set of the transitions that satisfies:
  \begin{itemize}
  \item \((F_i, p, x_j)\in \delta_1\)
    for each transition rule \(F_i\,x_1,\cdots\,x_k\redp{d,p}{} x_j\)\\
    (recall that we write \(F\,x_1,\cdots\,x_k\redp{L,p}{} t_L\)
    and \(F\,x_1,\cdots\,x_k\redp{R,1-p}{} t_R\)
    if there is a rule \(F\,x_1,\cdots\,x_k=t_L\C{p}t_R\)).
  \item \((F_i, p, (t, F_j))\in \delta_1\)\\
    if \(F_i\,x_1,\cdots\,x_k\redp{d,p}{} t\)
    and \(t\) is of the form
    \(  F_j\,(F_{j_1}\,x_1\,\cdots\,x_k)\,\cdots\,(F_{j_k}\,x_1\,\cdots\,x_k)\).
  \item \(((t,x_i), p, x_\ell)\in\delta_1\)\\ if
    \(t=F_j\,(F_{j_1}\,x_1\,\cdots\,x_k)\,\cdots\,(F_{j_k}\,x_1\,\cdots\,x_k)\),
    and \(F_{j_i}\,x_1\,\cdots\,x_k \redp{d,p}{} x_\ell\).
  \item \(((t,x_i), p, (t',F_{j'}))\in\delta_1\)\\ if
    \(t=F_j\,(F_{j_1}\,x_1\,\cdots\,x_k)\,\cdots\,(F_{j_k}\,x_1\,\cdots\,x_k)\)
    and \(F_{j_i}\,x_1\,\cdots\,x_k \redp{d,p}{} t'\),
     where 
    \(t'=F_{j'}\,(F_{j'_1}\,x_1\,\cdots\,x_k)\,\cdots\,(F_{j'_k}\,x_1\,\cdots\,x_k)\),
  \end{itemize}
\end{itemize}

Intuitively,
a \pHORS{} term of the form
\(  F_j\,(F_{j_1}\,x_1\,\cdots\,x_k)\,\cdots\,(F_{j_k}\,x_1\,\cdots\,x_k)\)
 is modeled as a call of \(F_j\), where
\(F_{j_i}\) is executed when the call exits from the exit port \(x_i\).
That is why, in the third and fourth kinds of transition rules above,
the next node is determined by the rule for \(F_{j_i}\).
From this intuition, it should be trivial that the termination probabilities
of the RMC and the original \pHORS{} coincide.

\section{Proofs for Section~\ref{sec:fixpoint}}
\subsection{Proofs for Section~\ref{sec:order-n-equation}}
\label{sec:proof-tr-n-1}

\iftwocol
\subsubsection{Well-Typeness of the Equation}
\else
\subsubsection{Proof of Lemma~\ref{lem:tr-n-wf}}
\fi
We first prove the following lemma:
\begin{lemma}
  \label{lem:tr-n-wft}
    If \(\STE \p t:\sty\),
    then \(\STE^\# \p t^\#:\sty^\#\).
\end{lemma}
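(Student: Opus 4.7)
The plan is a straightforward structural induction on the derivation of $\STE \p t : \sty$ (equivalently, on the structure of $t$), following the four cases of the simple type system for applicative terms introduced in Section~\ref{sec:problem}. The translation $(\cdot)^\#$ on types is defined compositionally (with $\T^\# = \realt$ and $(\sty_1 \to \sty_2)^\# = \sty_1^\# \to \sty_2^\#$), and the translation on type environments pointwise replaces each $\sty_i$ by $\sty_i^\#$, so the induction will mesh cleanly with the term-level translation.

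For the base cases, I would handle the constants and variables directly. For $t = \Te$, we have $t^\# = 1$, which is typable at $\realt = \T^\#$ by the constant rule for fixpoint expressions. Similarly for $t = \Omega$, $t^\# = 0 : \realt$. For $t = x$ with $\STE(x) = \sty$, we have $x^\# = x$ and $\STE^\#(x) = \sty^\#$, so the variable rule gives $\STE^\# \p x : \sty^\#$.

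The only inductive case is application: if $\STE \p t_1 t_2 : \sty$ with $\STE \p t_1 : \sty_2 \to \sty$ and $\STE \p t_2 : \sty_2$, then by the induction hypothesis $\STE^\# \p t_1^\# : (\sty_2 \to \sty)^\#$ and $\STE^\# \p t_2^\# : \sty_2^\#$. Unfolding the type translation, the first judgment is $\STE^\# \p t_1^\# : \sty_2^\# \to \sty^\#$, and so by the application rule for fixpoint expressions, $\STE^\# \p t_1^\# t_2^\# : \sty^\#$. Since $(t_1 t_2)^\# = t_1^\# t_2^\#$, this completes the inductive step.

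There is no substantive obstacle here: the translation was designed precisely to be type-preserving in the obvious homomorphic way, and each clause of the term-translation mirrors a clause of the typing rules. Once this lemma is in hand, Lemma~\ref{lem:tr-n-wf} follows immediately by applying it to each body $t_L, t_R$ of the rules for non-terminals (noting that $1 \cdot (t_L)^\# + (1-p)(t_R)^\#$ is well-typed at $\realt$ whenever the two components are), and the statement about the order of $\E_\GRAM$ then follows from the fact that the type translation preserves orders.
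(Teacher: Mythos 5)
Your proposal is correct and takes exactly the approach of the paper, which proves Lemma~\ref{lem:tr-n-wft} by the same straightforward induction on the derivation of \(\STE \p t:\sty\) (the paper simply omits the case analysis that you spell out). The cases you give for \(\Te\), \(\Omega\), variables, and application all match the translation and the typing rules for fixpoint expressions, so there is nothing to add.
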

\begin{proof}
  This follows by straightforward induction on the derivation of
  \(\STE \p t:\sty\).
\end{proof}

\iftwocol
The following lemma states that the output of the translation is
well-typed. 
\begin{lemma}
  \label{lem:tr-n-wf}
  Let \(\GRAM=(\NONTERMS,\RULES,S)\) be an order-\(n\) \pHORS.
  Then \(\NONTERMS^\# \p \E_{\GRAM}\) and \(\NONTERMS^\#\p S:\realt\).
\end{lemma}
\begin{proof}
\else
\begin{proofn}{Proof of Lemma~\ref{lem:tr-n-wf}}
\fi
  \(\NONTERMS^\#\p S:\realt\) follows immediately from
  \(\NONTERMS(S)=\T\), Lemma~\ref{lem:tr-n-wft}, and \(\T^\#=\realt\).
  Let \(\RULES\) be
  \(\set{F_i\,x_1\,\cdots,x_{\ell_i}=t_i\mid i\in \set{1,\ldots,m}}\).
  Then, by the definition of \pHORS{}, we have
  \(\NONTERMS, x_1\COL\sty_{i,1},\ldots,x_{\ell_i}\COL\sty_{i,\ell_i}
  \p t_i:\T\),
  with \(\NONTERMS(F)=\sty_{i,1}\to\cdots\to\sty_{i,\ell_i}\to\T\).
  We need to show that
  \[ \NONTERMS^\#, x_1\COL\sty_{i,1}^\#,\ldots,x_i\COL\sty_{i,\ell_i}^\#
  \p t_i^\#:\realt\]
  for each \(i\), but this follows immediately from the typing of \(t_i\) above
  and Lemma~\ref{lem:tr-n-wft}. 
\iftwocol
\end{proof}
\else
\end{proofn}
\fi
\subsubsection{Proof of Theorem~\ref{prop:order-k-fixpoint}}
We call  a \pHORS{} \(\GRAM\) \emph{recursion-free} if there is no cyclic dependency
on its non-terminals. More precisely,
given a \pHORS{}
\(\GRAM\), we define the relation \(\succ_{\GRAM}\) on its non-terminals by:
 \(F_i\succ_{\GRAM} F_j\) iff \(F_j\) occurs on the righthand side of the rule for \(F_i\).
A \pHORS{} \(\GRAM\) is defined to be recursion-free if the transitive closure of
\(\succ_{\GRAM}\) is irreflexive.

Below we write \(\qv{t}{\rho}\) for \(\seme{t^\#}{\rho}\).
\begin{lemma}
\label{lem:order-k-fixpoint}
Let \(\GRAM=(\NONTERMS,\RULES,S)\) be a recursion-free \pHORS{}, and
 \(\rho\) be the least solution of \(\E_{\GRAM}\).
If \(\NONTERMS\p t:\T\), then
\(\Prob(\GRAM,t) = \qv{t}{\rho}\).
\end{lemma}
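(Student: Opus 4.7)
The plan is to prove Lemma~\ref{lem:order-k-fixpoint} by induction on a well-founded measure on the term $t$, exploiting the fact that recursion-freeness yields strong normalisation. Concretely, since $\succ_\GRAM$ is well-founded, I would assign each non-terminal $F$ a rank $r(F)$ equal to one plus the maximum rank of non-terminals occurring in the body of its rule, and extend $r$ to terms by taking the multiset of ranks of non-terminal occurrences. Each reduction $F\,t_1\,\cdots\,t_k \redp{d,p}{\GRAM} [t_i/x_i]t_d$ removes one occurrence of $F$ and inserts only copies of non-terminals of strictly smaller rank (those in $t_L$ or $t_R$), while the arguments $t_i$ are merely duplicated. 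Hence the measure strictly decreases in the multiset order, so all reduction sequences from $t$ terminate.

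Given this, the main induction is straightforward. The base cases $t = \Te$ and $t = \Omega$ are immediate from the definitions of $\Prob$ and $\qv{\cdot}{\rho}$. Otherwise, since $\NONTERMS \p t\COL\T$, the term must have the form $F\,t_1\,\cdots\,t_k$ with $\RULES(F) = \lambda x_1.\cdots\lambda x_k.\,t_L \C{p} t_R$. From the operational semantics,
\[
\Prob(\GRAM, F\,t_1\,\cdots\,t_k) \;=\; p\cdot \Prob(\GRAM, [t_i/x_i]t_L) + (1-p)\cdot \Prob(\GRAM, [t_i/x_i]t_R),
\]
and the induction hypothesis (which applies since both reducts have strictly smaller rank-multiset) rewrites the right-hand side as $p\cdot \qv{[t_i/x_i]t_L}{\rho} + (1-p)\cdot \qv{[t_i/x_i]t_R}{\rho}$.

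To close the loop, I would prove a routine substitution lemma, $\qv{[t_i/x_i]s}{\rho} = \qv{s}{\rho\{x_i \mapsto \qv{t_i}{\rho}\}}$, by induction on $s$ using the compositional definition of $\qv{\cdot}{\rho}$. Combining this with the fact that $\rho$ is a solution of $\E_\GRAM$---so that $\rho(F)(\vec{y}) = p\cdot \qv{t_L}{\rho\{x_i \mapsto y_i\}} + (1-p)\cdot \qv{t_R}{\rho\{x_i \mapsto y_i\}}$---yields
\[
p\cdot \qv{[t_i/x_i]t_L}{\rho} + (1-p)\cdot \qv{[t_i/x_i]t_R}{\rho} \;=\; \rho(F)\bigl(\qv{t_1}{\rho}, \ldots, \qv{t_k}{\rho}\bigr) \;=\; \qv{F\,t_1\,\cdots\,t_k}{\rho},
\]
as required.

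The main obstacle, although mild in this recursion-free setting, is choosing the right well-founded measure: a single reduction step may duplicate arguments through substitution, so raw term size can grow, and one must exploit that substitutions only propagate non-terminals of ranks already present in the $t_i$'s while the head $F$ is replaced by strictly lower-ranked ones. Using the multiset extension of the rank order sidesteps this issue cleanly. Note that I do not need to use the minimality of $\rho$: in the recursion-free case the equations $\E_\GRAM$ have a unique solution (each $\rho(F)$ is definable by downward induction on $r(F)$), and any solution satisfies the identity.
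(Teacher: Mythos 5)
Your base cases, the induction step, the substitution lemma $\qv{[t_i/x_i]s}{\rho}=\qv{s}{\rho\{x_i\mapsto\qv{t_i}{\rho}\}}$, and the appeal to the defining equation of $\E_\GRAM$ all coincide with the paper's argument, and your closing remark that minimality of $\rho$ is not actually needed (the solution being unique in the recursion-free case) is correct. The genuine gap is in your well-founded measure. The multiset of ranks of non-terminal occurrences does \emph{not} decrease under reduction: contracting $F\,t_1\cdots t_k$ removes a single occurrence of rank $r(F)$ and adds (i) the non-terminals of the body, all of rank $<r(F)$ --- which the multiset order tolerates --- but also (ii) a fresh copy of every non-terminal occurring in each duplicated argument $t_i$, and these may have rank \emph{greater} than $r(F)$, since nothing relates the rank of a non-terminal to the ranks of non-terminals appearing in its \emph{arguments}. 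In the multiset extension every added element must be strictly below some removed element, and an extra high-rank occurrence coming from a duplicated argument is not below the removed $r(F)$. Concretely, take the recursion-free rules $H\,y\,z = y\C{q}z$, $F\,x=(H\,x\,x)\C{p}\Omega$, $K\,y=(F\,y)\C{s}\Omega$, $G\,y=(K\,y)\C{r}\Omega$, $S=(F\,(G\,\Te))\C{1}\Omega$; then $r(H)<r(F)<r(K)<r(G)$, and the step $F\,(G\,\Te)\redp{L,p}{}H\,(G\,\Te)\,(G\,\Te)$ turns the multiset $\set{r(F),r(G)}$ into $\set{r(H),r(G),r(G)}$, which is strictly \emph{larger} in the multiset order because the extra copy of $r(G)$ is not dominated by $r(F)$. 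So your induction is not well-founded as set up, and for the same reason the induction hypothesis need not apply to the reducts. (This is essentially the familiar reason why strong normalisation of the simply-typed $\lambda$-calculus does not follow from naive symbol-counting.)

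The paper sidesteps this by not exhibiting an explicit syntactic measure at all: recursion-freeness lets one view the scheme as a simply-typed $\lambda$-term, so every term is strongly normalising; since reduction is finitely branching, K\"onig's lemma yields a longest reduction sequence, and the induction is on its length $\sharp(t)$, for which $\sharp(t)>\sharp([\seq{s}/\seq{x}]t_d)$ is immediate. If you replace your rank-multiset by $\sharp(t)$ (or by a genuinely structural termination argument such as a path ordering, rather than a multiset of symbol ranks), the remainder of your proof goes through unchanged.
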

\begin{proof}
Since \(\GRAM\) is recursion-free, it follows from the strong normalization of
the simply-typed \(\lambda\)-calculus that \(t\) does not have any infinite
reduction sequence.
Because the reduction relation is finitely branching, by K\"{o}nig's lemma,
there are only finitely many reduction sequences from \(t\);
thus a longest reduction sequence from \(t\) exists, and we write
  \(\sharp(t)\) for its length.
The proof proceeds  by induction on \(\sharp(t)\).
 If \(\sharp(t)=0\), then \(t\) is either \(\Te\) (in which case,
both sides of the equation are \(1\)) or \(\Omega\)
(in which case,
both sides of the equation are \(0\)); thus, the result follows immediately.
Otherwise, \(t\) must be of the form \(F\;s_1\cdots s_k\)
where \(F\,x_1\cdots x_k = t_1\C{p}t_2\).
Then
\[
\begin{array}{l}
\Prob(\GRAM, t) =
p\Prob(\GRAM, [s_1/x_1,\ldots,s_k/x_k]t_1)\\\qquad\qquad +
(1-p)\Prob(\GRAM, [s_1/x_1,\ldots,s_k/x_k]t_2).
\end{array}\]
Since \(\sharp(t)>\sharp([s_1/x_1,\ldots,s_k/x_k]t_i)\),
by the induction hypothesis, the righthand side equals:
\[
\begin{array}{l}
  p ([s_1/x_1,\ldots,s_k/x_k]t_1)^{\#}_{\rho}
  +(1-p)( [s_1/x_1,\ldots,s_k/x_k]t_2)^{\#}_{\rho}\\
  =
  p (t_1)^{\#}_{\rho\set{\seq{x}\mapsto
    \seq{s}^{\#}_{\rho}}}
  +(1-p)(t_2)^{\#}_{\rho\set{\seq{x}\mapsto
        \seq{s}^{\#}_{\rho}}}\\
  = (F\;s_1\cdots s_k)^\#_{\rho}
  = t^\#_{\rho},
  \end{array}
  \]
  as required.
\end{proof}

For a \pHORS{} \(\GRAM=(\NONTERMS,\RULES,S)\)
with \(\dom(\NONTERMS)=\set{F_1,\ldots,F_m}\), we define
its \emph{\(k\)-th approximation} \(\GRAM^{(k)}=(\NONTERMS^{(k)},\RULES^{(k)},S^{(k)})\) by:
\[
\begin{array}{l}
\NONTERMS^{(k)} = \set{F_j^{(i)}\mapsto \NONTERMS(F_j) \mid j\in\set{1,\ldots,m}, 0\leq i\leq k}\\
\RULES^{(k)}(F_j^{(i)}) = [F_1^{(i-1)}/F_1,\ldots,F_m^{(i-1)}/F_m]\RULES(F_j)\\\hfill
\mbox{ for each $i\in\set{1,\ldots,k}$}\\
\RULES^{(k)}(F_j^{(0)}) = \lambda \seq{x}.\Omega\C{1}\Omega.
\end{array}
\]
The following properties follow immediately from the construction of \(\GRAM^{(k)}\).
(Recall that \(\F_{\E}\) denotes the function associated with the fixpoint equations
\(\E\), as defined in Section~\ref{sec:ho-fixpoint}.)
\begin{lemma}
\label{lem:approximation}
\begin{enumerate}
\item \(\GRAM^{(k)}\) is recursion-free.
\item
\(\Prob(\GRAM) = \bigsqcup_{k\in\omega} \Prob(\GRAM^{(k)})\).
\item
  \(\F_{\E_{\GRAM}}^k(\bot_{\sem{\NONTERMS}})(F) = \F_{\E_{\GRAM^{(k)}}}^k(\bot_{\sem{\NONTERMS^{(k)}}})(F^{(k)}) =
  \LFP(\F_{\E_{\GRAM^{(k)}}})(F^{(k)})\) for each non-terminal \(F\) of \(\GRAM\).
\end{enumerate}
\end{lemma}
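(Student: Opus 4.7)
\medskip

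The plan is to dispatch the three parts in order, as each subsequent part builds on the intuition of the previous one.

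For part (1), I would argue that the dependency relation $\succ_{\GRAM^{(k)}}$ is stratified by the superscript. By construction, the only non-terminals appearing on the righthand side of the rule for $F_j^{(i)}$ (with $i \geq 1$) are of the form $F_\ell^{(i-1)}$, and the rule for $F_j^{(0)}$ contains no non-terminals at all. Thus $F_j^{(i)} \succ_{\GRAM^{(k)}} F_\ell^{(i')}$ implies $i' = i - 1$, so the transitive closure strictly decreases the superscript and is therefore irreflexive.

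For part (2), I would set up a correspondence between reductions in $\GRAM$ and reductions in $\GRAM^{(k)}$ by relating terms via superscript erasure. Write $\nu(t)$ for the term of $\GRAM$ obtained from a term $t$ of $\GRAM^{(k)}$ by replacing every $F_j^{(i)}$ with $F_j$. A straightforward check on the rules shows the simulation
\[
t \redp{d,p}{\GRAM^{(k)}} t' \ \Longrightarrow\ \nu(t) \redp{d,p}{\GRAM} \nu(t'),
\]
provided no reduction occurs at an $F_j^{(0)}$ (since those only step to $\Omega$, which cannot reach $\Te$). This yields $\Prob(\GRAM^{(k)}) \leq \Prob(\GRAM)$, and monotonicity in $k$ (since the rules for $F_j^{(i)}$ at higher $k$ strictly extend the previous family) gives that $\bigsqcup_k \Prob(\GRAM^{(k)}) \leq \Prob(\GRAM)$. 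For the other direction, I would prove by induction on the length $n$ of a reduction sequence $S \redsp{\cs,p}{\GRAM} \Te$ that for every $k \geq n$, there is a corresponding reduction $S^{(k)} \redsp{\cs,p}{\GRAM^{(k)}} \Te$: each rewriting step consumes one unit of the superscript budget, and starting with budget $k \geq n$ suffices to finish. Hence $\Prob(\GRAM, S, \cs) \leq \Prob(\GRAM^{(k)}, S^{(k)}, \cs)$ for all sufficiently large $k$. Summing over $\cs \in \{L,R\}^*$ and using the monotone convergence for series with non-negative terms gives $\Prob(\GRAM) \leq \bigsqcup_k \Prob(\GRAM^{(k)})$.

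For part (3), the two equalities are established separately. The first, $\F_{\E_{\GRAM}}^k(\bot)(F) = \F_{\E_{\GRAM^{(k)}}}^k(\bot)(F^{(k)})$, I would prove by induction on $k$. The induction hypothesis is strengthened to the claim that for every $0 \leq i \leq k$, $\F_{\E_{\GRAM}}^i(\bot)(F) = \F_{\E_{\GRAM^{(k)}}}^k(\bot)(F^{(i)})$; the layering of $\GRAM^{(k)}$ (from part (1)) means that after $i$ Kleene steps the value at $F^{(i)}$ has already stabilized to the $i$-th unfolding of $\GRAM$. The base case $i=0$ follows from the rule $F_j^{(0)} = \lambda\seq{x}.\Omega \C{1} \Omega$, which translates to the constant $0$. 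The inductive step uses the fact that $\RULES^{(k)}(F_j^{(i)})$ is exactly the substitution $[F_\ell^{(i-1)}/F_\ell]\RULES(F_j)$, so its translation is $[F_\ell^{(i-1)}/F_\ell]\RULES(F_j)^\#$, and applying $\F_{\E_{\GRAM^{(k)}}}$ one more time substitutes in the already-computed values at $F_\ell^{(i-1)}$. The second equality, $\F_{\E_{\GRAM^{(k)}}}^k(\bot)(F^{(k)}) = \LFP(\F_{\E_{\GRAM^{(k)}}})(F^{(k)})$, follows from recursion-freeness: the stratification by superscript implies that the Kleene iteration saturates in exactly $k+1$ steps, so $k$ iterations already give the fixpoint at $F^{(k)}$.

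The main obstacle I expect is the bookkeeping in part (2), specifically proving that a reduction in $\GRAM$ of length $n$ can be matched in $\GRAM^{(k)}$ for $k \geq n$: one must track that each redex $F\,t_1\cdots t_k$ corresponds to a redex $F^{(i)}\,t_1'\cdots t_k'$ with $i \geq 1$, so that at least one unfolding step is available, and verify that the superscript decreases predictably along the sequence. Everything else is either routine induction or a direct reading of the definitions.
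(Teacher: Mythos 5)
Your proposal is correct and follows essentially the same route as the paper's proof: stratification of $\succ_{\GRAM^{(k)}}$ by superscripts for (1), the superscript-erasure simulation in one direction and the ``budget'' argument ($k \geq |\pi|$ unfoldings suffice) in the other for (2), and induction on the number of Kleene iterations for (3). The only cosmetic difference is in (3), where you strengthen the induction hypothesis over the layer index $i \leq k$ within a fixed $\GRAM^{(k)}$, while the paper instead quantifies over all approximation levels $\ell \geq k$ at the top layer; both make the same induction go through.
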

\begin{proof}
  \begin{enumerate}
  \item This follows immediately from the fact that
  \(F_\ell^{(i)}\succ_{\GRAM^{(k)}} F_{\ell'}^{(j)}\) only if \(j=i-1\).
\item Let \(P\) be the set \(\set{(\pi,p)\mid S\redsp{\pi,p}{\GRAM}\Te}\)
  and \(P^{(k)}\) be \(\set{(\pi,p)\mid S^{(k)}\redsp{\pi,p}{\GRAM^{(k)}}\Te}\).
  Then \(\Prob(\GRAM)=\sum_{(\pi,p)\in P} p\) and
  \(\Prob(\GRAM^{(k)})=\sum_{(\pi,p)\in P^{(k)}} p\).
  Note that
  for any reduction \(s\redp{d,p}{\GRAM^{(k)}} t\) with \(t\neq \Omega\),
  there exists a corresponding reduction \(s^!\redp{d,p}{\GRAM} t^!\) where
  \(s^!\) and \(t^!\) are the terms of \(\GRAM\) obtained from \(s\) and \(t\)
  respectively,  by removing indices, i.e. by replacing each \(F^{(i)}\) with \(F\).
  Thus, \(P^{(k)}\subseteq P\) for any \(k\).
  Conversely, if \(S\redp{\pi,p}{\GRAM}\Te\), then \(S^{(|\pi|)}\redp{\pi,p}{\GRAM^{(|\pi|)}}\Te\), because non-terminals are unfolded at most
  \(|\pi|\) times in \(S\redp{\pi,p}{\GRAM}\Te\). 
  Therefore, \(P = \bigcup_k P^{(k)}\), from which the result follows.
\item We show that
  \(\F_{\E_{\GRAM}}^k(\bot_{\sem{\NONTERMS}})(F)= \F_{\E_{\GRAM^{(\ell)}}}^k(\bot_{\sem{\NONTERMS^{(\ell)}}})(F^{(k)})\) holds for any \(\ell\geq k\), by induction on \(k\),
  from which the first equality follows.
  The base case \(k=0\) is trivial. For \(k>0\),
By the definition of the rule for \(F^{(k)}\) and the induction hypothesis, we have:
\[
\begin{array}{l}
\F_{\E_{\GRAM^{(\ell)}}}^k(\bot_{\sem{\NONTERMS^{(\ell)}}})(F^{(k)})\\
  = \seme{([F_1^{(k-1)}/F_1,\ldots,F_m^{(k-1)}/F_m]\RULES(F))^\#}
  {\F_{\E_{\GRAM^{(\ell)}}}^{k-1}(\bot_{\sem{\NONTERMS^{(\ell)}}})}\\
  =\seme{\RULES(F)^\#}
  {
\set{F_i\mapsto \F_{\E_{\GRAM^{(\ell)}}}^{k-1}(\bot_{\sem{\NONTERMS^{(\ell)}}})(F_i^{(k-1)})\mid i\in\set{1,\ldots,m}}
}
\\
  = \seme{\RULES(F)^\#}{\F_{\E_{\GRAM}}^{k-1}(\bot_{\sem{\NONTERMS}})} \mbox{ (by the induction hypothesis)}\\
  = \F_{\E_{\GRAM}}^{k}(\bot_{\sem{\NONTERMS}})(F),
  \end{array}
    \]
    as required. (Here, we have extended \((\cdot)^\#\) and \(\seme{t}{\rho}\)
    for \(\lambda\)-terms in the obvious manner.)

    For the second equality, we can show that 
\(\F_{\E_{\GRAM^{(\ell)}}}^k(\bot_{\sem{\NONTERMS^{(\ell)}}})(F^{(k)}) =
\LFP(\F_{\E_{\GRAM^{(\ell)}}})(F^{(k)})\) holds for any \(\ell\geq k\), by
straightforward induction on \(k\).
\qedhere
  \end{enumerate}
\end{proof}

Theorem~\ref{prop:order-k-fixpoint} follows as a corollary of
the above lemmas.
\ifacm
\begin{proof}[Proof of Theorem~\ref{prop:order-k-fixpoint}]
\else
\begin{proofn}{Proof of Theorem~\ref{prop:order-k-fixpoint}}
\fi
By Lemmas~\ref{lem:order-k-fixpoint} and \ref{lem:approximation},
we have
\[
\begin{array}{l}
\Prob(\GRAM) = \bigsqcup_k \Prob(\GRAM^{(k)}) =
\bigsqcup_k \LFP(\F_{\E_{\GRAM^{(k)}}})(S^{(k)}) \\\qquad = 
\bigsqcup_k \F_{\E_\GRAM}^k(\bot)(S) = \LFP(\F_{\E_\GRAM})(S)
\end{array}
\]
as required.
\ifacm
\end{proof}
\else
\end{proofn}
\fi

\subsection{Proofs for Section~\ref{sec:order-n-1-equation}}
\label{sec:proof-sec42}

\subsubsection{Proof of Lemma~\ref{lem:tr-well-typedness}}

We define the translation for a type environment on variables (other than non-terminals;
note that the translation is different from the one for \(\NONTERMS\))
by:
\[
\begin{array}{l}
(y_1\COL\sty_1,\ldots,y_k\COL\sty_k)^\dagger
=
(y_{1,0},\ldots,y_{1,\arity(\sty_1)+1})\COL \sty_1^\dagger,
\ldots, 
(y_{k,0},\ldots,y_{k,\arity(\sty_k)+1})\COL \sty_k^\dagger.
\end{array}
\]

\begin{lemma}
  \label{lem:tr-term}
  If \(\NTE\cup\STE, \seq{x}\COL\seq{\T}\p t:\sty\) and
  \(\STE; \seq{x}\pN t:\sty \tr e\), 
  then \(\NONTERMS^\dagger\cup\STE^\dagger \p e: {\sty}^{\dagger+|\seq{x}|}\).
\end{lemma}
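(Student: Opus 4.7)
The plan is to prove Lemma~\ref{lem:tr-term} by straightforward induction on the derivation of the translation judgment $\STE; \seq{x} \pN t : \sty \tr e$, verifying in each case that the output tuple has the required type $\sty^{\dagger+k}$ (where $k = |\seq{x}|$) under the environment $\NONTERMS^\dagger \cup \STE^\dagger$. I would first unfold the key definitions: for $\sty = \sty_1\to\cdots\to\sty_m \To \T^\ell\to\T$, the type $\sty^{\dagger+k}$ is a product with $1 + \ell + (k+1)$ components, whose first component has arity (over $\trT{\cdot}$) $\realt$, the middle $\ell$ components are functions over the $\trTp{\cdot}$ versions of the argument types, and the last $k+1$ components have the same shape as the first. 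Having these layouts in front of me makes each case a matter of counting components and matching argument types.

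For the base cases, I would argue as follows. In \textsc{Tr-Omega} and \textsc{Tr-GVar}, $\sty = \T$, so $\T^{\dagger+k}$ reduces to a tuple of $k+2$ copies of $\realt$, which is exactly the shape of the output tuple of zeros (and a single $1$ for \textsc{Tr-GVar}); these typecheck immediately by the rules for constants in Figure~\ref{fig:typing-eq}. For \textsc{Tr-Var}, using $\STE^\dagger$ we know $(y_0,\ldots,y_{\ell+1})$ collectively has type $\sty^\dagger$, i.e., $y_0, y_{\ell+1} : \trT{\sty_1}\to\cdots\to\trT{\sty_m}\to\realt$ and $y_1,\ldots,y_\ell : \trTp{\sty_1}\to\cdots\to\trTp{\sty_m}\to\realt$; the output $(y_0, y_1,\ldots,y_\ell, (y_{\ell+1})^{k+1})$ thus has exactly the required product type $\sty^{\dagger+k}$. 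The case \textsc{Tr-NT} is analogous, where the $k+1$ trailing copies are $F_0$ rather than $F_{\ell+1}$; the types still match because $F_0$ is declared in $\NONTERMS^\dagger$ with the same type $\trT{\sty_1}\to\cdots\to\trT{\sty_m}\to\realt$ that the last component of $\sty^{\dagger+k}$ demands.

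The two inductive cases require a bit more care. For \textsc{Tr-App}, the induction hypothesis gives that the output $(s_0,\ldots,s_{\ell+k+1})$ of $s$ has type $(\sty_1\to\seq{\sty}\To\T^\ell\to\T)^{\dagger+k}$ and the output $(t_0,\ldots,t_{\ell'+k+1})$ of $t$ has type $\sty_1^{\dagger+k}$. Reading off the component types, each $s_i$ is a function taking a first argument of type $\trT{\sty_1}$ (respectively $\trTp{\sty_1}$ for $i \in \{1,\ldots,\ell\}$), which by the definition of $\trT{}$ and $\trTp{}$ is precisely a product of the shape $(\trT{\sty_{1,1}}\!\to\!\cdots\!\to\!\realt)\times(\trTp{\sty_{1,1}}\!\to\!\cdots\!\to\!\realt)^{\ell'}\times(\trT{\sty_{1,1}}\!\to\!\cdots\!\to\!\realt)$. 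The tuple selections from $(t_0,\ldots,t_{\ell'+k+1})$ fed to each $s_i$ in the rule are exactly tailored to produce these product shapes: the first component case uses $(t_0,t_1,\ldots,t_{\ell'},t_{\ell'+k+1})$; the middle components drop $t_0$ and use $\trTp{}$-typed parts; the trailing components use $t_{\ell'+i}$ as the first slot for $i\in\{1,\ldots,k\}$ and $t_{\ell'+k+1}$ for the last. Each application then returns a value of the expected type, matching the corresponding component of $(\seq{\sty}\To\T^\ell\to\T)^{\dagger+k}$. For \textsc{Tr-AppG}, all relevant tuples are tuples of $\realt$'s, so checking well-typedness reduces to observing that sums and products of $\realt$-typed subexpressions are $\realt$-typed, and that the resulting tuple has $1+\ell+(k+1) = \ell+k+2$ components as required by $(\T^\ell\to\T)^{\dagger+k}$.

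The main obstacle is purely notational: keeping track of the indices $\ell, \ell', k$, the distinction between $\trT{\cdot}$ and $\trTp{\cdot}$, and the precise correspondence between ``the $i$-th component of the output tuple of $s$'' and ``the $i$-th component of $(\sty_1\to\seq{\sty}\To\T^\ell\to\T)^{\dagger+k}$''. There is no conceptual difficulty; the translation was designed so that each component carries exactly the arguments required to make applications typecheck. I would therefore present one full case in detail (say \textsc{Tr-App}, where the bookkeeping is heaviest) and leave the remaining cases as routine verification based on the type-level computations above.
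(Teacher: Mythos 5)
Your proposal is correct and matches the paper's proof, which simply states that the result follows by straightforward induction on the derivation (the paper inducts on the typing judgment, you on the translation judgment, but these are syntax-directed in parallel, so the inductions coincide). Your case-by-case verification, including the component-counting and the $\trT{\cdot}$ versus $\trTp{\cdot}$ bookkeeping in \textsc{Tr-App}, fills in exactly the details the paper leaves implicit.
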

\begin{proof}
  This follows by straightforward induction on
  the derivation of \(\NTE\cup\STE, \seq{x}\COL\seq{\T}\p t:\sty\).
\end{proof}

We also prepare the following lemma on the syntactic property
of the translation result, which
is important for Lemma~\ref{lem:tr-well-typedness} and
the substitution lemma (Lemma~\ref{lem:n-1:subj} below) proved later.
\begin{lemma}
  \label{lem:n-1:independence}
Suppose:
\[\STE; \seq{z}\pN t:\sty\tr (t_0,\ldots,t_{\arity(\sty)},t_{\arity(\sty)+1},\ldots,t_{\arity(\sty)+|\seq{z}|+1}).\]
Then,
for each \(y_i\in \dom(\STE)\), \(y_{i,0}\) 
does not occur in \(t_1,\ldots,t_{\arity(\sty)+|\seq{z}|+1}\).
\end{lemma}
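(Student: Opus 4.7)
The plan is to proceed by induction on the derivation of the translation judgment $\STE; \seq{z} \pN t : \sty \tr (t_0, \ldots, t_{\arity(\sty)+|\seq{z}|+1})$, and do a case analysis on the last rule applied. The key structural observation driving the argument is that in every translation rule, the component $s_0$ of the translation of a subterm $s$ (and similarly $t_0$ for a subterm $t$) appears \emph{only} in position $0$ of the output tuple. Since the sole way for a name $y_{i,0}$ (with $y_i \in \dom(\STE)$) to be introduced is via the \rn{Tr-Var} rule applied to the variable $y_i$, where it appears only as the $0$-th component of the translation, the claim propagates inductively: $y_{i,0}$ may only reach a position~$0$, never a non-zero one.

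For the base cases, \rn{Tr-Omega} and \rn{Tr-GVar} are immediate since the outputs are tuples of constants $0$ and $1$. For \rn{Tr-Var} applied to a variable $y$ with $\STE(y) = \seq{\sty} \To \T^\ell \to \T$, the output is $(y_0, y_1, \ldots, y_\ell, (y_{\ell+1})^{k+1})$, so only $y_1, \ldots, y_{\ell+1}$ occur in positions $\geq 1$, none of which is of the form $y_{i,0}$; and for $y_i \neq y$ the name $y_{i,0}$ does not occur at all. For \rn{Tr-NT}, the output involves only subscripts of a non-terminal $F$, and non-terminals are not in $\dom(\STE)$, so the claim is vacuous.

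For the inductive cases, consider \rn{Tr-App} with premises translating $s$ to $(s_0, \ldots, s_{\ell+k+1})$ and $t$ to $(t_0, \ldots, t_{\ell'+k+1})$. By the induction hypothesis, $y_{i,0}$ occurs neither in $s_1, \ldots, s_{\ell+k+1}$ nor in $t_1, \ldots, t_{\ell'+k+1}$. Inspecting the output, positions $1, \ldots, \ell$ are of the form $s_j(t_1, \ldots, t_{\ell'}, t_{\ell'+k+1})$ with $j \geq 1$, and positions $\ell+j$ for $j \geq 1$ are of the form $s_{\ell+j}(t_{\ell'+j}, t_1, \ldots, t_{\ell'}, t_{\ell'+k+1})$ with both indices $\geq 1$; neither $s_0$ nor $t_0$ appears anywhere except at position~$0$, so $y_{i,0}$ is absent from positions $\geq 1$. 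The case \rn{Tr-AppG} is analogous: the non-zeroth components are $s_2, \ldots, s_{\ell+1}$ and $s_{\ell+j+1} + s_1 \cdot t_j$ for $j \geq 1$, using only $s_j$ and $t_j$ with $j \geq 1$, so the induction hypothesis applies directly.

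The proof is essentially pure bookkeeping; there is no real obstacle beyond checking that the indexing in \rn{Tr-App} (which is the visually most complex rule) indeed confines $s_0$ and $t_0$ to position~$0$ of the output. This invariant is what makes the choice of passing the ``current target'' parameter via $y_{m+1}$ (rather than $y_0$) in \rn{Tr-Var} and the use of $F_0$ in both position~$0$ and the trailing positions of \rn{Tr-NT} consistent.
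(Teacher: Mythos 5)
Your proof is correct and follows exactly the route the paper takes: the paper dispatches this lemma with the single line ``straightforward induction on the structure of $t$,'' and your case analysis is precisely the bookkeeping that claim hides, including the two points that actually need checking (that $s_0,t_0$ are confined to position~$0$ in \rn{Tr-App} and \rn{Tr-AppG}, and that the reuse of $F_0$ in trailing positions of \rn{Tr-NT} is harmless because $F\notin\dom(\STE)$).
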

\begin{proof}
  This follows by straightforward induction on the structure of \(t\).
  \end{proof}

\ifacm
\begin{proof}[Proof of Lemma~\ref{lem:tr-well-typedness}]
\else
\begin{proofn}{Proof of Lemma~\ref{lem:tr-well-typedness}}
\fi
  \(\trT{\NONTERMS}(S_1)=\realt\) follows immediately
  from \(\NONTERMS(S)=\T\to\T\) and the definition of \(\trT{\NONTERMS}\).
  To prove \(\trT{\NONTERMS} \p \E\),
  let \(F\,y_1\,\cdots\,y_\ell\,x_1\,\cdots\,x_k=t_L\C{p}t_R\in \RULES\),
  with \(\NONTERMS(F)=\sty_1\to\cdots\to\sty_\ell\To\T^k\to\T\).
  Suppose also 
  \[y_1\COL\sty_1,\ldots,y_\ell\COL\sty_\ell; x_1,\ldots,x_k
  \pN t_d:\T \tr (t_{d,0},\ldots,t_{d,k+1})\]
  for \(d\in\set{L,R}\).
  We need to prove 
  \[ \NONTERMS^\dagger,
  (\seq{y_1})\COL\trT{\sty_1},\ldots,(\seq{y_\ell})\COL\trT{\sty_\ell}
  \p t_{d,0}:\realt\]
  and 
  \[ \NONTERMS^\dagger,
  (\seq{y_1}')\COL\trTp{\sty_1},\ldots,(\seq{y_\ell}')\COL\trTp{\sty_\ell}
  \p t_{d,i}:\realt\]
  for \(i\in\set{1,\ldots,k}\), where \(\seq{y_i}\) and \(\seq{y_i}'\) are
  as given in the premises of the rule \rn{Tr-Rule}.
  The former follows immediately from Lemma~\ref{lem:tr-term}.
  For the latter, by Lemma~\ref{lem:tr-term}, we have
  \[ \NONTERMS^\dagger,
  (\seq{y_1})\COL\trT{\sty_1},\ldots,(\seq{y_\ell})\COL\trT{\sty_\ell}
  \p t_{d,i}:\realt.\]
 By Lemma~\ref{lem:n-1:independence},
 \(y_{i,0}\) does not occur in \(t_{d,i}\). Thus we can remove
  the type bindings on them and obtain
  \[ \NONTERMS^\dagger,
  (\seq{y_1}')\COL\trTp{\sty_1},\ldots,(\seq{y_\ell}')\COL\trTp{\sty_\ell}
  \p t_{d,i}:\realt\]
   as required.
\ifacm
  \end{proof}
\else
  \end{proofn}
\fi

\subsubsection{Proof of Theorem~\ref{prop:order-k-1-fixpoint}}

Given two expressions \(e_1,e_2\) and fixpoint equations \(\EQref{\GRAM}\),
we write \(e_1\cong_{\EQref{\GRAM}}e_2\)
  if \(\seme{e_1}{\rho_{\EQref{\GRAM}}} = \seme{e_e}{\rho_{\EQref{\GRAM}}}\).
  We often omit the subscript.

As sketched in Section~\ref{sec:order-n-1-equation}, we first prove
the theorem for recursion-free \pHORSs{}. A key property used for showing it
is that the translation relation is preserved by reductions, roughly in the sense
that if  \(t\redp{L,p}{} t_L\) and \(t\redp{R,1-p}{}t_R\), then
\(t\tr e\) (i.e., \(t\) is translated to \(e\))
implies that there exist \(e_L\) and \(e_R\) such that
\(t_L\tr e_L\), \(t_R\tr e_R\) and \(e\cong p \cdot e_L+(1-p)\cdot e_R\)
(where \(+\) and \(\cdot\) are pointwise extended to operations on tuples).
Thus, the property that \(e\) represents the termination probability of \(t\)
follows from the corresponding properties of \(e_L\) and \(e_R\); by
induction (note that since we are considering recursion-free \pHORS{},
\(\flat(t)>\flat(t_L), \flat(t_R)\), where \(\flat(t)\) denotes the length of the
longest reduction sequence from \(t\)),
 it follows that if the initial term is translated to \(e_0\), then \(e_0\)
 represents the termination probability of the initial term.

 Unfortunately, however, the translation relation is \emph{not} necessarily
 preserved by the standard reduction relation \(\redp{d,p}{\GRAM}\) defined
 in Section~\ref{sec:problem}. We thus introduce another reduction relation
 that uses \emph{explicit substitutions} on order-0 variables.
To this end, we extend the syntax of terms as follows.
\[
\begin{array}{l}
  t \mbox{ (extended terms)}::=
    \Omega\mid x \mid t_1t_2 \mid \Subs{t_1/x_1,\ldots,t_k/x_k}{t_0}
\end{array}
\]
Here, 
\(\Subs{t_1/x_1,\ldots,t_k/x_k}{t_0}\) represents
an \emph{explicit} substitution;
the intended meaning is the same as the ordinary substitution
\([t_1/x_1,\ldots,t_k/x_k]{t_0}\) (which represents the term
obtained from \(t_0\) by simultaneously substituting 
\(t_i\) for \(x_i\)), but the substitution is delayed until one of the variables
in \(x_1,\ldots,x_k\) becomes necessary.
We often abbreviate \(\Subs{t_1/x_1,\ldots,t_k/x_k}\) as \(\Sub{\seq{x}}{\seq{t}}\).
Note that we have omitted \(\Te\); we consider
an open term \(S\,x\) as the initial term instead of \(S\,\Te\).
The type judgment relation for terms is extended by adding the following rule:
\infrule{\STE \p s_i:\T \mbox{ (for each $i\in\set{1,\ldots,k}$)}\andalso \STE,x_1\COL\T,\ldots,x_k\COL\T\p t:\T   }
        {\STE \p \Subs{s_1/x_1,\ldots,s_k/x_k}{t}:\T}
Thus,  explicit substitutions are allowed only for order-0 variables.
        
\noindent        
\textbf{Reductions with explicit substitutions:}\ \\
We now define a reduction relation for extended terms.
The set of evaluation contexts, ranged over by \(E\), is defined by:
\[ E 
::= \Hole \mid \Sub{\seq{x}}{\seq{t}}E.\]
The new reduction relation \(t\newredpg{d,p}{\GRAM}t'\) (where \(d\in\set{L,R,\epsilon}\))
is defined as follows.
\infrule{z\notin \set{x_1,\ldots,x_k}}
   {E[\Subs{t_1/x_1,\ldots,t_k/x_k}z]\newredpg{\epsilon,1}{\GRAM} E[z]}
\infrule{}
   {E[\Subs{t_1/x_1,\ldots,t_k/x_k}x_i]\newredpg{\epsilon,1}{\GRAM} E[t_i]}
   \infrule{\RULES(F)=\lambda \seq{y}.\lambda \seq{z}.u_L\C{p}u_R\\
          \NONTERMS(F)=\seq{\sty}\To\T^\ell\to\T\andalso \ell=\seql{z}=\seql{t}
     \andalso \seql{y}=\seql{s}\\
     \mbox{$\seq{z}$ do not occur in $E[F\,\seq{s}\,\seq{t}]$}}
 {E[F\,\seq{s}\,\seq{t}] \newredpg{L,p}{\GRAM} E[\Sub{\seq{z}}{\seq{t}}[\seq{s}/\seq{y}]u_L]}
\infrule{\RULES(F)=\lambda \seq{y}.\lambda \seq{z}.u_L\C{p}u_R\\
          \NONTERMS(F)=\seq{\sty}\To\T^\ell\to\T\andalso \ell=\seql{z}=\seql{t}
     \andalso \seql{y}=\seql{s}\\
     \mbox{$\seq{z}$ do not occur in $E[F\,\seq{s}\,\seq{t}]$}}
        {E[F\,\seq{s}\,\seq{t}] \newredpg{R,1-p}{\GRAM} E[\Sub{\seq{z}}{\seq{t}}[\seq{s}/\seq{y}]u_R]}

        We call reductions using the first two rules (i.e., reductions labeled by
        \(\newredpg{\epsilon,p}{\GRAM}\)) \emph{administrative reductions}.
        In the last two rules, we assume that \(\alpha\)-conversion is implicitly
        applied so that \(\seq{z}\) do not clash with variables that are
        already used in 
        \(E[F\,\seq{s}\,\seq{t}]\).
        In those rules,
        recall also our notational convention that
        when we write \(F\,\seq{s}\,\seq{t}\), the second sequence \(\seq{t}\)
        is the maximal sequence of order-0 terms (that condition is made explicit
        in the above rules, but below we often omit to state it).
        As before, we often omit the subscript \(\GRAM\).

        For an extended term \(t\), we write \(t^*\) for the term obtained
        by replacing explicit substitutions with ordinary substitutions.
        For example, \((\Sub{x}{t}(F\,x))^* = F\,t\).
        The following lemma states that the new reduction relation is
        essentially equivalent to the original reduction relation:
        \begin{lemma}
          \label{lem:red-vs-esred}
          Let \(s\) be an extended term. 
          \begin{enumerate}
          \item If \(s\newredp{\epsilon,1}{}t\), then \(s^*=t^*\).
          \item If \(s\newredp{d,p}{}t\) with \(d\in\set{L,R}\),
            then \(s^*\redp{d,p}{}t^*\).
          \item If \(s^*\redp{d,p}{}u\), then
            there exists \(t\) such that \(s(\newredp{\epsilon,1})^*\newredp{d,p}t\) and \(t^*=u\).
            \end{enumerate}
        \end{lemma}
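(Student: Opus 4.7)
\smallskip

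The plan is to first establish an auxiliary commutation lemma: for any evaluation context $E = \Sub{\seq{x}_1}{\seq{u}_1}\cdots\Sub{\seq{x}_n}{\seq{u}_n}[\,]$, the operation $(\cdot)^*$ satisfies $(E[t])^* = \sigma_E(t^*)$, where $\sigma_E$ is the composed substitution obtained by applying $[\seq{u}_n^*/\seq{x}_n]$ innermost and $[\seq{u}_1^*/\seq{x}_1]$ outermost (up to the usual freshness conventions). This lemma is proved by straightforward induction on $n$. It is the workhorse for all three parts, since evaluation contexts are exactly the structure that $(\cdot)^*$ flattens into an ordinary substitution.

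For part (1), I would proceed by case analysis on the admin reduction rule. Either $s = E[\Sub{\seq{x}}{\seq{u}}z]$ with $z \notin \seq{x}$ reducing to $t = E[z]$, or $s = E[\Sub{\seq{x}}{\seq{u}}x_i]$ reducing to $t = E[u_i]$. In both cases the commutation lemma gives $s^* = \sigma_E([\seq{u}^*/\seq{x}]z)$ and $t^* = \sigma_E(z)$ or $\sigma_E(u_i^*)$, and one checks directly that the meta-substitution $[\seq{u}^*/\seq{x}]$ agrees with the admin reduction's effect. For part (2), I would unfold the definition of the Main rule: $s = E[F\,\seq{s}'\,\seq{t}'] \newredpg{d,p}{} E[\Sub{\seq{z}}{\seq{t}'}[\seq{s}'/\seq{y}]u_d] = t$. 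Applying the commutation lemma (and using that the freshly renamed $\seq{z}$ cannot clash with $E$), we get $t^* = \sigma_E([\seq{t}'^*/\seq{z}, \seq{s}'^*/\seq{y}]u_d)$, which is exactly what the original rule yields from $s^* = \sigma_E(F\,\seq{s}'^*\,\seq{t}'^*)$ reducing at the head.

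Part (3) is the main case. Given $s^* = F\,\seq{v}_0 \redp{d,p}{} u$, I would prove by induction on a size measure of $s$ (counting explicit substitutions plus underlying term size) the following claim: there exists $s''$ with $s \,(\newredp{\epsilon,1})^*\, s''$ such that $s''$ has the form $E[F\,\seq{s}'\,\seq{t}']$ and $s''^* = s^*$. Once this is in hand, one application of the Main rule produces $t$, and $t^* = u$ follows from part (2). The case analysis on $s$ proceeds as follows: if $s$ is already of the form $E[F\,\seq{s}'\,\seq{t}']$ we are done with zero admin steps; if $s = \Sub{\seq{x}}{\seq{u}}s'$ and $s'$ is not an admin redex, then by the commutation lemma $s'^*$ must also have $F$ at the head (non-terminals are never substituted away), so the IH exposes $F$ inside $s'$ and we prepend $\Sub{\seq{x}}{\seq{u}}$ to the resulting context; if $s$ itself is an admin redex, one admin step yields a strictly smaller term and the IH applies.

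The main obstacle will be handling subtle configurations in the inductive step of (3): specifically, terms where $F$ is hidden under nested explicit substitutions whose bodies are themselves variables bound by outer substitutions, forcing several admin steps in sequence before $F$ surfaces at the head of an eval context. I will need to argue carefully that the size measure strictly decreases under each admin step, which requires distinguishing the two admin rules (dropping a substitution on a free variable versus replacing a bound variable by its value) and checking that neither can cause the measure to stall. A secondary technical point will be managing $\alpha$-conversion so that the fresh variables $\seq{z}$ introduced by the Main rule do not accidentally collide with the free variables of $E$ or of $\seq{s}', \seq{t}'$, which I will handle by adopting the Barendregt convention throughout.
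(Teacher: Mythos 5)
Your argument is correct and follows the only natural route: the paper itself dismisses this lemma as ``Immediate from the definitions of $\redp{d,p}{}$ and $\newredp{d,p}{}$,'' and your commutation lemma $(E[t])^* = \sigma_E(t^*)$, the case analyses for (1) and (2), and the size-measure induction exposing the head non-terminal for (3) are exactly the details being elided (note that part (3) is further simplified by observing that explicit substitutions bind only order-0 variables, so the head of any application chain in a well-typed extended term is necessarily a non-terminal).
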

        \begin{proof}
          Immediate from the definitions of \(\redp{d,p}{}\) and \(\newredp{d,p}{}\).
          \end{proof}
\newcommand\Rpath{\mathit{P}_{\mathtt{es}}}        
 We define \(\newredspg{\pi,p}{\GRAM}\)
 in an analogous manner to \(\redsp{\pi,p}{\GRAM}\), where
 the label \(\epsilon\) is treated as an empty word.
 For an extended term \(t\) that may contain an order-0 free variable \(x\),
 we write \(\Rpath(\GRAM,t,x)\) for the set
 \(\set{(\pi,p)\mid t\newredspg{\pi,p}{\GRAM}x}\), and
write \(\ProbES(\GRAM,t,x)\) for
 \(\sum_{(\pi,p)\in \Rpath(\GRAM,t,x)}p\),
 based on the new reduction relation.
 The following lemma follows immediately from the above definitions
 and Lemma~\ref{lem:red-vs-esred}.
 \begin{lemma}
   \label{lem:red-vs-redes}
   Let \(t\) be a term of \pHORS{} \(\GRAM=(\NONTERMS,\RULES,S)\)
   such that \(\NONTERMS,x\COL\T\p t:\T\) and \(t\) does not contain
   \(\Te\). Then
  \( \Prob(\GRAM,[\Te/x]t)=\ProbES(\GRAM,t,x)\).
 \end{lemma}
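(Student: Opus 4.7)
The plan is to factor the proof through an intermediate quantity $\Prob_x(\GRAM,t) := \sum\{p : t \redsp{\pi,p}{\GRAM} x\}$, the probability that the original head-reduction starting from $t$ reaches the free variable $x$. I would prove $\Prob(\GRAM,[\Te/x]t) = \Prob_x(\GRAM,t)$ and $\Prob_x(\GRAM,t) = \ProbES(\GRAM,t,x)$ separately and then compose them.

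For the first equality, I would set up a step-by-step bijection between old-semantics reductions of $t$ and those of $[\Te/x]t$ by checking that $t_1 \redp{d,p}{} t_2$ iff $[\Te/x]t_1 \redp{d,p}{} [\Te/x]t_2$. The forward direction exploits the standard commutation $[\Te/x]([\seq{s}/\seq{y}]u) = [[\Te/x]\seq{s}/\seq{y}]u$, valid since rule bodies contain neither $\Te$ (by the convention of Section~\ref{sec:order-n-1-equation}) nor $x$ (by \(\alpha\)-renaming of the bound formal parameters). The backward direction follows because $[\Te/x]$ leaves the head-position shape of a term unchanged: any non-terminal head redex in $[\Te/x]t_1$ comes from a unique non-terminal head redex in $t_1$, with the same rule applied. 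Since $t$ contains no $\Te$ by hypothesis and rule bodies contain no $\Te$, every occurrence of $\Te$ in a reduct of $[\Te/x]t$ must have originated from the initial substitution of some $x$; hence a reduction of $[\Te/x]t$ terminates at $\Te$ precisely when the corresponding reduction of $t$ terminates at $x$, with identical labels and probabilities. Summing yields the claimed equality.

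For the second equality, I would build a bijection between old- and new-semantics reductions ending at $x$ using Lemma~\ref{lem:red-vs-esred}. The direction $\ProbES(\GRAM,t,x) \leq \Prob_x(\GRAM,t)$ follows by pointwise application of parts (1) and (2) of that lemma: each new-semantics step with label $\epsilon$ leaves the $*$-image unchanged, each main step with label $L/R$ projects to a matching old step, and concatenating while erasing $\epsilon$-labels produces an old reduction $t = t^* \redsp{\pi',p}{} x^* = x$ with the same probability. For the converse, I would iteratively apply part (3) to lift each old reduction step to a new-semantics block of the form $(\newredp{\epsilon,1}{})^* \newredp{d,p}{}$. To rule out overcounting, I would argue that the lifted sequence is unique: at any reachable extended term, an administrative reduction is applicable exactly when the term has the form $E[\Subs{\ldots}{z}]$ with $z$ a variable, a main reduction is applicable exactly when the term has the form $E[F\,\seq{s}\,\seq{t}]$ with $F$ a non-terminal, and these two shapes are syntactically disjoint (the innermost non-substitution position is either a variable or a non-terminal application, never both). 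Hence admin steps are forced by the current term, their positions in the new-semantics label word are uniquely determined by the underlying old reduction, and since each admin step has probability $1$ the bijection preserves total probabilities.

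The hard part will be the uniqueness argument in the second equality: I must verify carefully that the admin and main redex patterns are mutually exclusive at the ``head'' of any reachable extended term, and that no residual ambiguity in the order of admin steps arises (so that the new-semantics reduction graph is deterministic modulo the probabilistic $L/R$ choice). This is a syntactic argument based on inspecting the shapes of evaluation contexts $E$ and the bound-variable discipline used in the main rules, but it is the place where the formalism of explicit substitutions earns its keep and where a slip in the case analysis would invalidate the probability-preserving bijection.
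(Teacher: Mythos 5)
Your proposal is correct and follows essentially the same route as the paper: the paper's proof is a two-sentence chain of equivalences, $t\newredsp{\pi,p}{}x$ iff $t^*\redsp{\pi,p}{}x$ (by Lemma~\ref{lem:red-vs-esred}) iff $[\Te/x]t^*\redsp{\pi,p}{}\Te$ (since $\Te$ occurs neither in $t$ nor in $\RULES$), which is exactly your two-step factorization through $\Prob_x$ read in the other order. The uniqueness/determinism analysis you flag as the hard part is sound but not strictly needed, since $\ProbES$ sums over the \emph{set} of pairs $(\pi,p)$ rather than over reduction sequences, so set equality of the label-probability pairs already gives equality of the sums.
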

 \begin{proof}
   By Lemma~\ref{lem:red-vs-esred}, \(t\newredsp{\pi,p}x\) if and only if
   \(t^*\redsp{\pi,p}{}x\), if and only if
   \([\Te/x]t^*\redsp{\pi,p}{}\Te\)  (for the second ``if and only if'',
   recall the assumption that \(\Te\) does not occur in \(\RULES\)),
   from which the result follows.
 \end{proof}
We extend the translation relation for terms with the following rule.
\infrule[Tr-Sub]{\STE;x_1,\ldots,x_k\pN s_i:\T\tr (s_{i,0},\ldots,s_{i,k+1}) \mbox{ (for each $i\in\set{1,\ldots,\ell}$)}\\
\STE;z_1,\ldots,z_\ell,x_1,\ldots,x_k\pN t:\T\tr (t_0,\ldots,t_{k+\ell+1})}
        {\STE;x_1,\ldots,x_k\pN \Subs{s_1/z_1,\ldots,s_\ell/z_\ell}{t}:\T\tr\\\quad 
(t_0+\Sigma_{i=1}^\ell t_{i}\cdot s_{i,0},
          t_{\ell+1}+\Sigma_{i=1}^\ell t_{i}\cdot s_{i,1}, \ldots,t_{k+\ell+1}+\Sigma_{i=1}^\ell t_{i}\cdot s_{i,k+1})}

We shall prove that the translation relation is preserved by the new reduction
relation (Lemmas~\ref{lem:tr-subj}, \ref{lem:tr-subj-ad1}, and
\ref{lem:tr-subj-ad2} below).

\begin{lemma}[Weakening]
  \label{lem:tr-weakening}
  \begin{enumerate}
    \item 
      If \(\STE;x_1,\ldots,x_k\pN t:\sty\tr e\), then
      \(\STE,y\COL\sty_y ;x_1,\ldots,x_k\pN t:\sty\tr e\).
    \item
      If \(\STE;x_1,\ldots,x_k\pN t:\sty\tr (t_0,\ldots,t_\ell)\), then
      \(\STE ;x_1,\ldots,x_k, x_{k+1}\pN t:\sty\tr (t_0,\ldots,t_\ell,t_\ell)\).
  \end{enumerate}
\end{lemma}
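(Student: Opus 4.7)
The plan is to prove both parts by simultaneous induction on the derivation of
the translation judgement \(\STE;x_1,\ldots,x_k\pN t:\sty\tr e\).
Part~(1) is essentially routine: adding a fresh binding \(y\COL\sty_y\) to
\(\STE\) never interferes with any side-condition of the translation rules in
Figure~\ref{fig:tr-n-1}, since the environment is only inspected in \rn{Tr-Var}
(to read \(\STE(y')\) for a variable \(y'\) already in scope) and \rn{Tr-NT}
(which inspects \(\NONTERMS\), not \(\STE\)). Thus every rule of the
translation lifts verbatim to the enlarged environment, and the inductive
step simply reapplies the same rule with the weakened premises.

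Part~(2) is the substantive part, and the intended intuition is that the last
component of the output tuple represents \emph{the reachability probability to
a freshly introduced target}; adding one more order-0 variable \(x_{k+1}\) to
the context should therefore duplicate that last component, since
\(x_{k+1}\) does not occur in \(t\) and hence contributes no new reachability
information. Concretely, for each translation rule I would verify that if
the original output is \((t_0,\ldots,t_\ell)\), then the output obtained with
the extra variable \(x_{k+1}\) in scope is \((t_0,\ldots,t_\ell,t_\ell)\).
For \rn{Tr-Omega} and \rn{Tr-GVar} this is immediate because the duplicated
trailing entry is \(0\). For \rn{Tr-Var} the trailing component \(y_{\ell+1}\)
is already repeated \(k+1\) times, so a further repetition gives the desired
form; \rn{Tr-NT} is analogous with \(F_0\) in place of \(y_{\ell+1}\).

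The two interesting cases are \rn{Tr-App} and \rn{Tr-AppG}. For \rn{Tr-App},
applying the induction hypothesis to \(s\) and \(t\) yields that, in the
weakened context, their translations pick up one extra trailing component
equal to their previous last one, that is \(s_{\ell+k+2}=s_{\ell+k+1}\) and
\(t_{\ell'+k+2}=t_{\ell'+k+1}\). Reading off the output of \rn{Tr-App} in
the weakened context then produces an extra final entry
\(s_{\ell+k+2}(t_{\ell'+k+2},t_1,\ldots,t_{\ell'},t_{\ell'+k+2})
= s_{\ell+k+1}(t_{\ell'+k+1},t_1,\ldots,t_{\ell'},t_{\ell'+k+1})\), which
equals the previous last component; all other components are unchanged
because \(s_0,\ldots,s_{\ell+k+1}\) and \(t_0,\ldots,t_{\ell'+k+1}\) are
preserved by the induction hypothesis. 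The case \rn{Tr-AppG} is analogous:
the formula \(s_{\ell+k+2}+s_1\cdot t_{k+1}\) for the last component is
replaced by \(s_{\ell+k+3}+s_1\cdot t_{k+2}\), and the induction hypothesis
gives \(s_{\ell+k+3}=s_{\ell+k+2}\) and \(t_{k+2}=t_{k+1}\), so the new
trailing entry equals the old one. The main obstacle is bookkeeping: one
must be careful that in each rule only the last component of the tuple is
duplicated, and that the other components are untouched by the weakening;
after \rn{Tr-App} is handled, the remaining rules are direct.
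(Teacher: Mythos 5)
Your proposal is correct and matches the paper's approach: the paper disposes of this lemma with the single line ``this follows by straightforward induction on the structure of \(t\)'', and your case analysis (induction on the translation derivation, which is equivalent since the rules are syntax-directed) is exactly the spelled-out version of that argument, with the key observations about \rn{Tr-Var}/\rn{Tr-NT} repeating their trailing component and about \rn{Tr-App}/\rn{Tr-AppG} propagating the duplicated last entries of the premises all checking out.
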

\begin{proof}
  This follows by straightforward induction on the structure of \(t\).
  \end{proof}

\begin{lemma}[Exchange]
  \label{lem:tr-exchange}\ \\
  If \(\STE;x_1,\ldots,x_{i},x_{i+1},\ldots,x_k\pN t:\sty\tr
  (t_0,\ldots, t_{\arity(\sty)+i},t_{\arity(\sty)+i+1},\ldots,t_{\arity(\sty)+k+1})\), then
      \(\STE;x_1,\ldots,x_{i+1},x_{i},\ldots,x_k\pN t:\sty\tr 
  (t_0,\ldots, t_{\arity(\sty)+i+1},t_{\arity(\sty)+i},\ldots,t_{\arity(\sty)+k+1})\).
\end{lemma}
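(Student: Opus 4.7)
The plan is to prove this by induction on the derivation of the translation judgement $\STE; \seq{x}\pN t:\sty\tr \vec{t}$, or equivalently on the structure of $t$. The claim is a symmetry statement: exchanging two adjacent order-$0$ variables in the context simply swaps the two corresponding ``reachability to $x_i$'' components in the output tuple, leaving every other component untouched. Since the positions of interest are $\arity(\sty)+i$ and $\arity(\sty)+i+1$, in each rule I would verify that these two components are treated completely symmetrically.

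For the base cases, \rn{Tr-Omega} is trivial since all components are $0$. For \rn{Tr-Var} and \rn{Tr-NT}, all components at positions $\arity(\sty)+1,\ldots,\arity(\sty)+k+1$ are identical (they are $y_{\ell+1}$ and $F_0$ respectively), so the swap is a no-op on the output, while the choice of tuple is determined by $\STE$ and $\NONTERMS$ only, not by the order of the $x_j$'s. The only slightly delicate base case is \rn{Tr-GVar}: if $t = x_j$ with $j\neq i, i+1$, the translation is unchanged and the components at positions $i,i+1$ are both $0$, so the swap is again a no-op; if $t = x_i$, the original tuple has $1$ at position $i$ and $0$ at position $i+1$, while the permuted context puts $x_i$ in position $i+1$, giving $1$ at position $i+1$ and $0$ at position $i$, which is exactly the swap.

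For the inductive cases \rn{Tr-App} and \rn{Tr-AppG}, I would apply the induction hypothesis to each premise, so that swapping $x_i, x_{i+1}$ in the context swaps positions $\arity(\sty_s)+i$ and $\arity(\sty_s)+i+1$ in the tuple for $s$, and positions $\arity(\sty_1)+i$ and $\arity(\sty_1)+i+1$ in the tuple for $t$. Then I would read off the resulting output tuple directly from the conclusion of the rule: in \rn{Tr-App}, the first $\ell+1$ components of the output depend only on positions of the $s$-tuple and $t$-tuple that are not affected by the swap (positions $0,\ldots,\ell$ of the $s$-tuple and positions $0,\ldots,\ell',\ell'+k+1$ of the $t$-tuple), hence remain identical; the components $s_{\ell+m}(t_{\ell'+m},\ldots)$ for $m=1,\ldots,k+1$ permute in exactly the right way, because both $s_{\ell+m}$ and $t_{\ell'+m}$ pick up the swap simultaneously in their $m = i$ and $m=i+1$ instances. \rn{Tr-AppG} is analogous but simpler, since each output component is a sum $s_{j'}+s_1\cdot t_{j''}$ where $s_1$ and the ``non-target'' parts are unaffected.

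There is essentially no hard step: the whole argument is a matter of bookkeeping on indices, and it is this bookkeeping (rather than any conceptual difficulty) that is the main source of possible error. To keep the argument clean I would first prove, as an auxiliary observation, that for each rule the output tuple's components at indices $\arity(\sty)+1,\ldots,\arity(\sty)+k+1$ are obtained by applying one and the same construction to the corresponding indices of the premises (and to the single ``extra'' variable appearing in \rn{Tr-GVar}); the exchange property is then immediate.
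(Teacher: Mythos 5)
Your proof is correct and takes exactly the paper's route: the paper disposes of this lemma with the single sentence that it ``follows by straightforward induction on the structure of \(t\)'', and your case analysis (trivial swap in \rn{Tr-Omega}, \rn{Tr-Var}, \rn{Tr-NT}; the genuine swap arising only in \rn{Tr-GVar}; componentwise propagation through \rn{Tr-App} and \rn{Tr-AppG}) is precisely the bookkeeping that sentence leaves implicit. Nothing further is needed.
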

\begin{proof}
  This follows by straightforward induction on the structure of \(t\).
  \end{proof}


As usual, the substitution lemma, stated below, is a critical lemma for
proving subject reduction. The statement of our substitution lemma
is, however, quite delicate, due to a special treatment of order-0 variables.

\begin{lemma}[Substitution]
  \label{lem:n-1:subj}
  Suppose \(t\) does not contain any explicit substitutions (i.e., any subterms
  of the form \(\Sub{\seq{x}}{\seq{u}}s\)).
If \(\seq{y}\COL \seq{\sty_y}; \seq{z}\pN t:\sty\tr (\seq{t},t_{m+1})\) and
\(\emptyTE; x_1,\ldots,x_k\pN s_i:\sty_{y,i}\tr (\seq{s_i},s_{i,\ell_i+1},\ldots,s_{i,\ell_i+k+1})\)
with \(\set{x_1,\ldots,x_k}\cap \set{\seq{z}}=\emptyset\) and
\(\ell_i=\arity(\sty_{y,i})\),
then: \[
\begin{array}{l}
\emptyTE;\seq{z},x_1,\ldots,x_k\pN [\seq{s}/\seq{y}]t:\sty\tr 
(\theta_0\seq{t},
\theta_{1}t_{0},\ldots,
\theta_{k}t_{0},\theta_0t_{m+1}),
\end{array}\]
where
 the substitutions
 \(\theta_j (j\in\set{0,\ldots,k})\) are defined by:
 \[
 \begin{array}{l}
   \theta_j = \theta_{1,j}\cdots\theta_{\seql{s},j}
  \mbox{ for $j\in\set{1,\ldots,k}$}\\
  \theta_{i,0}=[s_{i,0}/y_{i,0},\ldots,s_{i,\ell_i}/y_{i,\ell_i},\iftwocol \\\quad\qquad\quad\fi
s_{i,\ell_i+k+1}/y_{i,\ell_i+1}]
  \mbox{ for $i\in\set{1,\ldots,|\seq{s}|}$}\\
  \theta_{i,j} = [s_{i,\ell_i+j}/y_{i,0},s_{i,1}/y_{i,1},\ldots,s_{i,\ell_i}/y_{i,\ell_i},
\iftwocol \\\quad\qquad\quad\fi s_{i,\ell_i+k+1}/y_{i,\ell_i+1}]\\\hfill
  \mbox{ for $i\in\set{1,\ldots,|\seq{s}|}, j\in\set{1,\ldots,k}$}.
  \end{array}
\]
\end{lemma}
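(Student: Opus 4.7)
The plan is to argue by induction on the structure of $t$ (equivalently, on the derivation of $\seq{y}\COL\seq{\sty_y};\seq{z}\pN t:\sty\tr (\seq{t},t_{m+1})$), following the translation rules in Figure~\ref{fig:tr-n-1}. Since $t$ is assumed to contain no explicit substitution, the rule \rn{Tr-Sub} is never used, so only six cases arise: \rn{Tr-Omega}, \rn{Tr-GVar}, \rn{Tr-Var}, \rn{Tr-NT}, \rn{Tr-App} and \rn{Tr-AppG}. For the first, fourth and for \rn{Tr-GVar} applied to some $z_j$, the substitution $[\seq{s}/\seq{y}]$ is the identity on $t$, so the target tuple is a vector of $0$'s and at most one $1$; one then checks directly that applying the $\theta_j$'s to such a tuple and re-indexing the coordinates reproduces the translation of $t$ computed with the extended context $\seq{z},x_1,\ldots,x_k$. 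Lemmas~\ref{lem:tr-weakening} (weakening) and~\ref{lem:tr-exchange} (exchange) will be used freely to reconcile the order of the local order-0 variables.

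The heart of the proof is the case where $t=y_i$ for some $y_i\in\dom(\seq{y})$. Here \rn{Tr-Var} yields the tuple $(y_{i,0},y_{i,1},\ldots,y_{i,\arity(\sty_{y,i})},(y_{i,\arity(\sty_{y,i})+1})^{m+1})$, while $[\seq{s}/\seq{y}]y_i=s_i$ has, by hypothesis, the translation $(s_{i,0},s_{i,1},\ldots,s_{i,\ell_i},s_{i,\ell_i+1},\ldots,s_{i,\ell_i+k+1})$ under the extended context with the $x_j$'s visible. The definition of the $\theta_{i,j}$'s is engineered precisely so that $\theta_{i,0}$ rewrites $(y_{i,0},\ldots,y_{i,\ell_i},y_{i,\ell_i+1})$ into $(s_{i,0},\ldots,s_{i,\ell_i},s_{i,\ell_i+k+1})$, whereas $\theta_{i,j}$ (for $j\geq 1$) rewrites $y_{i,0}$ into $s_{i,\ell_i+j}$, reflecting the fact that when we ask about reachability to the local variable $x_j$, the ``current target'' from the environment's viewpoint is $x_j$ itself. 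Verifying this case is therefore a direct computation that confirms the pattern on which the statement of the lemma is built.

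The inductive cases \rn{Tr-App} and \rn{Tr-AppG} are reduced to the induction hypothesis and to Lemma~\ref{lem:n-1:independence}, which is used to discard occurrences of the $y_{i,0}$'s in those components of sub-translations that do not mention the current target. In \rn{Tr-App}, we apply the induction hypothesis to $s$ and to $t$, then push the various $\theta_{j}$'s through the applied structure; because Lemma~\ref{lem:n-1:independence} forbids $y_{i,0}$ from appearing in the $1,\ldots,\ell$-components of the translation of $s$, the substitutions $\theta_{i,j}$ (for $j\geq 1$) act there as $\theta_{i,0}$ does on the $0$-th component, and everything lines up. The \rn{Tr-AppG} case is analogous but simpler: the arithmetic expressions $s_0+s_1\cdot t_0$ (and their $x_j$-analogues $s_{\ell+j+1}+s_1\cdot t_j$) commute with the $\theta_j$'s in exactly the way required.

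The main obstacle I anticipate is purely combinatorial: tracking the precise interaction between (i) the distinguished $0$-th coordinate of each tuple, (ii) the ``fresh target'' coordinate $t_{m+1}$, and (iii) the coordinates indexed by the local order-0 variables $x_1,\ldots,x_k$, through the compositions $\theta_j=\theta_{1,j}\cdots\theta_{|\seq{s}|,j}$. The notational burden is substantial, but once the \rn{Tr-Var} case is verified, the remaining cases follow by routine substitutivity arguments, using Lemmas~\ref{lem:tr-weakening}, \ref{lem:tr-exchange} and~\ref{lem:n-1:independence} as the essential tools.
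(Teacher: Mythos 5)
Your plan matches the paper's own proof: induction on the translation derivation, with the constant/variable/non-terminal cases checked directly (using weakening and exchange to place the translation of each $s_i$ in the extended context for the \rn{Tr-Var} case), and the application cases handled via the induction hypothesis together with Lemma~\ref{lem:n-1:independence} to identify $\theta_0$ with $\theta_j$ on the components that cannot mention $y_{i,0}$. The approach is correct and essentially identical to the one in the paper.
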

Here, the part
\(\theta_{1}t_{0},\ldots,
\theta_{k}t_{0}\) accounts for information about
reachability to the newly introduced variables \(x_1,\ldots,x_k\).
\begin{proof}
  Induction on the derivation of
  \(\seq{y}\COL\seq{\sty_y};\seq{z}\pN t:\sty\tr (\seq{t},t_{m+1})\).
\begin{itemize}
\item Case \rn{Tr-Omega}: In this case, \(t=\Omega\) with
  \(\seq{t}=\seq{0}\) and \(t_{m+1}=0\).
Thus, the result follows immediately from the rule \rn{Tr-Omega}.
\item Case \rn{Tr-GVar}: In this case,
  \(t=z_i\) and \(\sty=\T\), with 
  \(\seq{t} = 0^i, 1, 0^{|z|-i}\) and \(t_{m+1}=0\).
  By using \rn{Tr-GVar}, we obtain
  \[\emptyTE;\seq{z},x_1,\ldots,x_k\pN [\seq{s}/\seq{y}]t (=z_i) :\sty\tr
  (0^{i}, 1, 0^{\seql{z}+k-i+1}).\]
  Since
\[  (\theta_0\seq{t},
\theta_{1}t_{0},\ldots,
\theta_{k}t_{0},\theta_0t_{m+1})= 
(\seq{t},
\underbrace{t_{0},\ldots,t_{0}}_k,t_{m+1})= 
(0^i, 1, 0^{\seql{z}+k-i+1}),\]
we have the required result.

\item Case \rn{Tr-Var}: 
In this case, \(t=y_i\), 
with \(\seq{t}=y_{i,0},\ldots,y_{i,\ell_i},y_{i,\ell_i+1}^{\seql{z}}\), and \(t_{m+1}=y_{i,\ell_i+1}\).
By applying Lemmas~\ref{lem:tr-weakening} and \ref{lem:tr-exchange} to 
\(\emptyTE; x_1,\ldots,x_k\pN s_i:\sty_{y,i}\tr (\seq{s_i},s_{i,\ell_i+1},\ldots,s_{i,\ell_i+k+1})\),
we obtain:
\[
\begin{array}{l}
\emptyTE; \seq{z}, x_1,\ldots,x_k\pN s_i:\sty_{y,i}\tr 
\iftwocol \\\quad\qquad\quad\fi
(\seq{s_i},s_{i,\ell_i+k+1}^{\seql{z}},s_{i,\ell_i+1},\ldots,s_{i,\ell_i+k+1}).
\end{array}
\]
Since \(\seq{t}=y_{i,0},\ldots,y_{i,\ell_i},y_{i,\ell_i+1}^{\seql{z}}\),  and
\(t_{m+1}=y_{i,\ell_i+1}\),
we have
\[
\begin{array}{l}
(\theta_0\seq{t},
\theta_{1}t_{0},\ldots,
\theta_{k}t_{0},\theta_0t_{m+1}) = \iftwocol \\\quad\qquad\quad\fi
(\seq{s_i},s_{i,\ell_i+k+1}^{\seql{z}},s_{i,\ell_i+1},\ldots,s_{i,\ell_i+k+1}).
\end{array}
\]
Thus we have the required result.
\item Case \rn{Tr-App}: In this case, we have \(t=uv\) and
\[
\begin{array}{l}
\seq{y}\COL\seq{\sty_y};\seq{z} \pN u:\sty_v\to \sty\tr (u_0,\seq{u},u_{\ell'+1},\ldots,u_{\ell'+|\seq{z}|+1})\\
\seq{y}\COL\seq{\sty_y};\seq{z} \pN v:\sty_v\tr (v_0,\seq{v},v_{\ell''+1},\ldots,v_{\ell''+|\seq{z}|+1})\\
\seq{t} =
u_0(v_0,\seq{v},v_{\ell''+|\seq{z}|+1}),
\seq{u}(\seq{v},v_{\ell''+|\seq{z}|+1}), \\\qquad\qquad
u_{\ell'+1}(v_{\ell''+1},\seq{v},v_{\ell''+|\seq{z}|+1}),\ldots,\iftwocol \\\quad\qquad\quad\fi
u_{\ell'+|\seq{z}|}(v_{\ell''+|\seq{z}|},\seq{v},v_{\ell''+|\seq{z}|+1})\\
t_{m+1} = u_{\ell'+|\seq{z}|+1}(v_{\ell''+|\seq{z}|+1},\seq{v},v_{\ell''+|\seq{z}|+1})\\
\ell' = \seql{u}=\arity(\sty) \quad m = \ell'+|\seq{z}| \quad
\ell'' = \arity(\sty_v)=\seql{v}.
\end{array}
\]
By the induction hypothesis, we have:
\[
\begin{array}{l}
\emptyTE;\seq{z},x_1,\ldots,x_k \pN [\seq{s}/\seq{y}]u:\sty_v\to \sty\tr \\\qquad
  (\theta_0(u_0,\seq{u},u_{\ell'+1},\ldots,u_{\ell'+|\seq{z}|}), 
   \iftwocol \\\quad\qquad\quad\fi\theta_{1}u_0
\ldots, 
   \theta_{k}u_0, \theta_0u_{\ell'+\seql{z}+1})\\
\emptyTE;\seq{z},x_1,\ldots,x_k \pN [\seq{s}/\seq{y}]v:\sty_v\tr \\\qquad
(\theta_0(v_0,\seq{v},v_{\ell''+1},\ldots,v_{\ell''+|\seq{z}|}), 
\iftwocol \\\quad\qquad\quad\fi \theta_{1}v_0, \ldots,
 \theta_{k}v_{0}, \theta_0v_{\ell''+\seql{z}+1}
)\\
\end{array}
\]
By applying \rn{Tr-App}, we obtain:
\[
\begin{array}{l}
\emptyTE;\seq{z},x_1,\ldots,x_k \pN ([\seq{s}/\seq{y}]u)([\seq{s}/\seq{y}]v):\sty\tr \\\qquad
((\theta_0u_0)(\theta_0v_0,\theta_0\seq{v},\theta_{0}v_{\ell''+|\seq{z}|+1}),\iftwocol \\\qquad\fi
(\theta_0\seq{u})
(\theta_0\seq{v},\theta_{0}v_{\ell''+|\seq{z}|+1}),\\\qquad
(\theta_{0}u_{\ell'+1})(\theta_0v_{\ell''+1},\theta_0\seq{v},\theta_{0}v_{\ell''+|\seq{z}|+1}),\ldots, \iftwocol \\\qquad\fi
(\theta_{0}u_{\ell'+\seql{z}})(\theta_0v_{\ell''+\seql{z}},\theta_0\seq{v},\theta_{0}v_{\ell''+|\seq{z}|+1}),\\\qquad
(\theta_{1}u_{0})(\theta_{1}v_0,\theta_0\seq{v},\theta_{0}v_{\ell''+|\seq{z}|+1}),\ldots, \iftwocol \\\qquad\fi
(\theta_{k}u_{0})(\theta_{k}v_0,\theta_0\seq{v},\theta_{0}v_{\ell''+|\seq{z}|+1})\\\qquad
(\theta_{0}u_{\ell'+\seql{z}+1})(\theta_{0}v_{\ell'+\seql{z}+1},\theta_0\seq{v},\theta_{0}v_{\ell''+|\seq{z}|+1}).
\end{array}
\]
Since \(y_{i,0}\) does not occur in \(\seq{v}\) and \(v_{\ell''+\seql{z}+1}\)
(Lemma~\ref{lem:n-1:independence}),
\(\theta_0\seq{v}\) and \(\theta_{0}u_{\ell'+\seql{z}+1}\) are
equivalent to \(\theta_j\seq{v}\) and \(\theta_{j}u_{\ell'+\seql{z}+1}\) respectively
for any \(j\in\set{1,\ldots,k}\).
Therefore, the whole output of transformation is equivalent to:
\[
\begin{array}{l}
(\theta_0(u_0(v_0,\seq{v},v_{\ell''+|\seq{z}|+1})),
\theta_0(\seq{u}(\seq{v},v_{\ell''+|\seq{z}|})),\\\quad
\theta_{0}(u_{\ell'+1}(v_{\ell''+1},\seq{v},v_{\ell''+|\seq{z}|+1})),\ldots, \iftwocol \\\quad\fi
\theta_{0}(u_{\ell'+\seql{z}}(v_{\ell''+\seql{z}},\seq{v},v_{\ell''+|\seq{z}|+1})),\\\quad
\theta_{1}(u_0(v_0,\seq{v},v_{\ell''+|\seq{z}|+1})),
\ldots,
\theta_{k}(u_0(v_0,\seq{v},v_{\ell''+|\seq{z}|+1})),\\\quad
\theta_{0}(u_{\ell'+\seql{z}+1}(v_{\ell'+\seql{z}+1},\seq{v},v_{\ell''+|\seq{z}|+1}))).\\
\end{array}
\]
Thus, we have the required result.
\item Case \rn{Tr-NT}: The result follows immediately from \rn{Tr-NT}.
\item Case \rn{Tr-AppG}: In this case, we have \(t=uv\) and:
\[
\begin{array}{l}
\seq{y}\COL \seq{\sty_y}; \seq{z}\pN u:\T^{\ell'+1}\to\T\tr (u_0,\ldots,u_{\ell'+|\seq{z}|+2})\\
\seq{y}\COL \seq{\sty_y}; \seq{z}\pN v:\T\tr (v_0,\ldots,v_{|\seq{z}|+1})\\
(\seq{t},t_{m+1}) = (u_0+u_1\cdot v_0, u_2,\ldots,u_{\ell'+1}, \iftwocol \\\qquad\qquad\fi u_{\ell'+2}+u_1\cdot v_{1},
\ldots, u_{\ell'+|\seq{z}|+2}+u_1\cdot v_{|\seq{z}|+1})\\
\sty = \T^{\ell'}\to\T. 
\end{array}
\]
By the induction hypothesis, we have:
\[
\begin{array}{l}
\emptyTE; \seq{z},x_1,\ldots,x_k\pN [\seq{s}/\seq{y}]u:\T^{\ell'+1}\to\T\tr 
(\theta_{0}u_0,\ldots,\theta_{0}u_{\ell'+|\seq{z}|+1}, \theta_{1}u_0,\ldots,
\theta_{k}u_0,
\theta_{0}u_{\ell'+|\seq{z}|+k+2})\\
\emptyTE; \seq{z},x_1,\ldots,x_k\pN [\seq{s}/\seq{y}]v:\T\tr \iftwocol\\\qquad\fi
(\theta_{0}v_0,\ldots,\theta_{0}v_{|\seq{z}|}, \theta_{1}v_{0},\ldots,
\theta_{k}v_0,\theta_{0}v_{|\seq{z}|+1}).
\end{array}
\]
By applying \rn{Tr-AppG}, we obtain:
\[
\begin{array}{l}
\emptyTE; \seq{z},x_1,\ldots,x_k\pN ([\seq{s}/\seq{y}]u)([\seq{s}/\seq{y}]v): \T^{\ell'}\to\T\tr \\\qquad
 (\theta_0u_0+\theta_0u_1\cdot \theta_0v_0,
  \theta_0u_2,\ldots, \theta_0u_{\ell'+1},\\\qquad
\theta_0u_{\ell'+2}+\theta_{0}u_1\cdot \theta_{0}v_1,\iftwocol\\\qquad\fi
\ldots, 
\theta_{0}u_{\ell'+\seql{z}+1}+\theta_0u_1\cdot \theta_0v_{\seql{z}},\\\qquad
\theta_{1}u_0+\theta_{0}u_1\cdot \theta_{1}v_0,
\ldots, \iftwocol\\\qquad\fi
\theta_{k}u_0+\theta_{0}u_1\cdot \theta_{k}v_0,\iftwocol\\\qquad\fi
\theta_{0}u_{\ell'+|\seq{z}|+2}+\theta_{0}u_1\cdot \theta_{0}v_{|\seq{z}|+1}).
\end{array}
\]
Since \(u_1\) does not contain any occurrence of \(y_{i,0}\)
(Lemma~\ref{lem:n-1:independence}),
\(\theta_0 u_1=\theta_j u_1\) for any \(j\in\set{1,\ldots,k}\).
Therefore, the output of the translation is equivalent to:
\[
\begin{array}{l}
 (\theta_0(u_0+u_1\cdot v_0),
  \theta_0u_2,\ldots, \theta_0u_{\ell'+1},\\\qquad
\theta_0(u_{\ell'+2}+u_1\cdot v_1),
\ldots, 
\theta_{0}(u_{\ell'+\seql{z}+1}+u_1\cdot v_{\seql{z}}),\\\qquad
\theta_{1}(u_0+u_1\cdot v_0),
\ldots, 
\theta_{k}(u_0+u_1\cdot v_0),\iftwocol\\\qquad\fi
\theta_{0}(u_{\ell'+|\seq{z}|+2}+u_1\cdot v_{|\seq{z}|+1})).
\end{array}
\]
Thus, we have  the required result. 
\qedhere
\end{itemize}
\end{proof}


We are now ready to prove that the translation relation is preserved by
reductions (Lemmas~\ref{lem:tr-subj}, \ref{lem:tr-subj-ad1}, and
\ref{lem:tr-subj-ad2} below).

\begin{lemma}[Subject Reduction]
  \label{lem:tr-subj}
  If \(\emptyTE;\seq{x}\pN F\,\seq{s}\,\seq{t}:\T\tr (v_0,\ldots,v_{|\seq{x}|+1})\)
  and \(\RULES(F)=\lambda \seq{y}.\lambda \seq{z}.u_L\C{p}u_R\)
  (where \(\set{\seq{x}}\cap \set{\seq{z}}=\emptyset\)),
  then there exist \(w_{L,i},w_{R,i}\ (i\in\set{0,\ldots,|\seq{x}|+1})\) such that
  \(\emptyTE;\seq{x}\pN \Sub{\seq{z}}{\seq{t}}[\seq{s}/\seq{y}]u_d
  \tr (w_{d,0},\ldots,w_{d,|\seq{x}|+1})\)
  and \(v_i\cong pw_{L,i}+(1-p)w_{R,i}\) for each
  \(i\in\set{0,\ldots,|\seq{x}|+1}\).
\end{lemma}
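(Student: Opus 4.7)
\textbf{Proof Proposal for Lemma~\ref{lem:tr-subj}.} The plan is to compare the translation of $F\,\seq{s}\,\seq{t}$ (obtained by inverting the translation rules) with the translation of $\Sub{\seq{z}}{\seq{t}}[\seq{s}/\seq{y}]u_d$ (constructed using the Substitution Lemma together with Tr-Sub), and then use the defining equations for $F_0,F_1,\ldots,F_\ell$ supplied by Tr-Rule to identify the components up to $\cong$. First I would invert the derivation of $\emptyTE;\seq{x}\pN F\,\seq{s}\,\seq{t}:\T\tr (v_0,\ldots,v_{|\seq{x}|+1})$: starting from Tr-NT for $F$, which yields the tuple $(F_0,F_1,\ldots,F_\ell,F_0^{|\seq{x}|+1})$, I repeatedly apply Tr-App for each higher-order argument in $\seq{s}$ and then Tr-AppG for each order-0 argument in $\seq{t}$. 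This expresses each $v_i$ as a specific combination of $F_0,\ldots,F_\ell$ applied to (components of) the translations $(s_{j,0},\ldots,s_{j,\arity(\sty_j)+|\seq{x}|+1})$ of the $s_j$, together with products of the form $F_1(\ldots)\cdot t_{j,0}$ etc.~for the order-0 arguments.

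Next, I would translate the reduct. The typing of the rule gives $\seq{y}\COL\seq{\sty_y},\seq{z}\COL\seq{\T}\pN u_d:\T \tr (u_{d,0},\seq{u_d},u_{d,m+1})$; by Tr-Rule these tuples are the bodies of the equations defining $F_0,F_1,\ldots,F_\ell$, so for example $F_0\,\seq{y_1}\,\cdots\,\seq{y_\ell}\cong pu_{L,0}+(1-p)u_{R,0}$ in $\rho_{\EQref{\GRAM}}$, and analogously for the primed versions. Applying the Substitution Lemma (Lemma~\ref{lem:n-1:subj}) to $\seq{y}\COL\seq{\sty_y};\seq{z}\pN u_d:\T$ with the translations of $\seq{s}$ gives the translation of $[\seq{s}/\seq{y}]u_d$ in the context $\emptyTE;\seq{z},\seq{x}$, whose components are precisely those substitutions $\theta_0(\ldots)$ and $\theta_i(u_{d,0})$ that make the right-hand side of the Tr-Rule equations (after instantiating $\seq{y_i}$ with the translation of $s_i$) into the translation of $[\seq{s}/\seq{y}]u_d$. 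Finally I apply Tr-Sub to absorb the explicit substitution $\Sub{\seq{z}}{\seq{t}}$, which introduces the extra additive terms $\Sigma_i(\cdot)\cdot t_{i,j}$ needed to move from a translation in the context $\emptyTE;\seq{z},\seq{x}$ to one in the context $\emptyTE;\seq{x}$; this yields the $w_{d,i}$.

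The main obstacle will be checking that the two tuples match up under $\cong$. This amounts to matching, on the one hand, the combinations produced by Tr-App/Tr-AppG applied to $(F_0,\ldots,F_\ell,F_0^{|\seq{x}|+1})$ and the translations of $\seq{s},\seq{t}$, and on the other hand, the output of Tr-Sub applied to the substituted translation of $u_d$. The delicate points are: (i) the asymmetry between $F_0$ (which takes the full $\seq{y_i}$ including $y_{i,0}$) and $F_1,\ldots,F_\ell$ (which only take the primed $\seq{y_i}'$), coupled with the reuse of $F_0$ for all ``fresh target'' slots in Tr-NT; (ii) the use of $v_0$ (current target) of $t_j$ only inside $s_0$ when Tr-AppG fires, matching the Tr-Sub additive structure; and (iii) the fact, guaranteed by Lemma~\ref{lem:n-1:independence}, that $y_{i,0}$ does not occur in components other than the zeroth---this is exactly what lets $\theta_0$ and $\theta_j$ ($j\geq 1$) produce the same value on those sub-expressions, which is the key algebraic step allowing the two sides to coincide. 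Once these alignments are verified, the desired identity $v_i\cong p w_{L,i}+(1-p)w_{R,i}$ follows by linearity from the fact that each $F_i$'s defining equation has the form $p t_{L,i}+(1-p)t_{R,i}$.
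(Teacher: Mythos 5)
Your proposal follows essentially the same route as the paper's proof: invert the translation derivation of $F\,\seq{s}\,\seq{t}$ via \rn{Tr-NT}, \rn{Tr-App} and \rn{Tr-AppG}, apply the Substitution Lemma to the translated rule body, absorb the explicit substitution with \rn{Tr-Sub}, and match the resulting tuples componentwise using the defining equations of $F_0,\ldots,F_\ell$ and Lemma~\ref{lem:n-1:independence}, concluding by linearity. The delicate points you flag are exactly the ones the paper handles, in particular the matching of the ``fresh target'' component, which the paper discharges via the identity $\theta_{\seql{x}+1}u_{d,0}=\theta_0u_{d,\seql{z}+1}$ proved by induction on $u_d$.
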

\begin{proof}
    By the assumptions,
  we have:
  \[
  \begin{array}{l}
    \emptyTE;\seq{x}\pN s_i:\sty_i\tr
      (s_{i,0},\ldots,s_{i,\arity(\sty_i)+|\seq{x}|+1})
\hfill \mbox{ for each \(i\in\set{1,\ldots,|\seq{s}|}\)}\\
    \emptyTE; \seq{x}\pN t_i:\T \tr (t_{i,0},\ldots,t_{i,|\seq{x}|+1})
    \mbox{ for each \(i\in\set{1,\ldots,\seql{z}}\)}\\
    v'_0 = F_0(s_{1,0},\ldots,s_{1,\arity(\sty_1)},s_{1,\arity(\sty_1)+\seql{x}+1})
    \cdots \iftwocol\\\qquad\fi(s_{|\seq{s}|,0},\ldots,
    s_{|\seq{s}|,\arity(\sty_{|\seq{s}|})},s_{|\seq{s}|,\arity(\sty_{|\seq{s}|})+\seql{x}+1})\\
    v'_j = F_j(s_{1,1},\ldots,s_{1,\arity(\sty_1)},s_{1,\arity(\sty_1)+\seql{x}+1})\cdots \iftwocol\\\qquad\fi(s_{|\seq{s}|,1},\ldots,
    s_{|\seq{s}|,\arity(\sty_{|\seq{s}|})},s_{|\seq{s}|,\arity(\sty_{|\seq{s}|})+\seql{x}+1}) \\
    \hfill \mbox{ for each $j\in\set{1,\ldots,\seql{z}}$}\\
    v'_{\seql{z}+j} = \iftwocol\\\ \fi
F_0(s_{1,\arity(\sty_1)+j},s_{1,1},\ldots,s_{1,\arity(\sty_1)},s_{1,\arity(\sty_1)+\seql{x}+1})\cdots\\
\iftwocol\ \else \qquad\qquad\quad\fi
    (s_{\seql{s},\arity(\sty_1)+j},s_{|\seq{s}|,1},\ldots,
    s_{|\seq{s}|,\arity(\sty_{|\seq{s}|})},s_{|\seq{s}|,\arity(\sty_{|\seq{s}|})+\seql{x}+1}) \\\hfill
    \mbox{ for each $j\in\set{1,\ldots,|\seq{x}|}$}\\
  v'_{\seql{z}+\seql{x}+1} =
F_{0}(s_{1,\arity(\sty_1)+\seql{x}+1},s_{1,1},\ldots,
    s_{1,\arity(\sty_1)}, s_{1,\arity(\sty_1)+\seql{x}+1})\cdots \\\qquad\qquad\qquad
    (s_{|\seq{s}|,\arity(\sty_{|\seq{s}|})+\seql{x}+1},s_{\seql{s},1}\ldots,
    s_{|\seq{s}|,\arity(\sty_{|\seq{s}|})},s_{|\seq{s}|,\arity(\sty_{|\seq{s}|})+\seql{x}+1})\\
    v_0 \cong v'_0+v'_1\cdot t_{1,0}+\cdots +v'_{\seql{z}}\cdot t_{\seql{z},0}\\
    v_{\seql{z}+i}\cong v'_{\seql{z}+i}+v'_1\cdot t_{1,i}+\cdots +v'_{\seql{z}}\cdot t_{\seql{z},i}
\iftwocol\\\hfill\fi
    \mbox{ for each $i\in\set{1,\ldots,|\seq{x}|+1}$}\\
    \seq{y}\COL\seq{\sty}; \seq{z}\pN u_d:\T
      \tr (u_{d,0},\ldots,u_{d,|\seq{z}|+1}) \mbox{ for \(d\in\set{L,R}\)}
  \end{array}
  \]
  By applying 
  the substitution lemma (Lemma~\ref{lem:n-1:subj}) to the last condition,
  we obtain:
  \[
  \begin{array}{l}
    \emptyTE ; \seq{z},\seq{x} \pN [\seq{s}/\seq{y}]u_d:\T\tr
    (w'_{d,0},\ldots,w'_{d,\seql{z}+|\seq{x}|+1})
  \end{array}
  \]
  where
  \[
  \begin{array}{l}
    (w'_{d,0},\ldots,w'_{d,\seql{z}+|\seq{x}|+1})= 
   (\theta_0u_{d,0},
  \ldots, \theta_0u_{d,|\seq{z}|},
  \theta_1u_{d,0},\ldots,
  \theta_{\seql{x}}u_{d,0}, \theta_0u_{d,\seql{z}+1}) \\
  \theta_j = \theta_{1,j}\cdots\theta_{\seql{s},j}
  \mbox{ for $j\in\set{0,\ldots,|\seq{x}|}$}\\
  \theta_{i,0}=[s_{i,0}/y_{i,0},\ldots,s_{i,\arity(\sty_i)}/y_{i,\arity(\sty_i)},
\iftwocol\\\qquad \fi s_{i,\arity(\sty_i)+\seql{x}+1}/y_{i,\arity(\sty_i)+1}]
  \mbox{ for $i\in\set{1,\ldots,|\seq{s}|}$}\\
  \theta_{i,j} = [s_{i,\arity(\sty_i)+j}/y_{i,0},\ldots,s_{i,\arity(\sty_i)}/y_{i,\arity(\sty_i)},
\iftwocol\\\qquad \fi s_{i,\arity(\sty_i)+\seql{x}+1}/y_{i,\arity(\sty_i)+1}]\\\hfill
  \mbox{ for $i\in\set{1,\ldots,|\seq{s}|}, j\in\set{1,\ldots,|\seq{x}|}$}.\\
  \end{array}
\]
Then, we have \(v_j' \cong pw'_{L,j'}+(1-p)w'_{R,j'}\).
(Here, the only non-trivial case is for \(j=\seql{z}+\seql{x}+1\),
where we need to show that \(v_j' \cong p\theta_{\seql{x}+1}u_{L,0}+(1-p)\theta_{\seql{x}+1}u_{R,0}\)
is equivalent to \(pw'_{L,j'}+(1-p)w'_{R,j'}\); in this case,
by induction on the structure of \(u_d\),
it follows that \(\theta_{\seql{x}+1}u_{d,0}
= \theta_0u_{d,\seql{z}+1}\), which implies the required property)
Let \(\seq{w}_d\) be \((w_{d,0},\ldots,w_{d,|\seq{x}|+1})\), where:
\[
\begin{array}{l}
    w_{d,0} = w'_{d,0}+w'_{d,1}\cdot t_{1,0}+\cdots +w'_{d,{\seql{z}}}\cdot t_{\seql{z},0}\\
    w_{d,i}= w'_{d,{\seql{z}+i}}+w'_{d,1}\cdot t_{1,i}+\cdots +w'_{d,{\seql{z}}}\cdot t_{\seql{z},i} \\\hfill\mbox{ for \(i\in\set{1,\ldots,\seql{x}+1}\)}.
\end{array}
\]
Then, the required result is obtained by applying \rn{Tr-Sub} to
  \[
  \begin{array}{l}
    \emptyTE ; \seq{z},\seq{x} \pN [\seq{s}/\seq{y}]u_d:\T\tr
    (w'_{d,0},\ldots,w'_{d,\seql{z}+|\seq{x}|+1}).
  \end{array}
  \tag*{\qedhere}
  \]
\end{proof}

\begin{lemma}[Subject Reduction (for administrative steps)]
  \label{lem:tr-subj-ad1}
  If \(\emptyTE;\seq{x}\pN \Sub{\seq{z}}{\seq{s}}x_i:\T\tr (t_0,\ldots,t_{|\seq{x}|+1})\)
  with \(x_i\notin \set{\seq{z}}\), then 
  \(\emptyTE;\seq{x}\pN x_i:\T\tr (u_0,\ldots,u_{\seql{x}+1})\)
  for some \(u_j\ (j\in\set{0,\ldots,\seql{x}+1})\) such that \(t_j\cong u_j\) for each \(j\in\set{0,\ldots,|\seq{x}|+1}\).
\end{lemma}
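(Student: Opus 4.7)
The statement is essentially about unfolding one administrative reduction step (the ``garbage collection'' rule that discards an explicit substitution which binds no variable occurring in its body). The plan is to invert the rule \rn{Tr-Sub} that must have been used at the root of the translation derivation, use the inversion of \rn{Tr-GVar} on its second premise, and then verify by direct algebraic simplification that all the summation contributions from the explicit substitution vanish.

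First I would invert the translation derivation. Since the source term is of the form $\Sub{\seq{z}}{\seq{s}}x_i$, the last rule applied must be \rn{Tr-Sub}, so there exist translations
\[
\emptyTE;\seq{x}\pN s_j:\T\tr (s_{j,0},\ldots,s_{j,\seql{x}+1}) \qquad (j\in\set{1,\ldots,\seql{z}})
\]
and $\emptyTE;\seq{z},\seq{x}\pN x_i:\T\tr (w_0,\ldots,w_{\seql{z}+\seql{x}+1})$ such that
\[
t_0 = w_0 + \sum_{j=1}^{\seql{z}} w_j\cdot s_{j,0},\qquad
t_m = w_{\seql{z}+m} + \sum_{j=1}^{\seql{z}} w_j\cdot s_{j,m}\ \ (m\in\set{1,\ldots,\seql{x}+1}).
\]
Next I would invert the derivation of the translation of $x_i$. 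The only rule producing a judgment whose source is an order-0 variable is \rn{Tr-GVar}. Because $x_i\notin\set{\seq{z}}$, the variable $x_i$ occupies position $\seql{z}+i$ in the list $\seq{z},\seq{x}$, so the translation is
\[
(w_0,w_1,\ldots,w_{\seql{z}+\seql{x}+1}) \;=\; (0,\,\underbrace{0,\ldots,0}_{\seql{z}+i-1},\,1,\,\underbrace{0,\ldots,0}_{\seql{x}-i+1}).
\]
In particular $w_0 = w_1 = \cdots = w_{\seql{z}} = 0$, so every summation $\sum_j w_j\cdot s_{j,m}$ collapses to $0$, and $t_m = w_{\seql{z}+m}$.

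Finally, applying \rn{Tr-GVar} to $x_i$ under the environment $\seq{x}$ yields
\[
\emptyTE;\seq{x}\pN x_i:\T \tr (u_0,\ldots,u_{\seql{x}+1}) = (0,\underbrace{0,\ldots,0}_{i-1},1,\underbrace{0,\ldots,0}_{\seql{x}-i+1}),
\]
and comparing componentwise with the vector $(w_{\seql{z}},w_{\seql{z}+1},\ldots,w_{\seql{z}+\seql{x}+1})$ computed above shows $t_m = u_m$ syntactically, hence $t_m\cong u_m$. Since every step is either a straightforward inversion of a translation rule or a cancellation using the fact that the ``$z$-components'' of the translation of $x_i$ are all zero, there is no real obstacle: the only mild care needed is in bookkeeping the index offsets between the two environments $\seq{z},\seq{x}$ and $\seq{x}$, which is exactly where the hypothesis $x_i\notin\set{\seq{z}}$ is used.
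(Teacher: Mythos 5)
Your proof is correct and follows essentially the same route as the paper's: invert \rn{Tr-Sub} and \rn{Tr-GVar}, observe that the components of the translation of $x_i$ corresponding to the bound variables $\seq{z}$ (and the current-target component) are all $0$ so the summations vanish, and match the surviving components against \rn{Tr-GVar} applied under $\seq{x}$. The only blemish is the final comparison vector, whose first entry should be $w_0$ rather than $w_{\seql{z}}$ (the $m=0$ output of \rn{Tr-Sub} uses $w_0$, not $w_{\seql{z}}$), but since both are $0$ the conclusion is unaffected.
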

\begin{proof}
  By the assumption  \(\emptyTE;\seq{x}\pN \Sub{\seq{z}}{\seq{s}}x_i:\T\tr (t_0,\ldots,t_{|\seq{x}|+1})\),  we have:
  \[
\begin{array}{l}
  \emptyTE;\seq{z},\seq{x}\pN x_i:\T\tr ({0}^{i+\seql{z}+1},1,{0}^{|\seq{x}|-i+1})\\
  \emptyTE;\seq{x}\pN s_j:\T\tr (s_{j,0},\ldots,s_{j,|\seq{x}|+1}) \mbox{ (for each $j\in\set{1,\ldots,\seql{s}}$)}\\
  t_i=1+\Sigma_{j=1}^{\seql{s}} 0\cdot s_{j,i} \cong 1\qquad t_{i'}\cong 0\mbox{ for \(i'\neq i\)}
\end{array}
\]
Since 
\(\emptyTE;\seq{x}\pN x_i:\T\tr (0^i,1,0^{|\seq{x}|-i+1})\),
we have the required condition for: \(u_i=1\) and \(u_j=0\) for \(j\neq i\).
\end{proof}

\begin{lemma}[Subject Reduction (for administrative steps (ii))]
  \label{lem:tr-subj-ad2}
  If \(\emptyTE;\seq{x}\pN \Sub{\seq{z}}{\seq{s}}z_i:\T\tr (t_0,\ldots,t_{|\seq{x}|+1})\), then
  \(\emptyTE;\seq{x}\pN s_i:\T\tr (u_0,\ldots,u_{\seql{x}+1})\)
  for some \(u_j\ (j\in\set{0,\ldots,\seql{x}+1})\) such that \(t_j\cong u_j\) for each \(j\in\set{0,\ldots,|\seq{x}|+1}\).
\end{lemma}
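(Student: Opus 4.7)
The plan is a direct unpacking of the translation rules \rn{Tr-Sub} and \rn{Tr-GVar}, closely paralleling the proof of Lemma~\ref{lem:tr-subj-ad1}. Because $z_i$ is one of the bound order-0 variables being substituted away, the computation collapses even more cleanly than in the preceding case: in fact $t_j$ will turn out to equal $s_{i,j}$ on the nose, not merely $\cong$, so no real algebraic manipulation is needed.

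First, I would invert the given derivation. The only applicable rule is \rn{Tr-Sub}, whose premises yield, on the one hand,
\[
\emptyTE;\seq{z},\seq{x}\pN z_i:\T\tr (r_0,r_1,\ldots,r_{\seql{z}+\seql{x}+1}),
\]
and, for each $j\in\set{1,\ldots,\seql{z}}$, a derivation
$\emptyTE;\seq{x}\pN s_j:\T\tr (s_{j,0},\ldots,s_{j,\seql{x}+1})$,
together with the defining equations
\[
t_0 \;=\; r_0+\sum_{j=1}^{\seql{z}} r_j\cdot s_{j,0},\qquad
t_m \;=\; r_{\seql{z}+m}+\sum_{j=1}^{\seql{z}} r_j\cdot s_{j,m}
\]
for each $m\in\set{1,\ldots,\seql{x}+1}$.

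Second, I would compute the tuple $(r_0,\ldots,r_{\seql{z}+\seql{x}+1})$ explicitly. Since $z_i$ is an order-0 variable occurring at position $i\in\set{1,\ldots,\seql{z}}$ of the environment $\seq{z},\seq{x}$, rule \rn{Tr-GVar} forces $r_i=1$ and $r_{i'}=0$ for every other index. In particular $r_0=0$ and $r_{\seql{z}+m}=0$ for each $m\in\set{1,\ldots,\seql{x}+1}$, so each equation above collapses to $t_m = 1\cdot s_{i,m} = s_{i,m}$.

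Third, setting $u_m := s_{i,m}$ immediately yields $\emptyTE;\seq{x}\pN s_i:\T\tr (u_0,\ldots,u_{\seql{x}+1})$, directly from the premise of \rn{Tr-Sub} concerning $s_i$, and $t_m\cong u_m$ holds trivially because $t_m = u_m$. The only potential pitfall is purely bookkeeping: \rn{Tr-Sub} drops positions $1,\ldots,\seql{z}$ of $z_i$'s translation (those correspond exactly to the bound variables $\seq{z}$ being eliminated), so one must keep the index shift straight when matching $r_{\seql{z}+m}$ with the $m$-th surviving component. This is the same convention already used in Lemma~\ref{lem:tr-subj-ad1}, so no genuine difficulty arises.
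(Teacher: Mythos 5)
Your proposal is correct and follows essentially the same route as the paper's proof: invert \rn{Tr-Sub}, compute the translation of $z_i$ via \rn{Tr-GVar} as the tuple that is $1$ at position $i$ and $0$ elsewhere, and observe that every component of the output collapses to $s_{i,m}$, so $u_m := s_{i,m}$ works. The only nitpick is your claim that $t_m$ equals $s_{i,m}$ \emph{on the nose}: syntactically $t_m$ is still $0+\sum_j r_j\cdot s_{j,m}$ with the vanishing summands present, so strictly one only gets $t_m\cong s_{i,m}$ — which is exactly what the lemma asks for, so nothing is lost.
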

\begin{proof}
  By the assumption  \(\emptyTE;\seq{x}\pN \Sub{\seq{z}}{\seq{s}}z_i:\T\tr (t_0,\ldots,t_{|\seq{x}|+1})\),  we have:
  \[
\begin{array}{l}
  \emptyTE;\seq{z},\seq{x}\pN z_i:\T\tr (0^{i},1,{0}^{\seql{z}-i+|\seq{x}|+1})\\
  \emptyTE;\seq{x}\pN s:\T\tr (s_0,\ldots,s_{|\seq{x}|+1})\\
  t_j\cong0+1\cdot s_j\cong s_j \mbox{ for each \(j\in\set{0,\ldots,|\seq{x}|+1}\)}.
\end{array}
\]
Thus, the result holds for \(u_j = s_j\).
\end{proof}

We are now ready to prove Theorem~\ref{prop:order-k-1-fixpoint}, restricted to
recursion-free \pHORSs{} (the definition of recursion-free \pHORSs{} is found
in Section~\ref{sec:proof-tr-n-1}).
\begin{lemma}
\label{lem:tr-for-recfree}
  Let \(\GRAM=(\NONTERMS,\RULES,S)\) be a recursion-free \pHORS{},
  and \(\rho\) be the least solution of \(\EQref{\GRAM}\).
  Then, \(\Prob(\GRAM,S\,\Te) = \rho(S_1)\).
\end{lemma}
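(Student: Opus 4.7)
The plan is to reduce the claim, via Lemma~\ref{lem:red-vs-redes}, to \(\rho(S_1) = \ProbES(\GRAM, S\,x, x)\) under the extended (explicit-substitution) reduction, and then to derive this as a special case of a strengthened invariant proved by induction on \(\sharp(t)\), the length of a longest reduction sequence from an extended term \(t\). This quantity is finite because the original recursion-free \pHORS{} strongly normalizes (so the probabilistic reductions terminate via Lemma~\ref{lem:red-vs-esred}), and the administrative steps form a strongly normalizing subsystem, each one either removing an explicit substitution or reducing to one of its bindings.

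The invariant I would prove is the following: for every extended term \(t\) with \(\emptyTE; x_1,\ldots,x_k \pN t:\T \tr (t_0,\ldots,t_{k+1})\), (i) \(\seme{t_i}{\rho} = \ProbES(\GRAM, t, x_i)\) for each \(i \in \set{1,\ldots,k}\), and (ii) \(\seme{t_0}{\rho} = \seme{t_{k+1}}{\rho} = 0\). For the base case, the only normal forms of type \(\T\) are \(x_j\), \(\Omega\), and (nested) explicit substitutions wrapping \(\Omega\); in each case both clauses are read off directly from \textsc{Tr-GVar}, \textsc{Tr-Omega}, and \textsc{Tr-Sub}. For the inductive step, if \(t\) admits an administrative reduction \(t \newredp{\epsilon,1}{} t'\), then Lemma~\ref{lem:tr-subj-ad1} or Lemma~\ref{lem:tr-subj-ad2} (lifted through enclosing evaluation-context wrappers using \textsc{Tr-Sub}, which by a short calculation preserves the relevant \(\cong_{\EQref{\GRAM}}\)-equivalences) produces a translation of \(t'\) whose components are \(\cong_{\EQref{\GRAM}}\)-equivalent to those of \(t\); since administrative steps carry probability \(1\), we have \(\ProbES(\GRAM, t, x_i) = \ProbES(\GRAM, t', x_i)\) and the induction hypothesis on \(t'\) closes the case. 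If instead \(t \newredp{L,p}{} t_L\) and \(t \newredp{R,1-p}{} t_R\), Lemma~\ref{lem:tr-subj} (again lifted through the evaluation context) yields translations \((w_{L,j})_j\) and \((w_{R,j})_j\) of \(t_L\) and \(t_R\) with \(t_j \cong p\cdot w_{L,j} + (1-p)\cdot w_{R,j}\); combining with the definition \(\ProbES(\GRAM, t, x_i) = p\,\ProbES(\GRAM, t_L, x_i) + (1-p)\,\ProbES(\GRAM, t_R, x_i)\) and the induction hypothesis on \(t_L\) and \(t_R\) gives the identity for \(t\).

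The main obstacle I anticipate is justifying clause (ii): the equations for the ``current-target'' and ``fresh-target'' components such as \(F_0\) (introduced by \textsc{Tr-Rule}) are not a priori trivial, so it is not obvious that their values in the least solution \(\rho\) vanish. I would handle this by carrying clause (ii) through the same induction, exploiting that normal-form translations satisfy \(t_0 = t_{k+1} = 0\) literally, and that the subject-reduction lemmas preserve this zeroness up to \(\cong_{\EQref{\GRAM}}\)---a check I expect to require a careful unwinding of the output tuples produced by \textsc{Tr-App}, \textsc{Tr-AppG}, and \textsc{Tr-Sub}. Once the invariant is established, I specialise it to \(t = S\,x\): unrolling the translation of \(S\,x\) via \textsc{Tr-NT}, \textsc{Tr-GVar}, and \textsc{Tr-AppG} gives the tuple \((S_0,\; S_0{+}S_1,\; S_0)\); clause (ii) then forces \(\rho(S_0) = 0\), while clause (i) combined with Lemma~\ref{lem:red-vs-redes} gives \(\Prob(\GRAM, S\,\Te) = \ProbES(\GRAM, S\,x, x) = \seme{S_0+S_1}{\rho} = \rho(S_1)\), as required.
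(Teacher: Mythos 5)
Your proof follows essentially the same route as the paper's: reduce to the explicit-substitution semantics via Lemma~\ref{lem:red-vs-redes}, then induct on the length of a longest reduction sequence, closing the administrative and probabilistic cases with Lemmas~\ref{lem:tr-subj-ad1}, \ref{lem:tr-subj-ad2} and \ref{lem:tr-subj} respectively. Your strengthened invariant is sound and in fact addresses two points the paper's write-up leaves implicit: since the translation of \(S\,x\) is \((S_0,\;S_0{+}S_1,\;S_0)\), the paper's invariant alone only yields \(\Prob(\GRAM,S\,\Te)=\rho(S_0)+\rho(S_1)\), so your clause (ii) supplying \(\rho(S_0)=0\) is exactly what is needed to reach \(\rho(S_1)\), and your explicit lifting of the subject-reduction lemmas through the evaluation context via \textsc{Tr-Sub} fills the gap between the statement of Lemma~\ref{lem:tr-subj} (phrased for the bare redex \(F\,\seq{s}\,\seq{t}\)) and its use on \(E[F\,\seq{s}\,\seq{t}]\).
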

\begin{proof}
  By Lemma~\ref{lem:red-vs-redes},
  it suffices to show that
  \(\emptyTE; x \pN t\tr (t_0,t_1,t_2)\) implies 
  \(\ProbES(\GRAM,t,x)\cong t_1\).
  This follows by induction on \(\flat(t)\), the length of the longest reduction
  sequence from \(t\) by \(\newredp{d,p}{}\);
  note that \(\flat(t)\) is well defined because
  \(t\) is finitely branching and
  strongly normalizing with respect to \(\newredp{d,p}{}\);
  the strong normalization follows from Lemma~\ref{lem:red-vs-esred} and
  the fact
  that there can be no infinite sequence of consecutive administrative reductions
  (in fact, each administrative reduction consumes one explicit substitution).

  Suppose \(\flat(t)=0\). Then, \(t\) is either \(x\) (in which case, \(t_1=1\))
  or \(\Omega\) (in which case, \(t_1=0\)). Thus, the result follows immediately.

  If \(\flat(t)>0\), we perform a case analysis on a reduction of \(t\).
  If \(t\newredp{\epsilon,1}t'\), then by Lemmas~\ref{lem:tr-subj-ad1}
  and \ref{lem:tr-subj-ad2}, 
  \(\emptyTE; x \pN t\tr (t'_0,t'_1,t'_2)\) and \(t_1\cong t'_1\) for some \((t'_0,t'_1,t'_2)\).
  By the induction hypothesis, 
  \(\ProbES(\GRAM,t',x)\cong t'_1\). Therefore, we have
  \(\ProbES(\GRAM,t,x)=\ProbES(\GRAM,t',x)\cong t'_1\cong t_1\) as required.

  If \(t\newredp{d,p}t'\) for \(d\in\set{L,R}\), then
  \(t\) must be of the form \(E[F\,\seq{s}\,\seq{t}]\),
  \(\RULES(F)=\lambda \seq{y}.\lambda \seq{z}.u_L\C{p}u_R\), 
  \(t\newredp{L,p} t_L\), and   \(t\newredp{R,1-p} t_R\) with
  \(t_L=E[\Sub{\seq{z}}{\seq{t}}[\seq{s}/\seq{y}]u_L]\)
  and 
  \(t_R= E[\Sub{\seq{z}}{\seq{t}}[\seq{s}/\seq{y}]u_R]\).
  By Lemma~\ref{lem:tr-subj}, there exist
  \((t_{L,0},t_{L,1}, t_{L,2})\) and   \((t_{R,0},t_{R,1}, t_{R,2})\) such that
  \(\emptyTE;x\pN t_L\tr (t_{L,0},t_{L,1}, t_{L,2})\)
  and 
  \(\emptyTE;x\pN t_R\tr (t_{R,0},t_{R,1}, t_{R,2})\)
  with \(t_1 \cong pt_{L,1}+(1-p)t_{R,1}\).
  Thus, we have
  \(\ProbES(\GRAM,t,x)=p\ProbES(\GRAM,t_L,x)+(1-p)\ProbES(\GRAM,t_R,x)
  \cong pt_{L,1}+(1-p)t_{R,1}\cong t_1\) as required.
\end{proof}

We are now ready to prove Theorem~\ref{prop:order-k-1-fixpoint}.

\ifacm
\begin{proof}[Proof of Theorem~\ref{prop:order-k-1-fixpoint}]
\else
\begin{proofn}{Proof of Theorem~\ref{prop:order-k-1-fixpoint}}
\fi
  Consider the finite approximation \(\GRAM^{(k)}\) (defined in 
  Section~\ref{sec:proof-tr-n-1}).
  Then, we have
  \[
\begin{array}{l}
  \Prob(\GRAM, S\,\Te) = \bigsqcup_k \Prob(\GRAM^{(k)}, S^{(k)}\,\Te) =
  \bigsqcup_k \LFP(\F_{\GRAM^{(k)}})(S_1^{(k)}) \\
  = 
\bigsqcup_k \F_{\GRAM}^k(\bot) (S_1) = \LFP(\F_{\GRAM})(S_1)=\rho_{\EQref{\GRAM}}(S_1)
\end{array}
\]
as required.  Here \(\F_\GRAM\) is the functional associated with fixpoint equations
\(\EQref{\GRAM}\), as defined in Section~\ref{sec:ho-fixpoint},
and we can use essentially the same reasoning as in the proofs of
Lemma~\ref{lem:approximation} and Theorem~\ref{prop:order-k-fixpoint}.
\ifacm
\end{proof}
\else
\end{proofn}
\fi

\end{document}